\documentclass[10pt,a4paper]{article}
\usepackage[a4paper,textwidth=144mm,top=25mm,bottom=25mm]{geometry}

\usepackage[utf8]{inputenc}

\usepackage{graphicx}
\usepackage{multicol,multirow}
\usepackage{amsmath,amssymb,amsfonts}
\usepackage{amsthm}
\usepackage{rotating}
\usepackage{appendix}
\usepackage{ifpdf}
\usepackage[T1]{fontenc}
\usepackage{newtxtext}
\usepackage{newtxmath}
\usepackage{textcomp}
\usepackage{xcolor}
\usepackage{bm}
\usepackage{booktabs}
\usepackage{enumitem}
\usepackage{array}
\usepackage{tabularx}
\usepackage{caption}
\usepackage{float}
\usepackage{listings}
\usepackage[style=apa,backend=biber,natbib=true]{biblatex}
\usepackage[colorlinks,allcolors=blue]{hyperref}
\usepackage{pdflscape}

\theoremstyle{definition}
\newtheorem{theorem}{Theorem}
\newtheorem{definition}{Definition}
\numberwithin{equation}{section}

\newcommand{\onlineresource}[1]{%
  \clearpage
  \setcounter{section}{0}%
  \setcounter{table}{0}%
  \setcounter{figure}{0}%
  \renewcommand{\thesection}{S\arabic{section}}%
  \renewcommand{\thetable}{S\arabic{table}}%
  \renewcommand{\thefigure}{S\arabic{figure}}%
  \section*{\centering Online Resource #1}%
  \addcontentsline{toc}{section}{Online Resource #1}%
  \markboth{Online Resource #1}{Online Resource #1}%
}

\title{Efficient Bayesian Estimation of Dynamic Structural Equation Models via State Space Marginalization}
\author{Øystein Sørensen\\[0.4em]
  \small\itshape Department of Psychology, University of Oslo, Oslo, Norway}
\date{}

\begin{document}

\maketitle

\begin{abstract}
Dynamic structural equation models (DSEMs) combine time-series modeling of within-person processes with hierarchical modeling of between-person differences and differences between timepoints, and have become very popular for the analysis of intensive longitudinal data in the social sciences. An important computational bottleneck has, however, still not been resolved: whenever the underlying process is assumed to be latent and measured by one or more indicators per timepoint, currently published algorithms rely on inefficient brute-force Markov chain Monte Carlo sampling which scales poorly as the number of timepoints and participants increases and results in highly correlated samples. The main result of this paper shows that the within-level part of any DSEM can be reformulated as a linear Gaussian state space model. Consequently, the latent states can be analytically marginalized using a Kalman filter, allowing for highly efficient estimation via Hamiltonian Monte Carlo. This makes estimation of DSEMs computationally tractable for much larger datasets---both in terms of timepoints and participants---than what has been previously possible. We demonstrate the proposed algorithm in simulation experiments and with an application example, showing it can be orders of magnitude more efficient than standard Metropolis-within-Gibbs approaches.
\end{abstract}

\noindent\textbf{Keywords:} Bayesian Estimation, Dynamic Structural Equation Modeling, Hamiltonian Monte Carlo, Kalman Filter, State Space Models

\vspace{0.5cm}

\section{Introduction}

Intensive longitudinal data, characterized by a large number of observations per participant observed over a relatively short time span \citep{hamaker_at_2018}, have become very common in the social sciences over the last decade. In a seminal paper, \citet{asparouhov_dynamic_2018} introduced dynamic structural equation modeling (DSEM), a comprehensive framework for analyzing intensive longitudinal data which combines multilevel modeling, time-series modeling, structural equation modeling (SEM), and time-varying effects modeling. \citet{asparouhov_dynamic_2018} also proposed a Metropolis-within-Gibbs algorithm, which is implemented in the proprietary \textsc{Mplus} software \citep{mplus}.

Although several methodological papers demonstrate how to use DSEM or related frameworks with latent variables \citep{holtmannModelingWithinlevelLatent2025,andriamiaranaRequirementsNonlinearDynamic2023,falehDynamicLatentClass2026,kelavaForecastingIntraindividualChanges2022,oh_incorporating_2023}, most applications of DSEM published to date use sum scores or raw observations, leaving the SEM part of DSEM underutilized despite the well-known advantages of working with latent variable formulations \citep{mcneishThinkingTwiceSum2020}. Several factors may contribute to this. First of all, these models are demanding to specify, and until recently applied researchers had few tutorials or ready-made implementations to draw on. That situation is improving, through tutorial treatments accompanied by model code \citep{holtmannModelingWithinlevelLatent2025} and through packages such as the \textsc{R} \citep{rcoreteamLanguageEnvironmentStatistical2025} package \textsc{mlts} \citep{koslowskiMltsMultilevelLatent2024}, which fits multilevel latent time-series models, including DSEM. A second factor, and the one this paper addresses, is that the Markov chain Monte Carlo (MCMC) algorithm of \citet{asparouhov_dynamic_2018} suffers from a computational bottleneck when latent traits are included. The two factors reinforce one another: the sample sizes reported as necessary for these models are precisely those at which the bottleneck becomes severe, with \citet{andriamiaranaRequirementsNonlinearDynamic2023} requiring at least $N=50$ participants and $T=25$ timepoints,\footnote{In the rest of the paper, whenever the symbols $N$ and $T$ are used, they denote the number of participants and the number of timepoints, respectively.} and \citet{holtmannModelingWithinlevelLatent2025} recommending at least $100$ timepoints for at least $100$ persons in random-effects factor models. The computational challenge is well-known in the literature: in the context of psychometric network estimation, \citet[p. 212]{epskamp_psychometric_2020} states that hierarchical modeling of multivariate latent traits from time series data is ``computationally very expensive'', and authors have even suggested abandoning Bayesian methods altogether and instead rely on maximum likelihood estimation \citep{sakalauskas_technique_2024}.

The number of parameters to sample in a single iteration of \citet{asparouhov_dynamic_2018}'s MCMC algorithm is proportional to $N \cdot T$ in the presence of latent traits but only $N + T$ in the absence of latent traits. \citet[p. 386]{asparouhov_dynamic_2018} state that latent traits in the within-level model ``cannot be generated simultaneously in an efficient manner'', i.e., that the latent trait for each participant at each timepoint needs to be sampled independently. However, as we show, the need to sample these latent traits can be bypassed entirely by using a Kalman filter \citep{kalman_new_1960} to integrate them out analytically, replacing stochastic sampling with routine linear algebra. Forward-filtering backward-sampling \citep{carterGibbsSamplingState1994, fruhwirth-schnatterDataAugmentationDynamic1994} has been used in this literature, although for a different purpose: \citet{kelavaForecastingIntraindividualChanges2022} integrate such an algorithm with the Gibbs sampler in \textsc{JAGS} in order to obtain forecast distributions for within-level latent factors alongside the parameter draws. The algorithm of \citet{asparouhov_dynamic_2018} could in principle be adapted to use it differently, as a blocked update generating each participant's latent trajectory simultaneously, but embedding such an update in a Gibbs sampler still suffers from severe autocorrelation between the trajectories and the hierarchical participant-specific and timepoint-specific parameters. Alternatively, integrating the traits out to create a collapsed Gibbs sampler \citep{liuCollapsedGibbsSampler1994} avoids this autocorrelation, but it would require re-running the Kalman filter each time a parameter is sampled, and hence scale terribly. The same computational concern applies to classical Metropolis-Hastings algorithms.

Hamiltonian Monte Carlo \citep{Neal2011}, and particularly the No-U-Turn Sampler (NUTS) \citep{hoffman_no_2014}, is however ideally suited to marginalizing over latent variables using a Kalman filter. Since NUTS ensures high acceptance ratios even when proposing thousands of parameters in a single accept-reject step, the Kalman filter only needs to be run a single time for each full iteration. This reduces the number of parameters to be sampled in each iteration to the order of $N+T$, also in the presence of latent traits. A similar computational algorithm using the Kalman-Bucy filter \citep{kalman_new_1961} has been proposed in the context of continuous-time structural equation models \citep{Driver2018}. The \textsc{dynr} package \citep{ouWhatsDynrPackage2019} also uses the Kalman filter for estimating linear and nonlinear dynamic models, but without hierarchical modeling. More broadly, latent variable formulations of dynamic factor models have a long history in the statistical and biostatistical literature, for single multivariate time series \citep{cuiGeneralizedDynamicFactor2014} and, more recently, for samples of individuals \citep{liuMixedEffectsStateSpaceModels2011, tranModelingLocalDependence2021, abbottContinuousTimeDynamicFactor2025}, with the multilevel dynamic factor model of \citet{songAnalyzingMultipleMultivariate2014} closest in spirit to the models considered here. What distinguishes the DSEM setting is that a measurement model is combined with dynamics that vary randomly across participants, which is what makes the number of latent variables grow as $\mathcal{O}(N \cdot T)$ and hence makes the marginalization we propose worth performing. We note that, contrary to what is frequently stated in the applied literature, the DSEM algorithm implemented in \textsc{Mplus} does not use a Kalman filter: missing observations and unequally spaced intervals are handled by data augmentation, with the missing values forming an additional block of the Gibbs sampler \citep[p.~365]{asparouhov_dynamic_2018}.

The paper proceeds as follows. In Section \ref{sec:Background}, we first give necessary background on DSEM, and then on linear Gaussian state space models (LG-SSMs) and the Kalman filter. In Section \ref{sec:SSM_DSEM}, we present our theorem which shows that the within-level part of a DSEM can be equivalently formulated as an LG-SSM and then show how this can be used to perform efficient Bayesian estimation using marginalization. In Section \ref{sec:Simulations}, we present the results of simulation experiments for a wide range DSEMs, which demonstrate that the proposed algorithm in some cases is orders of magnitude more efficient than Metropolis-within-Gibbs and in other cases is comparable. In Section \ref{sec:Application} we present an application example, using a dataset from \citet{rowlandMindfulnessAffectNetworkDensity2020} measuring positive and negative affect six times per day over a period of 40 days in 125 individuals. We discuss the results and possible extensions of the framework in Section \ref{sec:Discussion}.

\section{Background}
\label{sec:Background}

\subsection{Dynamic Structural Equation Models}

A $U$-dimensional response vector $\bm{y}_{it}$ is assumed measured in $N$ participants ($i=1,2,\dots,N$) at $T$ timepoints ($t=1,2,\dots,T$). Unequally spaced timepoints or an unequal number of timepoints per participant can be dealt with by introducing missing observations.\footnote{Truly continuous-time modeling \citep{Driver2018,oudContinuousTimeState2000} is in general not feasible with DSEM, however, since this would require introducing an extremely dense grid with a very high degree of missingness.} Each observation is assumed decomposable into a part $\bm{y}_{3,t}$ which captures systematic variation between timepoints, a part $\bm{y}_{2,i}$ which captures systematic variation between participants, and a part $\bm{y}_{1,it}$ which captures all remaining variability. Consequently, we can write
\begin{equation}
    \label{eq:ResponseDecomposition}
    \bm{y}_{it} = \bm{y}_{1,it} + \bm{y}_{2,i} + \bm{y}_{3,t}.
\end{equation}
Next, we assume that each term on the right-hand side of \eqref{eq:ResponseDecomposition} is an indirect reflective measurement of some latent variable. We use the term indirect because all terms on the right-hand side of \eqref{eq:ResponseDecomposition} are latent variables; only $\bm{y}_{it}$ is directly observed.

The $V_{3}$-dimensional latent variables $\bm{\eta}_{3,t}$ vary between timepoints and are related to $\bm{y}_{3,t}$ through the between-timepoints model
\begin{equation}
    \begin{aligned}
        \bm{y}_{3,t}    & = \bm{\nu}_{3} + \bm{\Lambda}_{3} \bm{\eta}_{3,t} + \bm{K}_{3} \bm{X}_{3,t} + \bm{\epsilon}_{3,t} \\
        \bm{\eta}_{3,t} & = \bm{\alpha}_{3} + \bm{B}_{3}\bm{\eta}_{3,t} + \bm{\Gamma}_{3}\bm{X}_{3,t} + \bm{\xi}_{3,t}.
    \end{aligned}
    \label{eq:BetweenTimepointModel}
\end{equation}
In \eqref{eq:BetweenTimepointModel}, $\bm{\nu}_{3}$ and $\bm{\alpha}_{3}$ are intercept terms, $\bm{\Lambda}_{3}$ is a loading matrix, $\bm{B}_{3}$ is a strictly lower triangular matrix for regression between latent variables, $\bm{X}_{3,t}$ is a vector of covariates varying between timepoints, and $\bm{K}_{3}$ and $\bm{\Gamma}_{3}$ are matrices with regression coefficients. The vector of residuals $\bm{\epsilon}_{3,t}$ is assumed normally distributed around zero with covariance matrix $\bm{\Sigma}_{3}$, $\bm{\epsilon}_{3,t} \sim \mathcal{N}(\bm{0}, \bm{\Sigma}_{3})$. The vector of disturbances $\bm{\xi}_{3,t}$ is also normally distributed around zero, $\bm{\xi}_{3,t} \sim \mathcal{N}(\bm{0}, \bm{\Psi}_{3})$. The assumption that $\bm{B}_{3}$ is strictly lower triangular ensures that relationships among latent variables are recursive (see also \citet[p. 385]{asparouhov_dynamic_2018}). The same rationale applies when we assume below that $\bm{B}_{2}$ in \eqref{eq:BetweenIndividualModel} and $\bm{B}_{1,0it}$ and $\bm{R}_{0it}$ in \eqref{eq:WithinLevelModel} are strictly lower triangular.

Next, the $V_{2}$-dimensional latent variables $\bm{\eta}_{2,i}$ vary between participants and are related to $\bm{y}_{2,i}$ through the between-participants model
\begin{equation}
    \begin{aligned}
        \bm{y}_{2,i}    & = \bm{\nu}_{2} + \bm{\Lambda}_{2} \bm{\eta}_{2,i} + \bm{K}_{2}\bm{X}_{2,i} + \bm{\epsilon}_{2,i} \\
        \bm{\eta}_{2,i} & = \bm{\alpha}_{2} + \bm{B}_{2} \bm{\eta}_{2,i} + \bm{\Gamma}_{2} \bm{X}_{2,i} + \bm{\xi}_{2,i},
    \end{aligned}
    \label{eq:BetweenIndividualModel}
\end{equation}
where the components have similar interpretations as in \eqref{eq:BetweenTimepointModel} except that the covariates $\bm{X}_{2,i}$ and noise terms $\bm{\epsilon}_{2,i} \sim \mathcal{N}(\bm{0}, \bm{\Sigma}_{2})$ and $\bm{\xi}_{2,i} \sim \mathcal{N}(\bm{0}, \bm{\Psi}_{2})$ now vary between individuals and not between timepoints.

The within-level model relates the $V_{1}$-dimensional latent variables $\bm{\eta}_{1,it}$ varying both between participants and timepoints to $\bm{y}_{1,it}$ through a timeseries model with lag $L \geq 0$,
\begin{equation}
    \begin{aligned}
        \bm{y}_{1,it}    & = \bm{\nu}_{1,it} + \bm{\Lambda}_{1,it}(L) \bm{\eta}_{1,it} + \bm{R}_{it}(L) \bm{y}_{1,it} + \bm{K}_{1,it} \bm{X}_{1,it} + \bm{\epsilon}_{1,it} \\
        \bm{\eta}_{1,it} & = \bm{\alpha}_{1,it} + \bm{B}_{1,it}(L) \bm{\eta}_{1,it} + \bm{Q}_{it}(L) \bm{y}_{1,it} + \bm{\Gamma}_{1,it} \bm{X}_{1,it} + \bm{\xi}_{1,it},
    \end{aligned}
    \label{eq:WithinLevelModel}
\end{equation}
where $\bm{X}_{1,it}$ is a vector of covariates varying between timepoints and individuals, and $\bm{K}_{1,it}$ and $\bm{\Gamma}_{1,it}$ are the corresponding matrices with regression coefficients, and $\bm{\nu}_{1,it}$ and $\bm{\alpha}_{1,it}$ are intercepts. In \eqref{eq:WithinLevelModel} we have
\begin{align*}
     & \bm{\Lambda}_{1,it}(L) = \sum_{l=0}^{L} \bm{\Lambda}_{1,lit} L^{l}, \quad \bm{R}_{it}(L) = \sum_{l=0}^{L} \bm{R}_{lit} L^{l}, \quad  \bm{B}_{1,it}(L) = \sum_{l=0}^{L} \bm{B}_{1,lit} L^{l}, \quad \bm{Q}_{it}(L) = \sum_{l=0}^{L} \bm{Q}_{lit} L^{l},
\end{align*}
where $L$ is a lag operator such that $L^{k} \bm{x}_{t} = \bm{x}_{t-k}$. Inside these definitions, $\bm{\Lambda}_{1,lit}$ is the factor loading matrix relating the lag-$l$ latent variables to measurements, $\bm{R}_{lit}$ and $\bm{B}_{1,lit}$ for $l>0$ are autoregression matrices, $\bm{R}_{0it}$ and $\bm{B}_{1,0it}$ are strictly lower triangular matrices for regression between components of $\bm{y}_{1,it}$ and $\bm{\eta}_{1,it}$, respectively, and the matrix $\bm{Q}_{lit}$ relates the lag-$l$ measurements to the latent variables. We assume that the measurement noise $\bm{\epsilon}_{1,it}$ and process noise $\bm{\xi}_{1,it}$ are normally distributed, $\bm{\epsilon}_{1,it} \sim \mathcal{N}(\bm{0}, \bm{\Sigma}_{1,it})$ and $\bm{\xi}_{1,it} \sim \mathcal{N}(\bm{0}, \bm{\Psi}_{1,it})$. Note that \citet[p. 361]{asparouhov_dynamic_2018} use lagged terms also for $\bm{X}_{1,it}$, but as they point out their model can equivalently be written as in \eqref{eq:WithinLevelModel}.

As the notation in \eqref{eq:WithinLevelModel} suggests, the parameters $\bm{\nu}_{1,it}$, $\bm{\Lambda}_{1,it}(L)$, $\bm{R}_{it}(L)$, $\bm{K}_{1,it}$, $\bm{\alpha}_{1,it}$, $\bm{B}_{1,it}(L)$, $\bm{Q}_{it}(L)$, and $\bm{\Gamma}_{1,it}$, as well as the covariance matrices $\bm{\Sigma}_{1,it}$ and $\bm{\Psi}_{1,it}$, are allowed to vary both between timepoints and participants. These parameters are additively composed from the elements of $\bm{\eta}_{2,i}$ and $\bm{\eta}_{3,t}$, and potentially transformed by some function $f(\cdot)$. For example, if the parameter $s$ is the square root of an element on the diagonal of $\bm{\Sigma}_{1,it}$, i.e., an individual-specific residual standard deviation, it could be represented as $s = \exp(\eta_{2,iq} + \eta_{3,tr})$, where $\eta_{2,iq}$ and $\eta_{3,tr}$ are the $q$th and $r$th elements of $\bm{\eta}_{2,i}$ and $\bm{\eta}_{3,t}$, respectively, for some indices $q$ and $r$, and the exponential transformation ensures that the standard deviation is positive. In most applications, additional constraints will be needed to identify the model. For example, it will typically not be possible to estimate both the measurement-level intercepts ($\bm{\nu}_{3}$, $\bm{\nu}_{2}$, and $\bm{\nu}_{1,it}$) and the corresponding latent-variable intercepts ($\bm{\alpha}_{3}$, $\bm{\alpha}_{2}$, and $\bm{\alpha}_{1,it}$), and one might fix one factor loading per item to $1$ in the loading matrices ($\bm{\Lambda}_{3}$, $\bm{\Lambda}_{2}$, $\bm{\Lambda}_{1,lit}$). However, rather than deriving generic identifiability conditions, we suggest that model identification is best assessed on a case-by-case basis.

Due to the flexibility of our proposed algorithm, we do not make any assumptions about prior distributions at this point, as the choice of priors should ideally be made in the context of a specific research problem \citep{schadPrincipledBayesianWorkflow2020}. For the full DSEM framework, to the best of our knowledge no alternatives to Metropolis-within-Gibbs have been proposed. For example, \citet{li_fitting_2022} compare \textsc{Stan} \citep{carpenter_stan_2017}, \textsc{JAGS} \citep{plummer2003jags}, and \textsc{Mplus} implementations for DSEM, the first of which uses NUTS \citep{hoffman_no_2014}, but they allow only manifest variables $\bm{y}_{1,it}$ and no latent traits $\bm{\eta}_{1,it}$ in the within-level model.

\subsection{State Space Models and Kalman Filters}

Rather than using \eqref{eq:WithinLevelModel}, we could formulate the within-level model as an LG-SSM \citep{durbin_time_2012} where for each individual $i$,
\begin{equation}
    \begin{aligned}
        \bm{y}_{1,it}               & = \bm{Z}_{it} \tilde{\bm{\eta}}_{1,it} + \bm{d}_{it} + \bm{v}_{it}, \quad \bm{v}_{it} \sim \mathcal{N}\left(\bm{0}, \bm{H}_{it}\right)                      \\
        \tilde{\bm{\eta}}_{1,i,t+1} & = \bm{T}_{it} \tilde{\bm{\eta}}_{1,it} + \bm{c}_{it} + \bm{w}_{it}, \quad \bm{w}_{it} \sim \mathcal{N}\left(\bm{0}, \bm{W}_{it}\right), \quad t=1,2,\dots,T \\
        \tilde{\bm{\eta}}_{1,i1}    & = \bm{\xi}_{i}, \quad \bm{\xi}_{i} \sim \mathcal{N}\left(\bm{a}_{i1}, \bm{P}_{i1}\right),
    \end{aligned}
    \label{eq:BasicStateSpaceModel}
\end{equation}
where $\tilde{\bm{\eta}}_{1,it}$ is a state vector, $\bm{a}_{i1}$ and $\bm{P}_{i1}$ are the mean vector and covariance matrix in the Gaussian prior for the initial state, $\bm{Z}_{it}$, $\bm{H}_{it}$, $\bm{T}_{it}$, and $\bm{W}_{it}$ are matrices, $\bm{d}_{it}$ and $\bm{c}_{it}$ are mean vectors, and $\bm{v}_{it}$ and $\bm{w}_{it}$ are noise terms.

An advantage of the state space formulation is that the state dynamics is Markovian; whatever value the lag $L$ takes, $\tilde{\bm{\eta}}_{1,i,t+1}$ only depends on $\tilde{\bm{\eta}}_{1,it}$ and $\bm{y}_{1,it}$ only depends on $\tilde{\bm{\eta}}_{1,it}$. This lets us use the Kalman filter \citep{kalman_new_1960} to compute the exact posterior distribution of the state variables. For each timepoint $t=2,3,\dots,T$ we have the recursion
\begin{equation}
    \begin{aligned}
        \bm{a}_{i,t|t-1} & = \bm{T}_{i,t-1} \bm{a}_{i,t-1|t-1} + \bm{c}_{i,t-1}                                                 \\
        \bm{P}_{i,t|t-1} & = \bm{T}_{i,t-1} \bm{P}_{i,t-1|t-1} \bm{T}_{i,t-1}^{T} + \bm{W}_{i,t-1}                              \\
        \bm{v}_{it}      & = \bm{y}_{1,it} - \bm{Z}_{it} \bm{a}_{i,t|t-1} - \bm{d}_{it}                                         \\
        \bm{F}_{it}      & = \bm{Z}_{it} \bm{P}_{i,t|t-1} \bm{Z}_{it}^{T} + \bm{H}_{it}                                         \\
        \bm{a}_{i,t|t}   & = \bm{a}_{i,t|t-1} + \bm{P}_{i,t|t-1} \bm{Z}_{it}^{T} \bm{F}_{it}^{-1} \bm{v}_{it}                   \\
        \bm{P}_{i,t|t}   & = \bm{P}_{i,t|t-1} - \bm{P}_{i,t|t-1} \bm{Z}_{it}^{T} \bm{F}_{it}^{-1} \bm{Z}_{it} \bm{P}_{i,t|t-1},
    \end{aligned}
    \label{eq:BasicKalmanFilter}
\end{equation}
where for some $s \leq t$ we have $P(\tilde{\bm{\eta}}_{1,it} | \bm{y}_{1,is}) = \mathcal{N}(\bm{a}_{i,t|s}, \bm{P}_{i,t|s})$, i.e., $\bm{a}_{i,t|s}$ is the posterior mean of the latent state at time $t$ after observing data up to time $s$ and $\bm{P}_{i,t|s}$ is the posterior covariance. For timepoint $t=1$ we use the prior initialization $\bm{a}_{i,1|0}=\bm{a}_{i1}$ and $\bm{P}_{i,1|0} = \bm{P}_{i1}$ and perform only the last four updates in \eqref{eq:BasicKalmanFilter}, to obtain $\bm{v}_{i1}$, $\bm{F}_{i1}$, $\bm{a}_{i,1|1}$, and $\bm{P}_{i,1|1}$. The marginal likelihood, integrating over $\tilde{\bm{\eta}}_{1,it}$, is now given analytically by\footnote{With missing data $UNT$ in the first term is replaced by $\sum_{i=1}^{N}\sum_{t=1}^{T}U_{it}$, where $U_{it} \geq 0$ is the number of observed indicators from participant $i$ at timepoint $t$.}
\begin{equation}
    \label{eq:BasicStateSpaceLogLikelihood}
    \mathcal{L} = - \frac{UNT}{2} \log 2\pi - \sum_{i=1}^{N}  \sum_{t=1}^{T} \left\{ \frac{1}{2} \log \left| \bm{F}_{it} \right| + \frac{1}{2} \left( \bm{y}_{1,it} - \hat{\bm{y}}_{1,it} \right)^{T} \bm{F}_{it}^{-1} \left( \bm{y}_{1,it} - \hat{\bm{y}}_{1,it} \right)\right\} ,
\end{equation}
where $\hat{\bm{y}}_{1,it} = \bm{Z}_{it} \bm{a}_{i,t|t-1} + \bm{d}_{it}$.

\section{A State Space Formulation of DSEM}
\label{sec:SSM_DSEM}

In this section we show how the within-level model \eqref{eq:WithinLevelModel} of the DSEM can be reformulated as a state space model on the form \eqref{eq:BasicStateSpaceModel}. This lets us replace sampling of the within-level latent variables with the Kalman filter \eqref{eq:BasicKalmanFilter} and plug the log likelihood \eqref{eq:BasicStateSpaceLogLikelihood} directly into the accept-reject step of the NUTS algorithm.

\subsection{Augmented State Space Formulation}

Before stating our main result, we define the coefficient extraction operator $[\cdot]_{k}$ such that $[\bm{\Phi}_{it}(L)]_k \equiv \bm{D}_{k,it}$ for any polynomial $\bm{\Phi}_{it}(L) = \sum_{k} \bm{D}_{k,it} L^k$.

\begin{theorem}
    \label{th:Theorem1}
    For maximum lag $L \ge 1$, the within-level model \eqref{eq:WithinLevelModel} is exactly equivalent to the state space model
    \begin{equation}
        \label{eq:StateSpaceModelTheorem1}
        \begin{aligned}
            \bm{y}_{1,it}               & = \bm{Z}_{it} \tilde{\bm{\eta}}_{1,it}                                                                                                  \\
            \tilde{\bm{\eta}}_{1,i,t+1} & = \bm{T}_{it} \tilde{\bm{\eta}}_{1,it} + \bm{c}_{it} + \bm{w}_{it}, \quad \bm{w}_{it} \sim \mathcal{N}\left(\bm{0}, \bm{W}_{it}\right),
        \end{aligned}
    \end{equation}
    where the augmented state vector is $\tilde{\bm{\eta}}_{1,it} = [\bm{\eta}_{1,i,t}^{T}, \dots, \bm{\eta}_{1,i,t-L+1},\bm{y}_{1,i,t}, \dots, \bm{y}_{1,i,t-L+1}]^{T}$ and the measurement matrix extracts the contemporaneous observation, $\bm{Z}_{it} = [\bm{0}_{U \times L V_1} ~ \bm{I}_{U \times U} ~ \bm{0}_{U \times (L-1)U}]$. The transition matrix is
    \begin{equation*}
        \bm{T}_{it} =
        \begin{bmatrix}
            \bm{T}_{it}^{(1,1)}                                                            & \bm{T}_{it}^{(1,2)}                                                      \\
            \begin{bmatrix} \bm{I}_{(L-1)V_1} & \bm{0}_{(L-1)V_1 \times V_1} \end{bmatrix} & \bm{0}                                                                   \\
            \bm{T}_{it}^{(3,1)}                                                            & \bm{T}_{it}^{(3,2)}                                                      \\
            \bm{0}                                                                         & \begin{bmatrix} \bm{I}_{(L-1)U} & \bm{0}_{(L-1)U \times U} \end{bmatrix}
        \end{bmatrix} ,
    \end{equation*}
    where the block-columns for $k=1,\dots,L$ are defined by
    \begin{align*}
        \{\bm{T}_{it}^{(1,1)}\}_{k} & = \bm{M}_{1,i,t+1} [\bm{\mathcal{P}}^{\eta}_{i,t+1}(L)]_k, &
        \{\bm{T}_{it}^{(1,2)}\}_{k} & = \bm{M}_{1,i,t+1} [\bm{\mathcal{P}}^{y}_{i,t+1}(L)]_k       \\
        \{\bm{T}_{it}^{(3,1)}\}_{k} & = \bm{N}_{1,i,t+1} [\bm{\mathcal{Q}}^{\eta}_{i,t+1}(L)]_k, &
        \{\bm{T}_{it}^{(3,2)}\}_{k} & = \bm{N}_{1,i,t+1} [\bm{\mathcal{Q}}^{y}_{i,t+1}(L)]_k.
    \end{align*}
    The matrices $\bm{M}_{1,i,t+1}$ and $\bm{N}_{1,i,t+1}$ are defined in \eqref{eq:M1it} and \eqref{eq:N1it}, the composite polynomials $\bm{\mathcal{P}}^{\eta}_{i,t+1}(L)$, $\bm{\mathcal{P}}^{y}_{i,t+1}(L)$, $\bm{\mathcal{Q}}^{\eta}_{i,t+1}(L)$, and $\bm{\mathcal{Q}}^{y}_{i,t+1}(L)$ are defined in \eqref{eq:P_poly} and \eqref{eq:Q_poly}, and the intercept vector $\bm{c}_{it}$ and covariance matrix $\bm{W}_{it}$ are defined in \eqref{eq:c_intercept} and \eqref{eq:w_process_noise}.
\end{theorem}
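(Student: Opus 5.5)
We first solve the contemporaneous part of \eqref{eq:WithinLevelModel} for $(\bm{\eta}_{1,it},\bm{y}_{1,it})$ in terms of lagged quantities, covariates and noise, and then stack the lags into a companion-form Markov state.

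Write $\bm{\Lambda}_{1,0it}$, $\bm{R}_{0it}$, $\bm{B}_{1,0it}$, $\bm{Q}_{0it}$ for the lag-zero coefficients. Move every term containing $\bm{\eta}_{1,it}$ or $\bm{y}_{1,it}$ to the left-hand side. This gives a linear system in $\bm{u}_{it} = [\bm{\eta}_{1,it}^T, \bm{y}_{1,it}^T]^T$ with coefficient matrix
\begin{equation*}
\bm{A}_{it} = \begin{bmatrix} \bm{I}-\bm{B}_{1,0it} & -\bm{Q}_{0it} \\ -\bm{\Lambda}_{1,0it} & \bm{I}-\bm{R}_{0it}\end{bmatrix}.
\end{equation*}
The right-hand side collects the lag-$l$ terms for $l\ge 1$, the intercepts, the covariate terms and the noise $(\bm{\xi}_{1,it},\bm{\epsilon}_{1,it})$.

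Because $\bm{B}_{1,0it}$ and $\bm{R}_{0it}$ are strictly lower triangular, both diagonal blocks are invertible. Invertibility of $\bm{A}_{it}$ as a whole additionally needs a recursivity assumption on the off-diagonal blocks: either one of $\bm{Q}_{0it}$, $\bm{\Lambda}_{1,0it}$ vanishes, or the Schur complement $\bm{I}-\bm{B}_{1,0it}-\bm{Q}_{0it}(\bm{I}-\bm{R}_{0it})^{-1}\bm{\Lambda}_{1,0it}$ is invertible. I expect this to be exactly what \eqref{eq:M1it} and \eqref{eq:N1it} encode. $\bm{M}_{1,it}$ and $\bm{N}_{1,it}$ are the block rows of $\bm{A}_{it}^{-1}$ acting on the stacked right-hand side, i.e.\ the maps giving $\bm{\eta}_{1,it}$ and $\bm{y}_{1,it}$ respectively.

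Next I identify the composite polynomials. $\bm{\mathcal{P}}^{\eta}(L)$ and $\bm{\mathcal{P}}^{y}(L)$ collect the lag coefficients, for $k=1,\dots,L$, that multiply $\bm{\eta}_{1,i,t-k}$ and $\bm{y}_{1,i,t-k}$ in the stacked right-hand side feeding $\bm{M}_{1,it}$. These are presumably the stacks $[\bm{B}_{1,kit};\bm{\Lambda}_{1,kit}]$ and $[\bm{Q}_{kit};\bm{R}_{kit}]$. $\bm{\mathcal{Q}}^{\eta}$ and $\bm{\mathcal{Q}}^{y}$ are the analogous polynomials feeding $\bm{N}_{1,it}$. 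The result is the reduced form
\begin{equation*}
\bm{\eta}_{1,it} = \sum_{k=1}^{L}\bm{M}_{1,it}\bigl([\bm{\mathcal{P}}^{\eta}_{it}]_k\bm{\eta}_{1,i,t-k} + [\bm{\mathcal{P}}^{y}_{it}]_k\bm{y}_{1,i,t-k}\bigr) + \text{intercept} + \text{noise},
\end{equation*}
and similarly for $\bm{y}_{1,it}$ with $\bm{N}_{1,it}$.

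Shifting $t\to t+1$ gives the first and third block rows of $\bm{T}_{it}$, with block-column $k$ multiplying the $k$th stacked lag in $\tilde{\bm{\eta}}_{1,it}$. Here I use that $\tilde{\bm{\eta}}_{1,it}$ contains lags $0,\dots,L-1$ relative to $t$, hence lags $1,\dots,L$ relative to $t+1$. The remaining block rows are pure shift identities, dropping the oldest lag, with zero noise. The intercept $\bm{c}_{it}$ collects the reduced-form intercept terms $\bm{\alpha}_{1,i,t+1}$, $\bm{\nu}_{1,i,t+1}$, $\bm{\Gamma}_{1,i,t+1}\bm{X}_{1,i,t+1}$ and $\bm{K}_{1,i,t+1}\bm{X}_{1,i,t+1}$, premultiplied by $\bm{A}^{-1}_{i,t+1}$ and padded with zeros.

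The noise $\bm{w}_{it}$ is $\bm{A}_{i,t+1}^{-1}[\bm{\xi}_{1,i,t+1};\bm{\epsilon}_{1,i,t+1}]$ placed in the current-time slots and zero elsewhere. Its covariance is therefore the padded
\begin{equation*}
\bm{W}_{it} = \bm{A}_{i,t+1}^{-1}\,\mathrm{diag}(\bm{\Psi}_{1,i,t+1},\bm{\Sigma}_{1,i,t+1})\,\bm{A}_{i,t+1}^{-T}.
\end{equation*}
This assumes $\bm{\xi}$ and $\bm{\epsilon}$ are independent, which must be stated. $\bm{W}_{it}$ is singular, which is harmless for the Kalman filter.

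Since $\bm{y}_{1,it}$ now sits inside the state, the measurement equation is an exact selection by $\bm{Z}_{it}$ with $\bm{H}_{it}=\bm{0}$ and $\bm{d}_{it}=\bm{0}$.

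Equivalence then follows in two directions. Any trajectory satisfying \eqref{eq:WithinLevelModel} yields states satisfying the transition equation, by construction. Conversely, the transition equation reproduces \eqref{eq:WithinLevelModel} after multiplying the nontrivial block rows by $\bm{A}_{i,t+1}$. The shift rows guarantee that the lagged components of the state coincide with earlier components, so the joint Gaussian law of $\{\bm{y}_{1,it}\}$ coincides.

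The initial condition (presample values) should be handled by an initial state distribution $(\bm{a}_{i1},\bm{P}_{i1})$ matching whatever convention \eqref{eq:WithinLevelModel} uses for $t\le L$.

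The main obstacle is the first step: correctly inverting the simultaneous system and verifying invertibility when both $\bm{Q}_{0it}$ and $\bm{\Lambda}_{1,0it}$ are nonzero. I would first try to show that, under the lower-triangularity assumptions, nonsingularity reduces to the Schur complement condition, and state that condition as the well-posedness requirement implicit in \eqref{eq:WithinLevelModel}.
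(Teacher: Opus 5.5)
Your proposal follows essentially the same route as the paper's proof. You isolate the lag-zero terms, invert the $2\times 2$ block system via the Schur complement, advance to $t+1$ and stack the lags in companion form, and your padded $\bm{A}_{i,t+1}^{-1}\mathrm{diag}(\bm{\Psi}_{1,i,t+1},\bm{\Sigma}_{1,i,t+1})\bm{A}_{i,t+1}^{-T}$ reproduces the paper's blocks $\bm{W}_{it}^{(1,1)}$, $\bm{W}_{it}^{(1,3)}$, $\bm{W}_{it}^{(3,3)}$ exactly.

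There are two minor differences. First, the paper takes $\bm{M}_{1,it}$ and $\bm{N}_{1,it}$ to be only the square diagonal blocks of the inverse. The off-diagonal blocks are then $\bm{M}_{1,it}\bm{Q}_{0it}\bm{A}_{0it}$ and $\bm{N}_{1,it}\bm{\Lambda}_{1,0it}\bm{\Xi}_{0it}$, with $\bm{A}_{0it}=(\bm{I}-\bm{R}_{0it})^{-1}$ and $\bm{\Xi}_{0it}=(\bm{I}-\bm{B}_{1,0it})^{-1}$, and the paper folds these extra factors into the composite polynomials, e.g.\ $\bm{\mathcal{P}}^{\eta}_{it}(L)=\bm{B}^{*}_{1,it}(L)+\bm{Q}_{0it}\bm{A}_{0it}\bm{\Lambda}^{*}_{1,it}(L)$. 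Your guesses that $\bm{M}_{1,it}$, $\bm{N}_{1,it}$ are full block rows and that the polynomials are the stacks $[\bm{B}_{1,kit};\bm{\Lambda}_{1,kit}]$ and $[\bm{Q}_{kit};\bm{R}_{kit}]$ therefore do not match the paper's definitions. Your factorization nonetheless gives the same products $\{\bm{T}_{it}\}_k$, so this is bookkeeping only.

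Second, your explicit Schur-complement invertibility requirement is a point the paper's proof does not address. Strict lower triangularity guarantees only that $\bm{A}_{0it}$ and $\bm{\Xi}_{0it}$ exist, not $\bm{M}_{1,it}$ and $\bm{N}_{1,it}$; for example, with $V_1=U=1$ and $\bm{Q}_{0it}\bm{\Lambda}_{1,0it}=1$ the system is singular.
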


\begin{proof}
    A proof is given in Appendix \ref{app:Proof}.
\end{proof}

\subsection{Bayesian Estimation via Marginalization}

The state space formulation derived in Theorem \ref{th:Theorem1} allows for a fundamental change in the estimation strategy compared to the standard implementation \citep{asparouhov_dynamic_2018}. By utilizing the Kalman filter, we can analytically integrate out the latent states $\tilde{\bm{\eta}}_{1,it}$. Let $\bm{\Theta}$ denote the set of all static model parameters. The marginal likelihood of the data given $\bm{\Theta}$, $\bm{\eta}_{2} = \{\bm{\eta}_{2,i}\}_{i=1}^{N}$ and $\bm{\eta}_{3} = \{\bm{\eta}_{3,t}\}_{t=1}^{T}$, $P(\bm{y} | \bm{\Theta}, \bm{\eta}_{2}, \bm{\eta}_{3})$, is computed exactly via the recursive update equations \eqref{eq:BasicKalmanFilter} and the summation in \eqref{eq:BasicStateSpaceLogLikelihood}. Targeting the marginal posterior
\begin{equation}
    \label{eq:TargetPosterior}
    P(\bm{\Theta}, \bm{\eta}_{2}, \bm{\eta}_{3} | \bm{y}) \propto P(\bm{y} | \bm{\Theta}, \bm{\eta}_{2}, \bm{\eta}_{3}) P(\bm{\Theta}, \bm{\eta}_{2}, \bm{\eta}_{3}),
\end{equation}
where $\bm{y}$ denotes the complete vector of all observations, reduces the dimension of the sampling problem from $\mathcal{O}(N \cdot T \cdot V_1 + N \cdot V_{2} + T \cdot V_{3} + |\bm{\Theta}|)$ with standard MCMC to $\mathcal{O}(N \cdot V_{2} + T \cdot V_{3} + |\bm{\Theta}|)$. However, neither \citet{asparouhov_dynamic_2018}'s Metropolis-within-Gibbs sampler nor a random walk Metropolis algorithm would be able to utilize this. In the former, parameters are sampled in blocks conditionally on all other parameters, and the Kalman filter and marginal likelihood calculation would have to be repeated for each new parameter block, i.e., a large number of times per iteration. Random walk Metropolis, on the other hand, allows updating all parameters targeted by \eqref{eq:TargetPosterior} in a single accept-reject step. However, due to the high dimensionality of the problem the acceptance rate would approach zero, rendering the algorithm useless.

Hamiltonian Monte Carlo \citep{Neal2011}, on the other hand, circumvents both of these problems by utilizing the exact gradient of the marginalized log posterior, the logarithm of \eqref{eq:TargetPosterior}. By simulating Hamiltonian dynamics it proposes distant states that theoretically conserve Hamiltonian energy, yielding an acceptance probability of 1. While discrete-time leapfrog integration introduces minor numerical errors, NUTS \citep{hoffman_no_2014} adaptively tunes its step size to bound this error, maintaining high acceptance rates regardless of dimensionality. This gradient-guided exploration makes joint updating of all static parameters and between-level latent variables highly efficient.

The general algorithmic scheme is as follows:

\begin{enumerate}
    \item \textbf{Initialization:} Initialize parameters $\bm{\Theta}^{(0)}$, $\bm{\eta}_{2}^{(0)}$, and $\bm{\eta}_{3}^{(0)}$.
    \item \textbf{Transition:} At iteration $k$, propose a new set of parameters $\{\bm{\Theta}^{*}, \bm{\eta}_{2}^{*}, \bm{\eta}_{3}^{*}\}$ using the NUTS proposal mechanism. This involves repeatedly evaluating the marginal likelihood \eqref{eq:BasicStateSpaceLogLikelihood} by running the Kalman filter recursion \eqref{eq:BasicKalmanFilter}, once for each leapfrog step.
    \item \textbf{Likelihood Computation:} Conditional on $\{\bm{\Theta}^{*}, \bm{\eta}_{2}^{*}, \bm{\eta}_{3}^{*}\}$, run the Kalman filter recursion \eqref{eq:BasicKalmanFilter} over $i=1,\dots,N$ and $t=1, \dots, T$ to compute the marginal log-likelihood using \eqref{eq:BasicStateSpaceLogLikelihood}.
    \item \textbf{Acceptance:} Accept or reject $\{\bm{\Theta}^{*}, \bm{\eta}_{2}^{*}, \bm{\eta}_{3}^{*}\}$ based on the Metropolis ratio, which involves the likelihood computed in Step 3 and the prior $P(\bm{\Theta}, \bm{\eta}_{2}, \bm{\eta}_{3})$.
    \item \textbf{Repeat:} Repeat steps 2-4 until convergence.
\end{enumerate}

Posterior estimates of the latent states $\tilde{\bm{\eta}}_{1,it}$ can be recovered \textit{post hoc} by running the Kalman smoother using the posterior samples of $\bm{\Theta}$, $\bm{\eta}_{2}$, and $\bm{\eta}_{3}$, without affecting the efficiency of the parameter estimation. If some $\bm{y}_{1,it}$ is missing, steps 3--6 of the Kalman filter are skipped and the mean and covariance of the latent state are carried forward, i.e., $\bm{a}_{i,t|t} = \bm{a}_{i,t|t-1}$ and $\bm{P}_{i,t|t}=\bm{P}_{i,t|t-1}$. This analytically marginalizes over the missing data without needing to sample it as a parameter as in standard MCMC.

To ensure efficient exploration of the posterior a non-centered parameterization for the between-level latent variables may be useful \citep{papaspiliopoulos_general_2007}, and we have implemented this in all examples of the next section. However, whether a centered or non-centered parametrization should be preferred depends on the geometry of the particular problem \citep{betancourt_hamiltonian_2015}.

\section{Simulation Experiments}
\label{sec:Simulations}

We present simulation experiments comparing the proposed algorithm---hereafter denoted ``NUTS-Kalman''---with a Metropolis-within-Gibbs\footnote{For brevity, this sampler is denoted ``Gibbs'' in the figure legends.} sampler which resembles that described in \citet{asparouhov_dynamic_2018}. As we do not have an \textsc{Mplus} license we have implemented this sampler in \textsc{JAGS (4.3.2)} using \textsc{R (4.4.2)} \citep{rcoreteamLanguageEnvironmentStatistical2025} and the packages \textsc{rjags (4-17)} \citep{plummerRjagsBayesianGraphical2025} and \textsc{jagsUI (1.6.3)} \citep{kellnerJagsUIWrapperRjags2026}. A distinction between our sampler and \citet{asparouhov_dynamic_2018} is that we constrain the autoregressive parameters to yield stationary solutions. Since this breaks conjugacy, we resort to Metropolis-Hastings---via slice sampling \citep{neal_slice_2003} implemented in \textsc{JAGS}---for a few more parameters than what they do. It follows that an actual \textsc{Mplus} implementation would plausibly mix marginally better per iteration than our baseline does, and the efficiency factors reported below should therefore be read as upper bounds on the gain relative to \textsc{Mplus} rather than as estimates of it. Still, for all simulation experiments considered, between 94.5\% and 98.5\% of the parameters in our implementation are updated via Gibbs sampling, and crucially, all of the $\mathcal{O}(N \cdot T)$ within-level latent variables are updated this way. We refer to Online Resource 2 for a detailed analysis of our \textsc{JAGS} implementation. NUTS-Kalman was implemented in \textsc{Stan (2.39.0)} using the \textsc{cmdstanr (0.9.0)} package \citep{gabryCmdstanrInterfaceCmdStan2025}. In order to separate the effect of using the Kalman filter from potential effects of using NUTS and \textsc{Stan}, we also implemented a full NUTS algorithm which samples over all latent states, which we denote ``NUTS-Joint''.

For all simulation experiments, note that both Metropolis-within-Gibbs and NUTS target the correct posterior. That is, when run for a sufficient number of iterations they will all get arbitrarily close to the true posterior distributions. The important practical question is whether a sufficiently good approximation can be obtained in reasonable time. Parameter values sampled by MCMC are in general auto-correlated, and the effective sample size (ESS) measures how many independent samples they correspond to. Hence, in order to obtain a sufficiently low Monte Carlo error to accurately estimate a posterior quantity of interest, the ESS needs to be sufficiently high \citep{zitzmann_going_2019}. Similar to other papers on Bayesian estimation \citep{bierkens_zig-zag_2019,prangle_lazy_2016} we thus compare algorithms in terms of their efficiency as measured by ESS per time unit.

An important point is that the ESS output by the \textsc{jagsUI} package is based on estimating ESS independently in each chain by fitting an autoregressive model using the \textsc{coda} package \citep{plummerCODAConvergenceDiagnosis2006} and then summing over all chains. This approach can be overoptimistic, e.g., if the posterior is bimodal and one chain gets stuck in each mode \citep{vehtariRankNormalizationFoldingLocalization2021}. \textsc{Stan} uses a more sophisticated approach based on the estimators proposed by \citet{vehtariRankNormalizationFoldingLocalization2021}, and we computed this also for the \textsc{JAGS} output using the \textsc{posterior} package \citep{Burkneretal2025}. This both ensures that we use state-of-the-art estimators for ESS and that comparisons are based on standardized quantities. Following \citet{vehtariRankNormalizationFoldingLocalization2021} we distinguish between bulk-ESS for diagnosing the sampling efficiency in the bulk of the posterior (e.g., for estimating posterior means) and the tail-ESS defined as the minimum ESS for 5\% and 95\% percentiles for diagnosing the sampling efficiency in the tails of the posterior (e.g., for estimating posterior intervals). The potential scale reduction factor $\hat{R}$ was also computed using the method proposed by \citet{vehtariRankNormalizationFoldingLocalization2021} for all algorithms. While \citet[p. 297]{Gelmanetal2003} suggested that maximum values of $\hat{R}$ over all parameters below $1.10$ are acceptable, \citet[p. 518]{vatsRevisitingGelmanRubin2021} have argued that ``a cutoff of $\hat{R} \leq 1.1$ is much too high to yield reasonable estimates of target quantities'', and an $\hat{R} \leq 1.01$ threshold has instead been suggested \citep{vehtariRankNormalizationFoldingLocalization2021}.

Making a fair comparison between MCMC algorithms in a large-scale simulation study is challenging, since setting the tuning parameters manually for each of several thousands of Monte Carlo sample is not feasible. To address this issue, for each simulation setting we ran pilots with 10 Monte Carlo samples prior to the full simulations, in which convergence was assessed and various tuning parameters were tested. For the \textsc{JAGS} implementation, an initial adaptation phase is used to set initial slice width for the parameters that are updated with slice sampling, and the burn-in also needs to be set. Here we used visual inspection of trace plots to ensure that the burn-in period was set as low as possible while still ensuring approximate convergence. For the \textsc{Stan} implementations, we varied the target acceptance probability and maximum tree depth during warm-up, and found the most liberal values which gave close to zero probability of divergent transitions after warm-up. For all algorithms, we set the total number of samples after burn-in/warm-up quite high, to ensure that the actual wall time was dominated by sampling (approximately) the true posterior. Since we compare algorithms in terms of sampling efficiency and time to reach a given ESS, the actual wall time is not directly relevant. A detailed description of the pilot runs for all simulation settings is provided in Online Resource 1. 

All simulations were run on the [Blinded] computing cluster hosted by [Institution Blinded]. All simulation results can be reproduced using code openly available in our OSF repository.\footnote{\url{https://osf.io/p754m/overview?view_only=f99de4145ddf49c9a76f0e38b5e42223}} The code used for these simulation experiments was developed with the assistance of Google's Gemini 3.1 Pro and Claude Opus 5. All AI-assisted code was thoroughly reviewed, modified, and independently verified by the author prior to execution. Supplementary simulation results are provided in Online Resource 4. In particular, Figures S5, S9, S11, S14, and S16 of Online Resource 4 show that our \textsc{Stan} and \textsc{JAGS} implementations yield similar posterior means and standard deviations, suggesting that they target the same posterior distributions.\footnote{Figure S16 shows some disagreement between NUTS-Joint and the two other algorithms, which happened because NUTS-Joint had serious convergence issues for this particular model, as further described in Section \ref{sec:CrossClassifiedVAR1}.} Further details about prior distributions, data-generating parameter values, and algorithmic implementations, are provided in Online Resource 3.

\subsection{Multilevel Scalar Latent AR(1) Model}
\label{sec:LatentScalarAR1}

We start by considering a univariate latent AR(1) model, with parameters varying between participants but not between timepoints. The decomposition \eqref{eq:ResponseDecomposition} becomes $y_{it} = y_{1,it} + y_{2,i}$, there is no between-timepoint model \eqref{eq:BetweenTimepointModel} and the between-participant model \eqref{eq:BetweenIndividualModel} is
\begin{equation*}
    \begin{aligned}
        y_{2,i}         & = \eta_{2,i1}                                                                                     \\
        \bm{\eta}_{2,i} & = \bm{\alpha}_{2} + \bm{\xi}_{2,i}, \quad \bm{\xi}_{2,i} \sim \mathcal{N}(\bm{0}, \bm{\Psi}_{2}),
    \end{aligned}
\end{equation*}
where $\eta_{2,i1}$ is the first component of $\bm{\eta}_{2,i}$ and $\bm{\Psi}_{2} = \text{diag}(\tau_{\nu}^{2}, \tau_{\phi}^{2}, \tau_{\log\sigma}^{2}, \tau_{\log\psi}^{2})$. The within-level model \eqref{eq:WithinLevelModel} is
\begin{equation*}
    \begin{aligned}
        y_{1,it}    & = \eta_{1,it} + \epsilon_{1,it}, \quad \epsilon_{1,it} \sim \mathcal{N}(0, \sigma_{2,i}^{2})   \\
        \eta_{1,it} & = \phi_{2,i} \eta_{1,i,t-1} + \xi_{1,it}, \quad \xi_{1,it} \sim \mathcal{N}(0,\psi_{2,i}^{2}),
    \end{aligned}
\end{equation*}
which is what \citet{asparouhov_dynamic_2018} call the measurement error AR(1) model. The autoregressive parameter $\phi_{2,i}$ and the standard deviations $\sigma_{2,i}$ and $\psi_{2,i}$ in the within-level model are related to the between-level latent variable $\bm{\eta}_{2,i}$ through the transformations $\phi_{2,i} = \tanh(\eta_{2i,2})$, $\sigma_{2,i} = \exp(\eta_{2i,3})$, and $\psi_{2,i} = \exp(\eta_{2i,4})$. The latter two ensure that the standard deviations are positive, and the first ensures stationarity by constraining the autoregressive coefficients to lie between $-1$ and $1$. The vector $\bm{\alpha}_{2} = (\nu, \phi, \log \sigma, \log \psi)^{T}$ contains the population means of the intercepts $\eta_{2,i1}$ and the between-level latent variables $\eta_{2i,2}$, $\eta_{2i,3}$, and $\eta_{2i,4}$, which we denote $\nu$, $\phi$, $\log \sigma$, and $\log \psi$, respectively. Although only one measurement is available per timepoint, the latent state $\eta_{1,it}$ is still unobserved and the model retains $\mathcal{O}(N \cdot T)$ within-level latent variables; what identifies the measurement error variance $\sigma_{2,i}^{2}$ separately from the process noise variance $\psi_{2,i}^{2}$ is the autoregression, which makes $\xi_{1,it}$ persist across timepoints.

We generated 100 random datasets from the model for each of six data dimension settings, with $N=50,100$ and $T=50,100,200$. For each generated dataset we ran both NUTS-Kalman and NUTS-Joint for 1,000 warm-up\footnote{To be consistent with \textsc{Stan} terminology, we use the term warm-up rather than burn-in when referring to \textsc{Stan} implementations, while we use the term burn-in for JAGS implementations.} iterations followed by 10,000 sampling iterations in each of four parallel chains, yielding a total of 40,000 posterior samples. Based on the piloting, the target average acceptance probability during warm-up was set to $0.80$ for NUTS-Kalman and $0.95$ for NUTS-Joint, with both having maximum tree depth of $10$. The Metropolis-within-Gibbs sampler was run for 100,000 iterations in each of four parallel chains after an adaptation phase of 2,000 iterations and no additional burn-in, yielding a total of 400,000 posterior samples.

The eight population level parameters in $\bm{\alpha}_{2}$ and $\bm{\Psi}_{2}$ were considered the parameters of interest, and for each run of an algorithm bulk-ESS and tail-ESS were computed as the minimum over these, whereas $\hat{R}$ was computed as the maximum. According to the $\hat{R}<1.01$ threshold, NUTS-Kalman and Metropolis-within-Gibbs succeeded for almost 100\% of the samples, whereas NUTS-Joint failed in about half of the samples. Loosening the threshold to $\hat{R}<1.05$, NUTS-Joint succeeded in most of the samples (see Figure S1 of Online Resource 4).

Figure \ref{fig:eg1-univariate-latent-ar1-efficiency} shows the efficiency of the three algorithms, measured in terms of bulk-ESS per minute. In all six cases considered, NUTS-Kalman was between 14 and 45 times more efficient than Metropolis-within-Gibbs. NUTS-Kalman was also between 14 and 30 times more efficient than NUTS-Joint, confirming that the efficiency gain was not due to \textsc{Stan} or NUTS in and of itself, but rather the use of the Kalman filter to marginalize over the within-level latent variables. The corresponding plot for efficiency in terms of tail-ESS is almost identical (see Figure S2 of Online Resource 4). Figures S3 and S4 of Online Resource 4 show histograms of bulk-ESS and tail-ESS, respectively, for all eight parameters of interest. They reveal that all three algorithms had comparable efficiency for estimating the population-level intercept term, the first element of $\bm{\alpha}_{2}$, whereas the order-of-magnitude advantage of NUTS-Kalman shown in Figure \ref{fig:eg1-univariate-latent-ar1-efficiency} stems from Metropolis-within-Gibbs and NUTS-Joint being inefficient for the remaining three elements of $\bm{\alpha}_{2}$, i.e., the mean of the autoregressive coefficient and of the log standard deviations of the measurement noise and process noise. NUTS-Kalman was also more efficient for the variance parameters in $\bm{\Psi}_{2}$.

\begin{figure}[t]
    \centering
    \includegraphics[width=\linewidth]{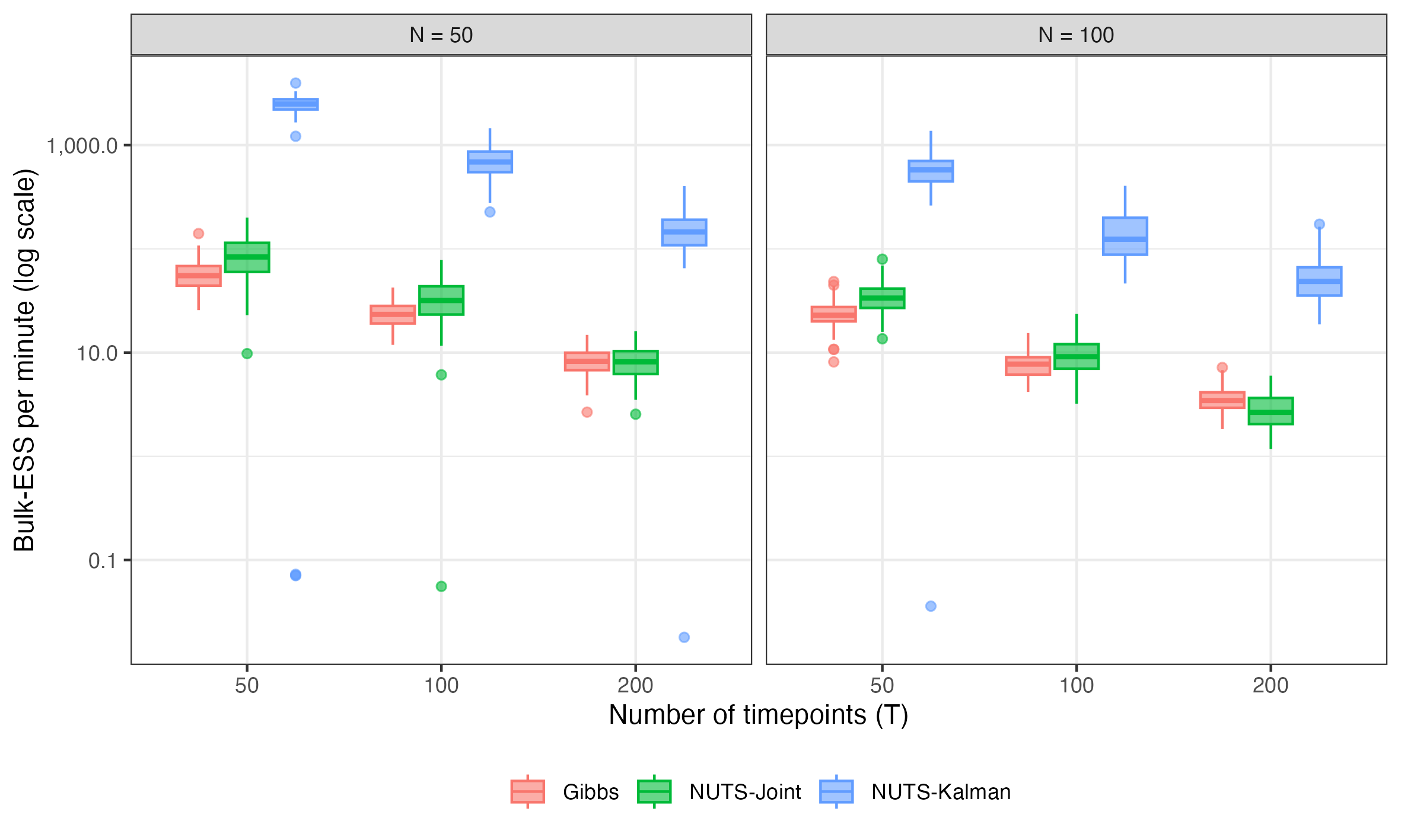}
    \caption{Box plots of efficiency for the three algorithms compared for the multilevel scalar latent AR(1) model}
    \label{fig:eg1-univariate-latent-ar1-efficiency}
\end{figure}

For a practitioner doing Bayesian data analysis, an ESS of around 400 is typically considered a sensible threshold for computing summaries like posterior means \citep{veenman_bayesian_2024}, around 1,000 may be required to obtain reliable posterior intervals \citep{zitzmann_going_2019}, while 10,000 has been suggested as a threshold for high-precision inference \citep{kruschke_doing_2015}. Table \ref{tab:eg1-univariate-latent-ar1-time} shows the projected time required to reach these targets for the three algorithms, for the larger data setting with 100 participants and 200 timepoints.

\begin{table}
    \centering
    \caption{\label{tab:eg1-univariate-latent-ar1-time}Median time (minutes) to reach target bulk-ESS for $N=100$ and $T=200$}
    \centering
    \begin{tabular}[t]{lrrr}
        \toprule
        Method & Target=400 & Target=1,000 & Target=10,000\\
        \midrule
        Gibbs & 115 & 289 & 2899\\
        NUTS-Joint & 150 & 377 & 3771\\
        NUTS-Kalman & 8 & 20 & 206\\
        \bottomrule
    \end{tabular}
\end{table}

\subsection{Multilevel One-Factor Multi-Indicator AR(1) Model}
\label{sec:MultiIndicatorAR1}

We next simulated a model for which the observation $\bm{y}_{it}$ was a vector of size $U=3$, while the underlying latent trait was still scalar. Equation \eqref{eq:ResponseDecomposition} is now $\bm{y}_{it} = \bm{y}_{1,it} + \bm{y}_{2,i}$, there is no between-timepoint model \eqref{eq:BetweenTimepointModel}, and the between-participant model \eqref{eq:BetweenIndividualModel} is
\begin{align*}
    \bm{y}_{2,i}    & =
    \begin{bmatrix}
        \eta_{2,i1} & \eta_{2,i2} & \eta_{2,i3}
    \end{bmatrix}^{T}                                                                             \\
    \bm{\eta}_{2,i} & = \bm{\alpha}_{2} + \bm{\xi}_{2,i}, \quad \bm{\xi}_{2,i} \sim \mathcal{N}(\bm{0}, \bm{\Psi}_{2}),
\end{align*}
where $\bm{\Psi}_{2} = \text{diag}(\tau_{\nu}^{2}, \tau_{\nu}^{2}, \tau_{\nu}^{2}, \tau_{\phi}^{2}, \tau_{\log \sigma}^{2}, \tau_{\log \psi}^{2})$. The within-level model \eqref{eq:WithinLevelModel} is
\begin{align*}
    \bm{y}_{1,it} & =
    \begin{bmatrix}
        1 & \lambda_{1} & \lambda_{2}
    \end{bmatrix}^{T} \eta_{1,it}
    + \bm{\epsilon}_{1,it}, \quad \bm{\epsilon}_{1,it} \sim \mathcal{N}(\bm{0}, \bm{\Sigma}_{2,i})                \\
    \eta_{1,it}   & = \phi_{2,i} \eta_{1,i,t-1} + \xi_{1,it} \quad \xi_{1,it} \sim \mathcal{N}(0,\psi_{2,i}^{2}),
\end{align*}
where $\bm{\Sigma}_{2,i} = \text{diag}(\sigma_{2,i}^{2}, \sigma_{2,i}^{2}, \sigma_{2,i}^{2})$. The first three elements of $\bm{\alpha}_{2}$ are now population means for each of the three measured indicators, while the remaining three elements are related to $\phi_{2,i}$, $\sigma_{2,i}$, and $\psi_{2,i}$ through $\tanh(\cdot)$ and $\exp(\cdot)$ transformations exactly as in the previous section. The factor loadings $\lambda_{1}$ and $\lambda_{2}$ are population-level parameters. We ran the simulations for two data dimensions: one with twice as many participants as there were timepoints per participant, $N=100$ and $T=50$, and one with the opposite, $N=50$ and $T=100$. Based on the pilot runs described in Online Resource 1, the tuning parameters of all algorithms were set to the same values as in Section \ref{sec:LatentScalarAR1}.

Figure \ref{fig:eg2-three-indicator-ar1-effiency} shows the efficiency of the three algorithms, measured in terms of bulk-ESS per minute (left) and tail-ESS per minute (right). In the two simulation settings, NUTS-Kalman was between $2.9$ and $3.3$ times more efficient than Metropolis-within-Gibbs in terms of median bulk-ESS per minute and $1.8$ times more efficient in terms of median tail-ESS per minute. NUTS-Kalman was also between $5.7$ and $8.2$ times more efficient than NUTS-Joint (between $3.9$ and $4.7$ for tail-ESS). Table \ref{tab:eg2-three-indicator-ar1-time} shows the median time to reach a target ESS for each algorithms in the two settings considered. Note that in the $N=100, T=50$ setting there was a single Monte Carlo sample for which NUTS-Kalman was extremely inefficient, as shown by the blue dots in Figure \ref{fig:eg2-three-indicator-ar1-effiency}.

\begin{figure}
    \centering
    \includegraphics[width=\linewidth]{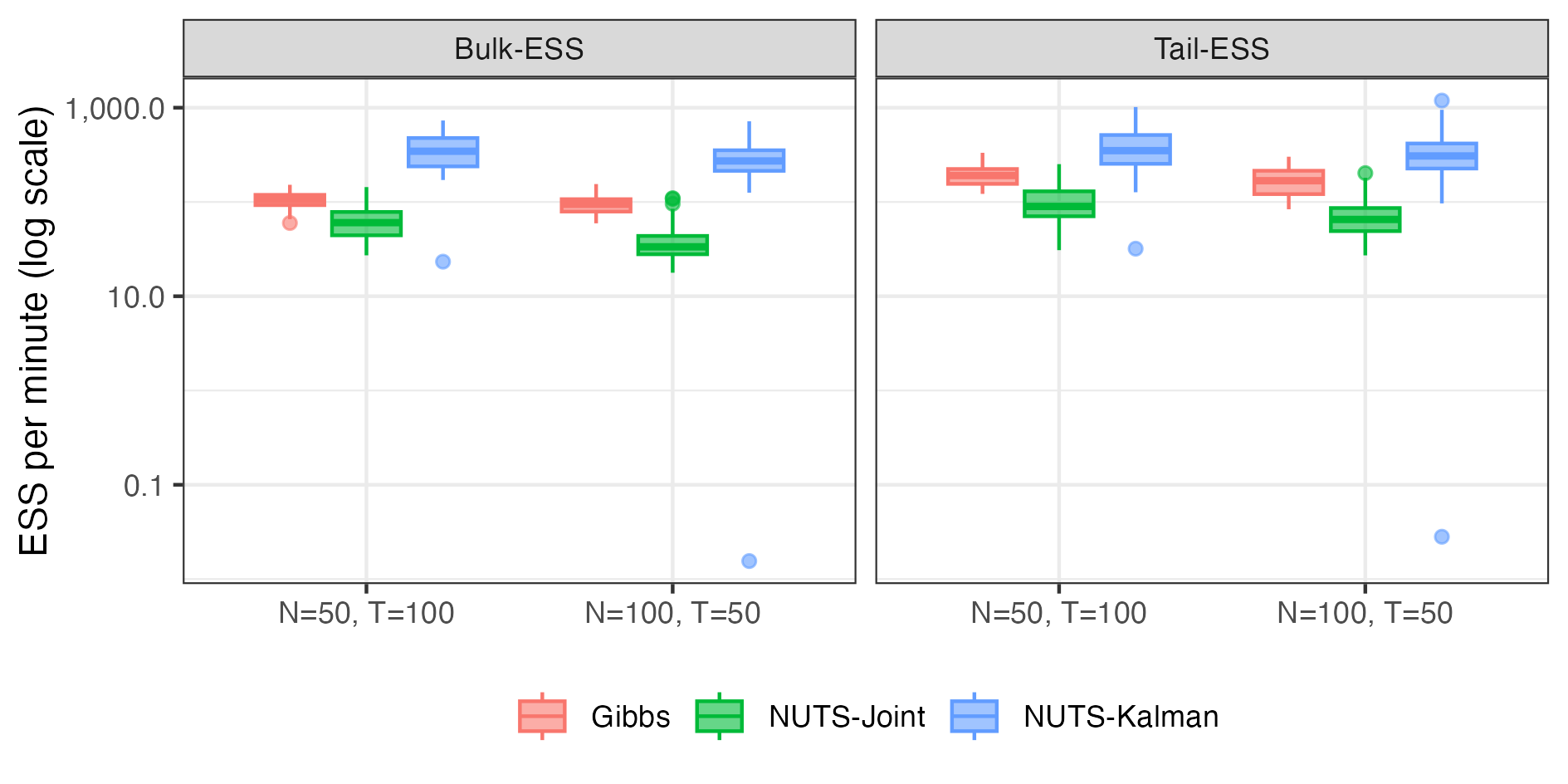}
    \caption{Box plots of efficiency for the three algorithms compared for the multilevel one-factor three-indicator latent AR(1) model}
    \label{fig:eg2-three-indicator-ar1-effiency}
\end{figure}

\begin{table}
\centering
\caption{\label{tab:eg2-three-indicator-ar1-time}Median time (minutes) to reach target bulk-ESS for the three-indicator AR(1) model}
\centering
\begin{tabular}[t]{rrlrrr}
\toprule
N & T & Method & Target=400 & Target=1,000 & Target=10,000\\
\midrule
100 & 50 & Gibbs & 4.2 & 10.6 & 106.0\\
100 & 50 & NUTS-Joint & 12.0 & 29.9 & 299.0\\
100 & 50 & NUTS-Kalman & 1.5 & 3.7 & 36.7\\
50 & 100 & Gibbs & 3.8 & 9.4 & 94.4\\
50 & 100 & NUTS-Joint & 6.6 & 16.6 & 165.7\\
\addlinespace
50 & 100 & NUTS-Kalman & 1.2 & 2.9 & 29.0\\
\bottomrule
\end{tabular}
\end{table}

\subsection{Multilevel Trivariate VAR(1) Model}
\label{sec:TrivariateAR1}

We next extended the model further to include a three-dimensional latent state vector ($V_{1}=3$), with each latent trait being measured by either one indicator ($U=3$), two indicators ($U=6$), or three indicators ($U=9$). Throughout this section we used $N=50$ participants observed at $T=100$ timepoints. Equation \eqref{eq:ResponseDecomposition} is still $\bm{y}_{it} = \bm{y}_{1,it} + \bm{y}_{2,i}$, there is no between-timepoint model \eqref{eq:BetweenTimepointModel}, and the between-participant model \eqref{eq:BetweenIndividualModel} is
\begin{equation*}
    \begin{aligned}
        \bm{y}_{2,i}    & =
        \begin{bmatrix}
            \eta_{2,i1} & \eta_{2,i2} & \dots & \eta_{2,iU}
        \end{bmatrix}^{T}                                                                     \\
        \bm{\eta}_{2,i} & = \bm{\alpha}_{2} + \bm{\xi}_{2,i}, \quad \bm{\xi}_{2,i} \sim \mathcal{N}(\bm{0}, \bm{\Psi}_{2}).
    \end{aligned}
\end{equation*}
The within-level model is
\begin{align*}
    \bm{y}_{1,it}    & = \bm{\Lambda}_{2}  \bm{\eta}_{1,it} + \bm{\epsilon}_{1,it} , \quad \bm{\epsilon}_{1,it} \sim \mathcal{N}(\bm{0}, \bm{\Sigma}_{2,i}) \\
    \bm{\eta}_{1,it} & = \bm{\Phi}_{2,i} \bm{\eta}_{1,i,t-1} + \bm{\xi}_{1,it}, \quad \bm{\xi}_{1,it} \sim \mathcal{N}(\bm{0}, \bm{\Psi}_{2,i}),
\end{align*}
where $\bm{\Lambda}_{2}$ was such that we had an independent cluster model, with each item reflecting only a single latent variable. We ensured stationarity of the simulated datasets by rejecting any $\bm{\Phi}_{2,i}$ with maximum modulus of its eigenvalues larger than $0.95$. We assumed for simplicity that the $U \times U$ matrix $\bm{\Sigma}_{2,i}$ had identical values $\sigma_{2,i}^{2}$ along the diagonal and similarly that the $3 \times 3$ matrix $\bm{\Psi}_{2,i}$ has identical values $\psi_{2,i}^{2}$ along the diagonal. The vector $\bm{\alpha}_{2}$ is now of size $U+11$ and contains the means of the $U$ measured items, the 9 arctanh-transformed elements of $\bm{\Phi}_{2,i}$, and the means of $\log \sigma_{2,i}$ and $\log \psi_{2,i}$. The $(U+11) \times (U+11)$ covariance matrix $\bm{\Psi}_{2}$ contains the corresponding variances along its diagonal, and has zero off-diagonal entries.

Based on the pilot runs described in Online Resource 1, we again set the algorithm parameters identically to Sections \ref{sec:LatentScalarAR1} and \ref{sec:MultiIndicatorAR1}, except that in the case of a single indicator per latent variables, $U=3$, the target acceptance rate during warm-up for NUTS-Joint was $0.99$ to avoid divergent transitions. By default, \textsc{cmdstanr} initializes parameters by drawing from a uniform distribution over $[-2,2]$ on the unconstrained parameter space. This led to divergence into non-stationary regions during the warm-up phase, and for this reason we explicitly initialized the population means of the autoregressive coefficients to $0$ and the log-standard deviations to $0.05$. In contrast, \textsc{JAGS} initializes parameters by drawing from the prior distributions, and since we used somewhat informative priors, this turned out to be sufficient to ensure convergence during burn-in (see Section S3.3 of Online Resource 3 for details).

In our implementation, we projected the $U$ observed indicators onto a $V_{1}=3$ dimensional compressed state space prior to running the Kalman filter, following \citet{jungbackerLikelihoodbasedDynamicFactor2015}. Due to this, we only had to do a $3 \times 3$ matrix inversion in every step rather than $U \times U$, which in the worst case would be $9 \times 9$ (see Section S3.2 of Online Resource 3).

Figure \ref{fig:eg3-trivariate-efficiency} shows the efficiency (left) and maximum $\hat{R}$ (right) for each number of indicator variables. It is clear from the plot that NUTS-Kalman outperforms Metropolis-within-Gibbs in all settings, and that the margin is largest in the low-information $K=1$ case. To be specific, NUTS-Kalman is $9.6$ times more efficient than Metropolis-within-Gibbs in the $K=1$ case, and $1.6$ and $2.0$ times more efficient in the $K=2$ and $K=3$ case, respectively. In the single-indicator setting, Metropolis-within-Gibbs and NUTS-Joint also have poor $\hat{R}$ values, indicating problems with convergence. In contrast, in the two other settings, all algorithms have $\hat{R}$ values very close to 1.

\begin{figure}
    \centering
    \includegraphics[width=\linewidth]{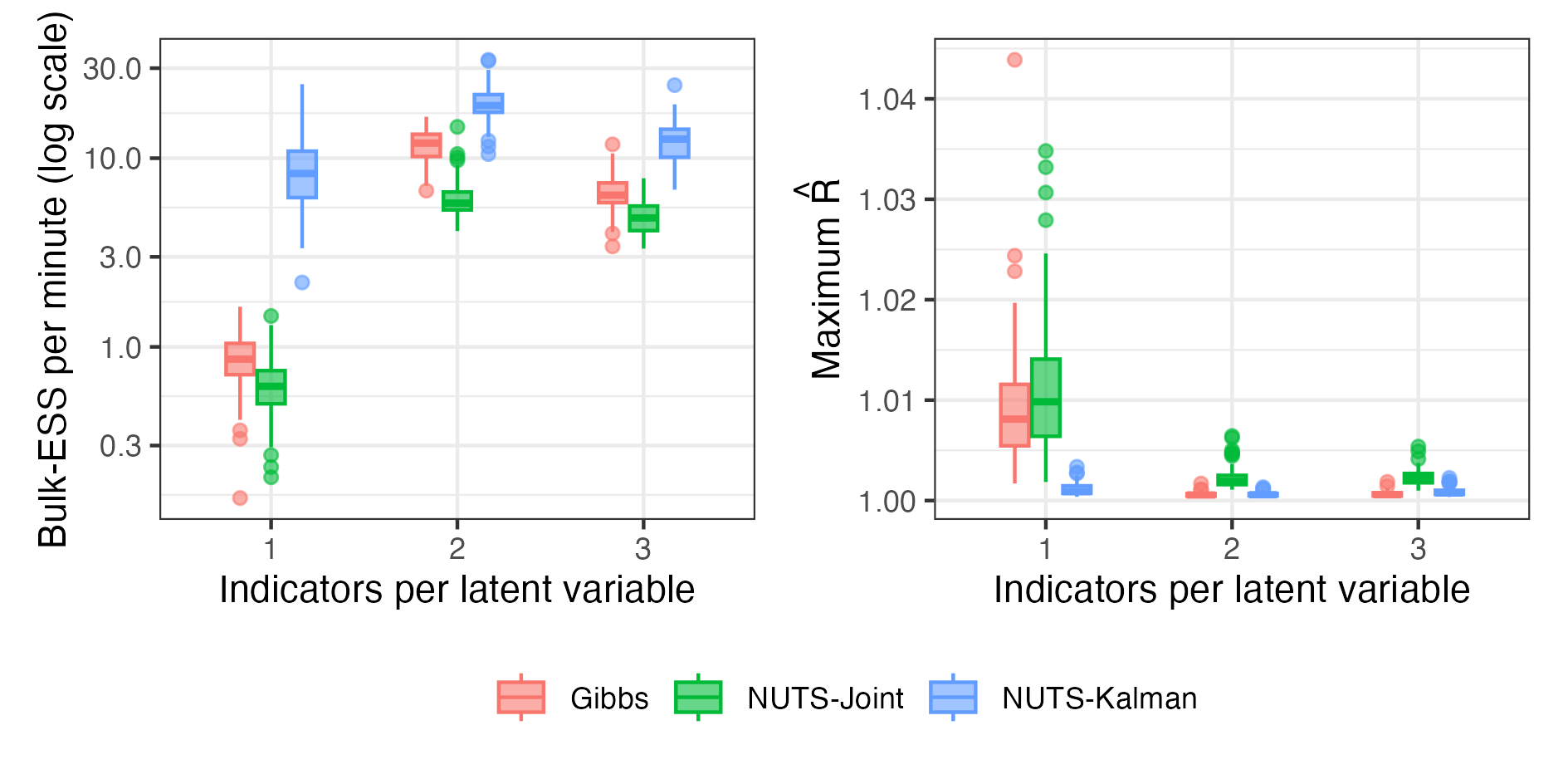}
    \caption{The left box plot shows efficiency plotted against indicators per latent variable and the right plot shows $\hat{R}$ plotted against indicators per latent variable for the trivariate VAR(1) model}
    \label{fig:eg3-trivariate-efficiency}
\end{figure}

\subsection{Multilevel Latent AR(p) Model}
\label{sec:ARp_model}

In the next simulation, we extended the model from Section \ref{sec:LatentScalarAR1} to an AR($p$) model, with the lag $p$ taking values $p=1,2,3,4$. The within-level model is now
\begin{align*}
     & y_{1,it} = \eta_{1,it} + \epsilon_{1,it}, \quad \epsilon_{1,it} \sim \mathcal{N}(0, \sigma_{2,i}^{2})                     \\
     & \eta_{1,it} = \sum_{l=1}^{p}\phi_{2,li} \eta_{1,i,t-l} + \xi_{1,it}, \quad \xi_{1,it} \sim \mathcal{N}(0,\psi_{2,i}^{2}),
\end{align*}
while the remaining components as well as all other aspects related to the simulations were identical to Section \ref{sec:LatentScalarAR1} except for the data-generating parameter values of the autoregression coefficients, which had means $0.4$, $0.2$, $0.1$, and $0.05$ for lags $1,2,3,4$, respectively, and standard deviations $\tau_{\phi} = 0.05$. During data generation, we repeatedly redrew the individual-level coefficients until $\sum_{l=1}^{p} |\phi_{2,li}| < 0.95$, thereby ensuring that each participant's process was stationary. We simulated 100 Monte Carlo samples for each of four settings, with maximum lags $L$ equal to $1,2,3,4$. The number of participants $N=100$ and number of timepoints $T=50$ were fixed. Based on the pilot runs described in Online Resource 1, we set the simulation parameters of NUTS-Kalman and Metropolis-within-Gibbs identically to the previous sections, while for NUTS-Joint we set the target acceptance during warm-up to $0.99$ for all lags and the maximum treedepth to $10$ for $L=1,2,3$ and $15$ for lag $L=4$, to avoid divergent transitions after warm-up.

The left plot in Figure \ref{fig:eg4-latent-arp-efficiency} shows how the efficiency, measured by bulk-ESS per minute, depends on the lag. While NUTS-Kalman remains far more efficient than Metropolis-within-Gibbs across all conditions, its relative advantage decreases as the lag increases, being $61$, $33$, $19$, and $13$ times more efficient for $L=1, 2, 3$, and $4$, respectively. Compared to NUTS-Joint, NUTS-Kalman was $36$, $17$, $11$, and $11$ times more efficient for $L=1,2,3,4$, respectively. The right plot in Figure \ref{fig:eg4-latent-arp-efficiency} shows box plots of the maximum $\hat{R}$ across the parameters of interest: NUTS-Kalman clearly has satisfactory $\hat{R}$ values in all cases, while the other two algorithms occasionally have trouble converging. Note, however, that with a single exception, all algorithms have $\hat{R}$ below the more lenient $1.05$ threshold.

\begin{figure}
    \centering
    \includegraphics[width=\linewidth]{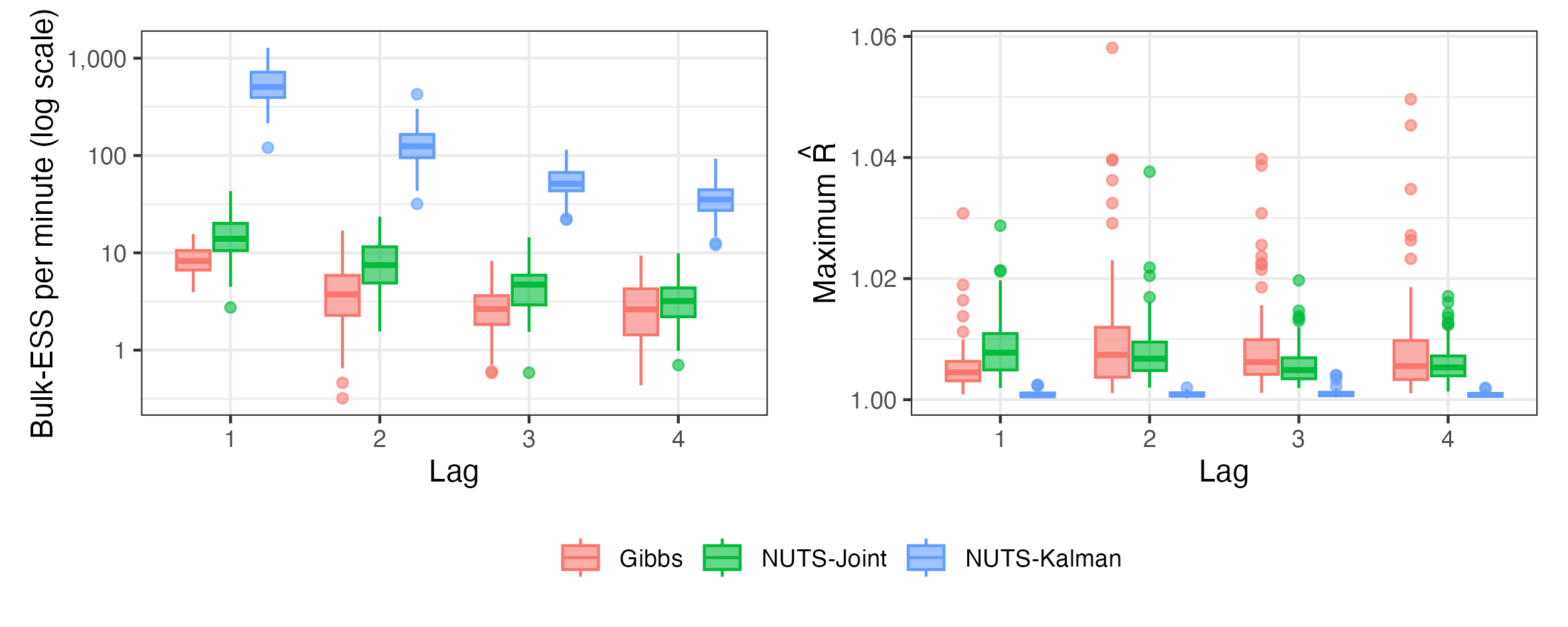}
    \caption{The left box plot shows efficiency plotted against lag, and the right plot shows $\hat{R}$ plotted against lag for the AR(p) model}
    \label{fig:eg4-latent-arp-efficiency}
\end{figure}

\subsection{Cross-Classified VAR(1) Model}
\label{sec:CrossClassifiedVAR1}

We finally considered a model inspired by the one proposed by \citet{mcneish_measurement_2021} for assessing measurement invariance in DSEM. The measurement $\bm{y}_{it}$ was four-dimensional. The first three elements of $\bm{y}_{it}$ are items measuring the latent trait perseverance, and the fourth element is a manifest variable measuring restraint. The response decomposition \eqref{eq:ResponseDecomposition} is $\bm{y}_{it} = \bm{y}_{1,it} + \bm{y}_{2,i} + \bm{y}_{3,t}$ and the between-timepoints model \eqref{eq:BetweenTimepointModel} is
\begin{equation}
    \begin{aligned}
        \bm{y}_{3,t}    & =
        \begin{bmatrix}
            \eta_{3,t1} & \eta_{3,t2} & \eta_{3,t3} & 0
        \end{bmatrix}^{T}                                                       \\
        \bm{\eta}_{3,t} & = \bm{\xi}_{3,t}, \quad \bm{\xi}_{3,t} \sim \mathcal{N}(\bm{0}, \bm{\Psi}_{3}),
    \end{aligned}
    \label{eq:eg5-between-timepoint-model}
\end{equation}
where $\bm{\eta}_{3,t}$ has size $V_{3} = 7$. The first four element of $\bm{\eta}_{3,t}$ are related to the means of the measurements and the last three elements are related to the factor loadings in \eqref{eq:eg5-within-model} below. The covariance matrix is $\bm{\Psi}_{3} = \text{diag}(\psi_{3,1},\psi_{3,2},\psi_{3,3},0, \psi_{3,5}, \psi_{3,6},\psi_{3,7})$, where the zero in the middle, along with the zero in the first line of \eqref{eq:eg5-between-timepoint-model}, implies that the intercept of restraint does not vary between timepoints. The between-participants model \eqref{eq:BetweenIndividualModel} is
\begin{equation}
    \begin{aligned}
        \bm{y}_{2,i}    & =
        \begin{bmatrix}
            \eta_{2,i1} & \eta_{2,i2} & \eta_{2,i3} & \eta_{2,i4}
        \end{bmatrix}^{T}                                                               \\
        \bm{\eta}_{2,i} & = \bm{\alpha}_{2} + \bm{\xi}_{2,i}, \quad \bm{\xi}_{2,i} \sim \mathcal{N}(\bm{0}, \bm{\Psi}_{2}),
    \end{aligned}
    \label{eq:eg5-between-model}
\end{equation}
where $\bm{\eta}_{2,i}$ has size $V_{2}=13$. The first four elements of $\bm{\eta}_{2,i}$ are related to the intercepts of the measured items. Note that this implies that the intercept of restraint is allowed to vary between participants. The covariance matrix is $\bm{\Psi}_{2} = \text{diag}(\psi_{2,1},\dots,\psi_{2,13})$. The within-level model is
\begin{equation}
    \begin{aligned}
        \bm{y}_{1,it}    & =
        \begin{bmatrix}
            \lambda_{1,it} & 0 \\
            \lambda_{2,it} & 0 \\
            \lambda_{3,it} & 0 \\
            0              & 1
        \end{bmatrix}
        \bm{\eta}_{1,it}
        + \bm{\epsilon}_{1,it} , \quad \bm{\epsilon}_{1,it} \sim \mathcal{N}(\bm{0}, \bm{\Sigma}_{1,i})                                              \\
        \bm{\eta}_{1,it} & = \bm{\Phi}_{1,i} \bm{\eta}_{1,i,t-1} + \bm{\xi}_{1,it}, \quad \bm{\xi}_{1,it} \sim \mathcal{N}(\bm{0}, \bm{\Psi}_{1,i}),
    \end{aligned}
    \label{eq:eg5-within-model}
\end{equation}
where $\bm{\Sigma}_{1,i} = \text{diag}(\sigma_{1,i}^{2}, \sigma_{1,i}^{2}, \sigma_{1,i}^{2}, 0)$, $\bm{\Psi}_{1,i} = \text{diag}(1, \psi_{1,i}^{2})$, and the autoregression matrix $\bm{\Phi}_{1,i}$ is allowed to vary between participants. The factor loadings in \eqref{eq:eg5-within-model} are given by $\lambda_{1,it} = \eta_{3,t5} + \eta_{2,i5}$, $\lambda_{2,it} = \eta_{3,t6} + \eta_{2,i6}$, and $\lambda_{3,it} = \eta_{3,t7} + \eta_{2,i7}$. To avoid sign indeterminacy, we constrained the elements of $\bm{\alpha}_{2}$ in \eqref{eq:eg5-between-model} corresponding to the population means of the factor loadings to be non-negative.

In this model, measurement invariance would imply that $\psi_{3,5}=\psi_{3,6}=\psi_{3,7}=0$. Note that the underlying assumption in the case of non-invariance is that the factor loadings for two adjacent timepoints $t$ and $t+1$ are no more similar than the factor loadings for two distant measurements, e.g., $t=1$ and $t=T$. If there is reason to suspect that the loadings have a time trend, a more realistic model could include a covariate $t$ in the second line of \eqref{eq:eg5-between-timepoint-model}. However, since our goal in this paper is to compare algorithmic efficiency, we chose to use the approach which seems to be standard for cross-classified DSEM \citep{asparouhov_dynamic_2018,mcneish_measurement_2021}.

We simulated 100 Monte Carlo samples with $N=100$ and $T=150$. Based on the pilot runs described in Online Resource 1, we set the simulation parameters of NUTS-Kalman identically to the previous sections. For NUTS-Joint we set the target acceptance during warm-up to $0.95$ and the maximum treedepth to $10$. Again, for both NUTS based algorithms we used a warm-up of 1,000 iterations and ran 10,000 post-warm-up iterations in each of four parallel chains, yielding 40,000 post-warm-up iterations. For Metropolis-within-Gibbs we discarded the first 20,000 iterations after the 2,000 iteration long adaptation phase as burn-in, based on inspection of the trace plots from the pilot, all of which can be found in our OSF repository. We then ran 100,000 post-burn-in iterations in each of four parallel chains, again yielding 400,000 post-burn-in samples. We implemented a Kalman filter processing one element of $\bm{y}_{1,it}$ at a time \citep{koopmanFastFilteringSmoothing2000}. 

NUTS-Kalman and Gibbs had maximum $\hat{R}$ values below $1.01$ in the majority of samples, whereas NUTS-Joint had a large proportion of runs with poor convergence and $\hat{R}>1.10$ (see Figure S15 of Online Resource 4). In other words, the model considered in this section was particularly challenging for NUTS-Joint, and this algorithm should not be recommended regardless of its efficiency. Figure \ref{fig:eg5-efficiency} shows box plots of efficiency in terms of bulk-ESS and tail-ESS for the three algorithms. Using median values across simulations, NUTS-Kalman was $2.9$ times more efficient than Metropolis-within-Gibbs in terms of bulk-ESS per minute and $3.3$ times more efficient in terms of tail-ESS per minute.

\begin{figure}
    \centering
    \includegraphics[width=\linewidth]{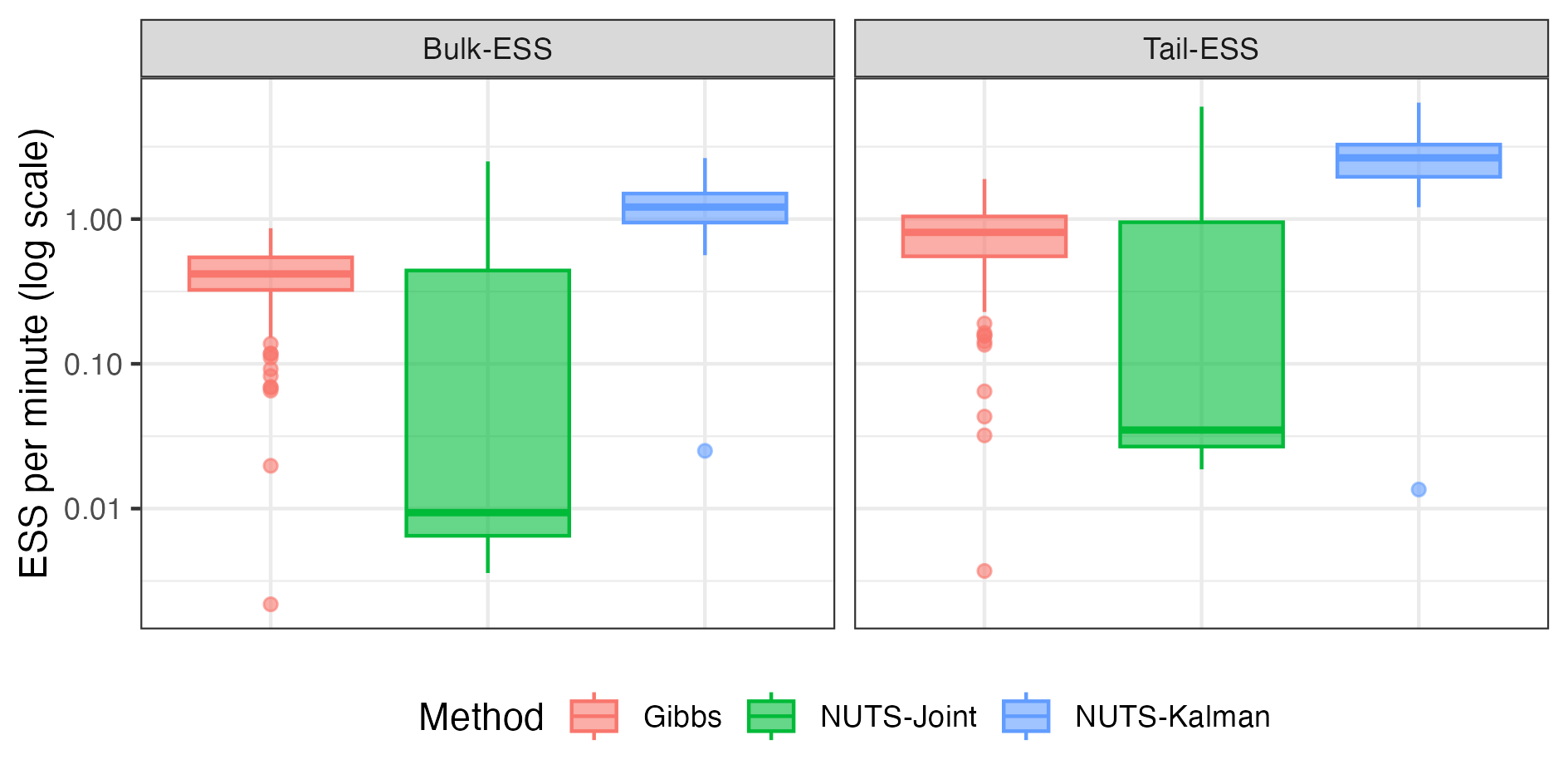}
    \caption{Efficiency measured in bulk-ESS per minute (left) and tail-ESS per minute (right) for the cross-classified VAR(1) model}
    \label{fig:eg5-efficiency}
\end{figure}

\section{Example Application}
\label{sec:Application}

The simulation experiments compared the algorithms under controlled conditions in which the fitted model was identical to the data-generating process. To examine how they behave on real data, where the model is necessarily an approximation, we analyzed the affect time series collected by \citet{rowlandMindfulnessAffectNetworkDensity2020}, openly available as dataset \textsc{0006\_rowland} of the openESM database \citep{siepeIntroducingOpenESMDatabase2025}. All analyses reported in this section used the same computing cluster as the simulations, with four parallel chains for each algorithm.

\subsection{Data and Model}
\label{sec:ApplicationModel}

In the study of \citet{rowlandMindfulnessAffectNetworkDensity2020}, 125 undergraduate students\footnote{The study recruited 137 students, of which 11 withdrew and one was removed due to noncompliance \citep{rowlandMindfulnessAffectNetworkDensity2020}.} reported their momentary affect six times per day for 40 consecutive days, prompted at random times during the day with consecutive prompts separated by 45 to 200 minutes (mean 103 minutes). At each prompt, participants rated eight affective states on a visual analogue scale from 0 (\textit{not at all}) to 100 (\textit{very much}): happy, excited, relaxed, and satisfied, reflecting positive affect, and angry, anxious, depressed, and sad, reflecting negative affect. The harmonized data form a complete grid of $T = 40 \times 6 = 240$ potential timepoints per participant, of which 28.1\% are missing and per-participant missingness ranges from 6\% to 68\%. For each participant-timepoint, either all eight responses were missing or none were missing. These data are typical of the intensive longitudinal designs DSEM is intended for, and they are messy in ways the simulated data were not: the responses are bounded with strong floor effects (25--34\% of responses on the negative affect items are exactly 0), the within-day prompt spacing is irregular, and consecutive timepoints are separated by anywhere from 45 minutes to an overnight gap of at least 14 hours. Descriptive plots of the data are shown in Online Resource 5.

Following common applied practice---see,  \citet{haslbeckMultimodalitySkewnessEmotion2023} for a discussion of skewness, \citet{dehaan-rietdijkDiscreteVsContinuousTime2017} for a discussion of discrete versus continuous modeling, and \citet{berkhoutLetSleepingDogs2025} for a discussion of the night gap---we fit a discrete-time model treating the beeps as an equally spaced series with conditionally Gaussian responses. Since all three algorithms target exactly the same posterior distribution, this misspecification does not affect the validity of the comparison, but it does produce a challenging posterior geometry.

We fitted a bivariate latent VAR(1) DSEM, using the response decomposition \eqref{eq:ResponseDecomposition} with $\bm{y}_{it} = \bm{y}_{1,it} + \bm{y}_{2,i}$ and no between-timepoint level. The within-level model \eqref{eq:WithinLevelModel} has $U = 8$ indicators, rescaled to $0$--$10$ for numerical stability, measuring $V_{1} = 2$ latent factors,
\begin{equation}
    \begin{aligned}
        \bm{y}_{1,it}    & = \bm{\Lambda} \bm{\eta}_{1,it} + \bm{\epsilon}_{1,it}, \quad \bm{\epsilon}_{1,it} \sim \mathcal{N}\left(\bm{0}, \bm{\Sigma}_{1,i}\right)                    \\
        \bm{\eta}_{1,it} & = \bm{\Phi}_{i} \bm{\eta}_{1,i,t-1} + \bm{\xi}_{1,it}, \quad \bm{\xi}_{1,it} \sim \mathcal{N}\left(\bm{0}, \bm{\Psi}_{1,i}\right),
    \end{aligned}
    \label{eq:ApplicationModel}
\end{equation}
where the first four items load on the positive affect factor and the last four on the negative affect factor in an independent cluster structure, with the loading of the first item in each cluster fixed to 1 and the remaining six loadings free, constrained positive, and invariant across participants. The residual covariance matrix is $\bm{\Sigma}_{1,i} = \text{diag}(\sigma_{1i}^{2}, \dots, \sigma_{8i}^{2})$ with $\sigma_{ui} = \exp(\gamma_{u} + \delta_{i})$, combining item-specific baseline scales $\gamma_{u}$ with a participant-specific shift $\delta_{i}$. The innovation covariance matrix
\begin{equation*}
    \bm{\Psi}_{1,i} =
    \begin{bmatrix}
        \psi_{1i}^{2} & \rho_{i} \psi_{1i} \psi_{2i} \\
        \rho_{i} \psi_{1i} \psi_{2i} & \psi_{2i}^{2}
    \end{bmatrix}
\end{equation*}
allows correlated positive and negative affect innovations. The between-participant latent vector $\bm{\eta}_{2,i}$ thus has dimension $V_{2} = 16$, comprising the eight item intercepts $\nu_{ui}$, the four elements of $\bm{\Phi}_{i}$ on the $\text{arctanh}$ scale, $\log \psi_{1i}$ and $\log \psi_{2i}$, $\text{arctanh}(\rho_{i})$, and $\delta_{i}$, each normally distributed around its population mean with its own variance, exactly as in the models of Section \ref{sec:Simulations}. Priors follow the conventions of the simulation experiments: $\mathcal{N}(0, 5^{2})$ for the population-level item intercepts, $\mathcal{N}(0, 1)$ for the population means of the transformed dynamic parameters and for $\gamma_{u}$, half-Cauchy priors with scale 2 for the random-intercept standard deviations and scale 1 for the standard deviations of the transformed autoregressive parameters, half-normal $\mathcal{N}_{+}(0, 1)$ priors for the remaining random-effect standard deviations, $\mathcal{N}(1, 2^{2})$ constrained positive for the free loadings, and $\bm{\eta}_{1,i1} \sim \mathcal{N}(\bm{0}, \bm{I}_{2})$ for the initial state.

The three algorithms were implemented exactly as in Section \ref{sec:Simulations}, with identical priors, constraints, and non-centered parameterizations, and hence target the same posterior. For NUTS-Kalman we collapsed the eight indicators onto the two factors following \citet{jungbackerLikelihoodbasedDynamicFactor2015}, here extended to item-specific residual variances, so each filter step requires only $2 \times 2$ operations. Missing beeps were handled by prediction-only steps in the Kalman filter, marginalizing them exactly. For NUTS-Joint the measurement densities at missing beeps are omitted while the latent states are still sampled on the full grid. The Metropolis-within-Gibbs implementation follows \citet[p.~365]{asparouhov_dynamic_2018} and treats the missing responses by data augmentation, sampling the $8 \times 8{,}430 = 67{,}440$ missing values as additional nodes in every iteration. All three implementations still target the same posterior for the model parameters.

\subsection{Results}
\label{sec:ApplicationResults}

For the production runs, NUTS-Kalman and NUTS-Joint were run for 1{,}000 warm-up and 2{,}000 sampling iterations in each of four chains. Based on initial pilot runs, both used a maximum tree depth of 10, but NUTS-Joint had the target acceptance during warm-up set to $0.99$ while NUTS-Kalman had it set to $0.80$. For Metropolis-within-Gibbs, we ran as many iterations as we could within the time limit of the cluster. To this end, after an adaptation phase of 2,000 iterations and discarding the first 5,000 iterations as burn-in, we saved the output after 500 iterations were completed in each of the four chains. After saving, we assessed whether an additional 500 iterations could be safely run within an 88 hour time limit; if yes, 500 more iterations were run, and if no, the job was finished. This strategy yielded 44,000 post-burn-in iterations in each chain and hence a total of 176,000 posterior samples. Total wall times were $22$ hours for NUTS-Kalman, 62 hours for NUTS-Joint, and 87 hours for Metropolis-within-Gibbs. Convergence according to $\hat{R} < 1.01$ over the 45 population-level parameters of interest was reached by NUTS-Kalman, but neither Metropolis-within-Gibbs (max $\hat{R} = 1.36$) nor NUTS-Joint (max $\hat{R} = 2.37$). Despite the troubling convergence diagnostics, the posterior means and standard deviations from the three algorithms agreed closely, consistent with all three targeting the same posterior. The correlation with posterior means of NUTS-Kalman was $1.000$ for both Metropolis-within-Gibbs and NUTS-Joint, while the correlation with posterior standard deviations was $0.991$ and $0.927$, respectively, consistent with the very poor diagnostics of NUTS-Joint. Figures S3 and S4 of Online Resource 5 show further details.

\begin{figure}
    \centering
    \includegraphics[width=\linewidth]{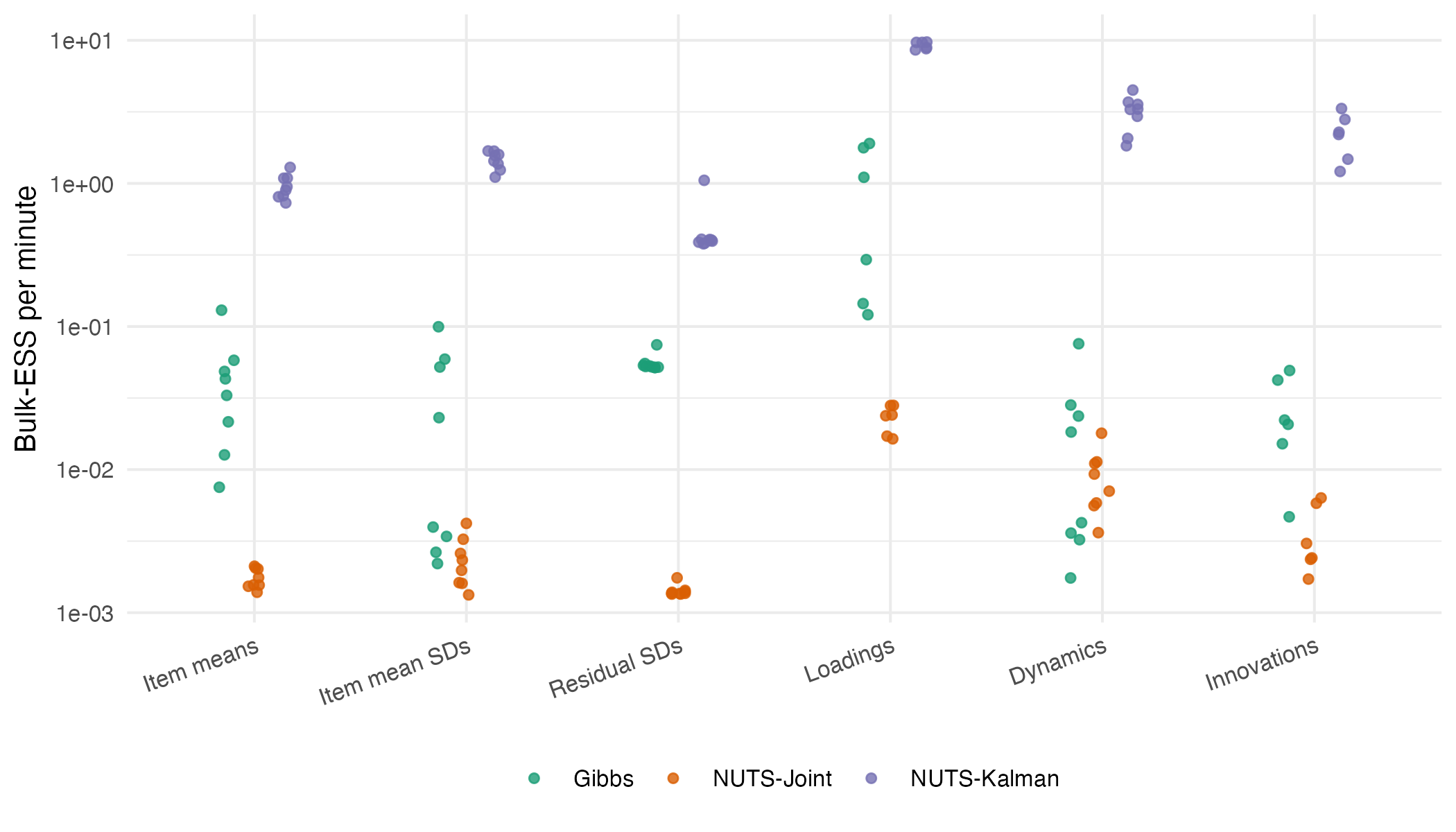}
    \caption{Sampling efficiency (bulk-ESS per minute, log scale) for the 45 population-level parameters of the application model, by parameter group and algorithm}
    \label{fig:application-efficiency}
\end{figure}

Figure \ref{fig:application-efficiency} shows the efficiency of the three algorithms for the parameters of interest. Compared to the simulation experiments, the efficiency gain of NUTS-Kalman was much more pronounced on real data, likely due to the presence of missing values. In terms of bulk-ESS per time unit, NUTS-Kalman was 217 times more efficient than Metropolis-within-Gibbs and 284 times more efficient than NUTS-Joint. Table \ref{tab:application-time} translates this into the time required to reach standard ESS targets \citep{veenman_bayesian_2024, zitzmann_going_2019, kruschke_doing_2015}: NUTS-Kalman is projected to reach ESS of 1{,}000 for all parameters of interest in 44 hours, whereas the other two algorithms are projected to need more than a year. The numbers in Table \ref{tab:application-time} should be interpreted with caution, however, since they are projections based on the wall time and the minimum bulk-ESS: the latter was 492 for NUTS-Kalman over the 8,000 post-warm-up iterations but only $9.2$ for Metropolis-within-Gibbs over 176,000 iterations and $4.9$ for NUTS-Joint over 8,000 iterations. Despite this uncertainty, the results clearly convey that among the three candidates, NUTS-Kalman is the only practically feasible algorithm for this application.

\begin{table}
    \centering
    \caption{\label{tab:application-time}Time (hours) to reach target bulk-ESS for the application model, based on the minimum ESS rate over the 45 population-level parameters of interest. Note that the numbers for Gibbs and NUTS-Joint are very uncertain projections}
    \centering
    \begin{tabular}[t]{lrrr}
        \toprule
        Method & Target=400 & Target=1,000 & Target=10,000\\
        \midrule
            Gibbs & 3814 & 9535 & 95346\\
            NUTS-Joint & 5006 & 12516 & 125158\\
            NUTS-Kalman & 18 & 44 & 440\\            
        \bottomrule
    \end{tabular}
\end{table}

Substantively, the population-level estimates from NUTS-Kalman were in line with what is known about momentary affect dynamics: moderate inertia of both factors (posterior means $0.627$ for positive and $0.800$ for negative affect), small cross-lagged effects ($0.09$ and $0.10$), and a substantial negative innovation correlation (posterior mean $-0.93$, 95\% credible interval $[-0.95, -0.90]$). We emphasize that, given the floor effects and the equal-interval treatment of the beep grid, these parameters should be interpreted as rough estimates from a misspecified model.

\begin{figure}
    \centering
    \includegraphics[width=\linewidth]{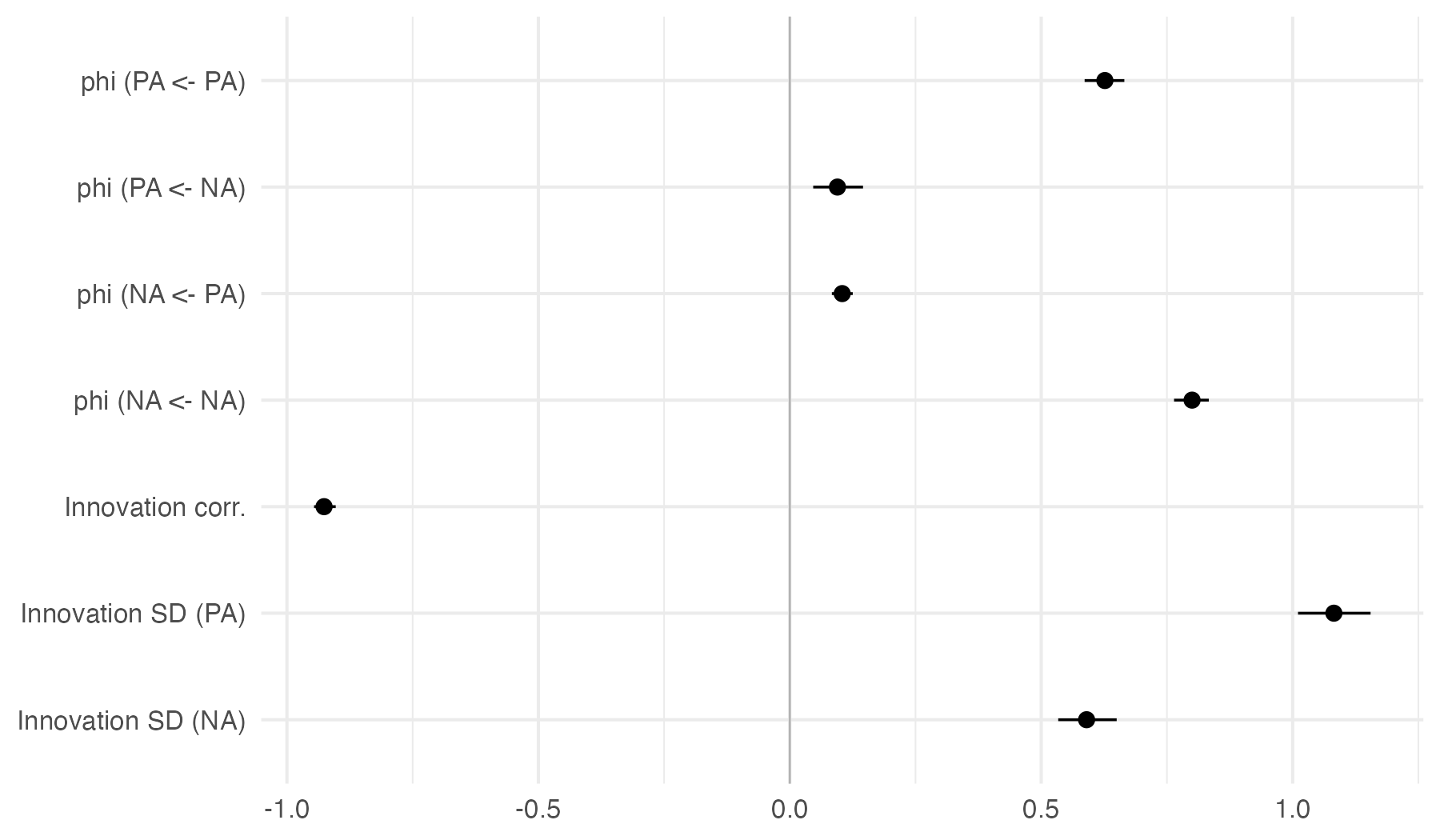}
    \caption{Posterior means and 95\% credible intervals for main parameters of interest in the application example. NA denotes negative affect and PA denotes positive affect. 'phi (PA <- NA)' is the cross-lagged effect of NA on PA, and analogously for the other rows}
    \label{fig:application-estimates}
\end{figure}

\section{Discussion}
\label{sec:Discussion}

In this paper, we have shown that the within-level model of a DSEM can be reformulated as a state space model, and proposed the NUTS-Kalman algorithm which utilizes this reformulation to replace sampling of within-level latent variables with matrix computation using the Kalman filter. With this algorithm, the number of parameters to be sampled per MCMC iteration is reduced from $\mathcal{O}(N \cdot T)$ to $\mathcal{O}(N+T)$. The simulation experiments for five different model scenarios confirmed that the theoretical benefits of NUTS-Kalman also translate to improved efficiency in practice.

The most closely related previous work is \citet{Driver2018}, who estimated a hierarchical continuous-time dynamic model \citep{oudContinuousTimeState2000} by combining a continuous-discrete Kalman filter \citep{kalman_new_1961} with Hamiltonian Monte Carlo in \textsc{Stan}. The filtering engine is thus similar to ours, but the mapping problem is not. Their state vector contains the latent processes $\bm{\eta}(t)$ only, measured contemporaneously, and the discrete-time transition and innovation matrices follow directly from the matrix exponential $\exp({\bm{A} \Delta t})$; no state augmentation is required. The DSEM framework of \citet{asparouhov_dynamic_2018}, in contrast, admits arbitrary lags $L \geq 1$ in both the measurement and the dynamic equation, regressions on lagged manifest variables, and contemporaneous structural regressions. Obtaining a Markovian representation therefore requires an augmented state stacking the lagged histories of both the latent and the manifest variables (Theorem \ref{th:Theorem1}), which is our main theoretical contribution. DSEM further adds a between-timepoint level with no counterpart in hierarchical CTSEM, so the parameters remaining after marginalization differ: we sample $\bm{\Theta}$, $\bm{\eta}_{2}$, and $\bm{\eta}_{3}$, whereas \citet{Driver2018} sample population parameters and subject-specific deviations only.

NUTS-Kalman is particularly useful when the observed indicators are relatively noisy measurements of the underlying latent variables. We can see this in Figure \ref{fig:eg3-trivariate-efficiency}, which shows the results for the trivariate VAR(1) model considered in Section \ref{sec:TrivariateAR1}. There, the advantage of NUTS-Kalman over Metropolis-within-Gibbs was $9.6$ times with a single indicator per latent variable, but only $1.6$ and $2.0$ times with two and three indicators. Similarly, in the model considered in Section \ref{sec:LatentScalarAR1} with a single latent variable measured by a single indicator the relative advantage of NUTS-Kalman was much more pronounced than in the model of Section \ref{sec:MultiIndicatorAR1} with three indicators measuring a single latent variable. A statistical interpretation of this is as follows: when the indicators precisely measure the latent variables, their conditional autocorrelation given the data is relatively low, and the Gibbs sampler loses comparatively little by updating them one at a time. In contrast, when the indicators yield imprecise information about the latent variables, conditioning on the data does not reduce autocorrelation much. This geometry is challenging for the Gibbs sampler \citep{parkImprovingGibbsSampler2022}, and is the prototypical example in which Hamiltonian Monte Carlo excels \citep{hoffman_no_2014}. It does not follow, however, that Hamiltonian Monte Carlo alone is sufficient: NUTS-Joint operates on the same geometry with the same sampler, and was the least efficient of the three algorithms in the more complex settings.

The example application in Section \ref{sec:Application} shows a case in which both our Metropolis-within-Gibbs implementation and NUTS-Joint were so inefficient that we conclude that they are not practically feasible. In terms of sampling efficiency they were 217 and 284 times less efficient than NUTS-Kalman. In addition to the challenging posterior geometry, we hypothesize that the main driving factor behind this is the presence of 28.1\% missing responses, which the Kalman filter handles very efficiently, in contrast to the other two algorithms which rely on brute-force sampling.

We have only focused on normally distributed responses, but categorical variables can also be handled with a latent response formulation \citep{albertBayesianAnalysisBinary1993}. However, the latent responses cannot be integrated out by the Kalman filter, leaving the number of parameters to be sampled back at $\mathcal{O}(N\cdot T)$. With a non-identity link function (e.g., probit, logit), the within-level model \eqref{eq:WithinLevelModel} becomes non-linear. By reformulating it as a non-linear state space model, approximate filtering methods \citep[Ch. 10]{durbin_time_2012} or Laplace approximations \citep{margossianHamiltonianMonteCarlo2020} could be used to analytically integrate over all latent variables. Such an algorithm would not target the true posterior, however. Particle filters, on the other hand, are widely used for sampling from non-linear state space models \citep{naessethElementsSequentialMonte2019}, but cannot be combined with Hamiltonian Monte Carlo due to their stochastic nature \citep{chenStochasticGradientHamiltonian2014}. Pseudo-marginal Hamiltonian Monte Carlo \citep{alenlovPseudoMarginalHamiltonianMonte2021} might be an alternative. We leave this as an interesting problem for further study.

Our framework also supports DSEMs with certain types of nonlinear terms. Theorem \ref{th:Theorem1} does not require the model to be linear, but that it be linear and Gaussian in the within-level latent variables conditionally on everything else that is sampled. Since $\bm{\Theta}$, $\bm{\eta}_{2}$, and $\bm{\eta}_{3}$ are drawn by NUTS and hence fixed when the filter runs, they may enter the system matrices arbitrarily nonlinearly. The models of Section \ref{sec:Simulations} already rely on this, as the autoregressive coefficients and the noise standard deviations are nonlinear transformations of between-level latent variables. It also means that semiparametric extensions \citep{sorensenModelingCyclesTrends2025, mccormickSemiparametricTimeVaryingParameters2026} remain within the framework. In contrast, interactions between within-level latent variables, as studied by \citet{holtmannModelingWithinlevelLatent2025} fall outside the scope of the present algorithm, and estimating them requires sampling the within-level latent variables.

Another problem of high practical relevance is to scale DSEM to larger data. For example, \citet{juddTrainingSpatialCognition2021} analyzed data from a mathematical learning app in which 17,648 children had completed on average 5,077 trials each. In these cases, approximate methods like Pathfinder \citep{zhangPathfinderParallelQuasiNewton2022} and automatic differentiation variational inference \citep{kucukelbirAutomaticDifferentiationVariational2017} will likely be necessary to handle the high-dimensional parameter space. However, the efficiency of these approximations still depends on the underlying Kalman filter evaluation. For large $N$, frameworks that utilize just-in-time (JIT) compiled vectorization (e.g., \textsc{NumPyro} \citep{phanComposableEffectsFlexible2019}) can broadcast the filter across participants simultaneously. For large $T$, parallel Kalman filter algorithms \citep{sarkkaTemporalParallelizationBayesian2021} that scale logarithmically with the number of timepoints by utilizing graphical processing units offer a promising path to making models of this magnitude computationally feasible.

Finally, the NUTS-Kalman algorithm is not limited to a single software environment; NUTS \citep{hoffman_no_2014} is now the standard in major probabilistic programming frameworks beyond \textsc{Stan}, including \textsc{PyMC} \citep{abril-plaPyMCModernComprehensive2023}, \textsc{NumPyro}, and \text{Turing.jl} \citep{fjeldeTuringjlGeneralPurposeProbabilistic2025}. This ensures that our proposed method can be implemented within the ecosystem that best suits a researcher's needs.

\section*{Data Availability}
Complete source code for reproducing all results in this study is available in our OSF
repository, \url{https://osf.io/p754m}.

\printbibliography

\appendix
\numberwithin{equation}{section}

\section{Proof of Theorem \ref{th:Theorem1}}
\label{app:Proof}

\begin{definition}[Strictly Lagged Polynomial Matrices]
    \label{def:strictly-lagged}
    For any polynomial matrix $\bm{P}(L) = \sum_{l=0}^{L} \bm{P}_{l} L^{l}$, we define the strictly lagged polynomial matrix $\bm{P}^{*}(L)$ as the polynomial obtained by removing the contemporaneous (zero-lag) coefficient $\bm{P}_{0}$:
    \begin{equation}
        \bm{P}^{*}(L) = \bm{P}(L) - \bm{P}_{0} = \sum_{l=1}^{L} \bm{P}_{l} L^{l}.
    \end{equation}
\end{definition}

\subsection{Reformulating the Within-Level Model}

Isolating the lag-$0$ components on the left-hand side, the coupled within-level model \eqref{eq:WithinLevelModel} can be rewritten as
\begin{align}
    \label{eq:y_separated}
    \left(\bm{I} - \bm{R}_{0it}\right) \bm{y}_{1,it} - \bm{\Lambda}_{1,0it} \bm{\eta}_{1,it} & = \bm{\nu}_{1,it} + \bm{K}_{1,it} \bm{X}_{1,it} + \bm{\Lambda}^{*}_{1,it}(L) \bm{\eta}_{1,it} + \bm{R}^{*}_{it}(L) \bm{y}_{1,it} + \bm{\epsilon}_{1,it} \\
    \label{eq:eta_separated}
    \left(\bm{I} - \bm{B}_{1,0it}\right) \bm{\eta}_{1,it} - \bm{Q}_{0it} \bm{y}_{1,it}       & = \bm{\alpha}_{1,it} + \bm{\Gamma}_{1,it} \bm{X}_{1,it} + \bm{B}^{*}_{1,it}(L) \bm{\eta}_{1,it} + \bm{Q}^{*}_{it}(L) \bm{y}_{1,it} + \bm{\xi}_{1,it}.
\end{align}
Let $\bm{A}_{0it} = (\bm{I} - \bm{R}_{0it})^{-1}$ and $\bm{\Xi}_{0it} = (\bm{I} - \bm{B}_{1,0it})^{-1}$, and note that these matrix inverses exist because we have assumed that $\bm{R}_{0it}$ and $\bm{B}_{1,0it}$ are strictly lower triangular. We assemble the left-hand sides of \eqref{eq:y_separated} and \eqref{eq:eta_separated} into a $2 \times 2$ block matrix structure and compute its inverse to explicitly isolate the contemporaneous states,
\begin{equation}
    \label{eq:JointSystem}
    \begin{bmatrix} \bm{\eta}_{1,it} \\ \bm{y}_{1,it} \end{bmatrix}
    =
    \begin{bmatrix} \bm{\Xi}_{0it}^{-1} & -\bm{Q}_{0it} \\ -\bm{\Lambda}_{1,0it} & \bm{A}_{0it}^{-1} \end{bmatrix}^{-1}
    \begin{bmatrix} \text{RHS}_{\eta, it} \\ \text{RHS}_{y, it} \end{bmatrix},
\end{equation}
where $\text{RHS}_{\eta, it}$ and $\text{RHS}_{y, it}$ represent the respective right-hand sides of \eqref{eq:eta_separated} and \eqref{eq:y_separated}. Block-matrix inversion via the Schur complement yields
\begin{equation*}
    \begin{bmatrix}
        \bm{\Xi}_{0it}^{-1}   & -\bm{Q}_{0it}     \\
        -\bm{\Lambda}_{1,0it} & \bm{A}_{0it}^{-1}
    \end{bmatrix}^{-1}
    = \begin{bmatrix}
        \bm{M}_{1,it}                                     & \bm{M}_{1,it} \bm{Q}_{0it} \bm{A}_{0it} \\
        \bm{N}_{1,it} \bm{\Lambda}_{1,0it} \bm{\Xi}_{0it} & \bm{N}_{1,it}
    \end{bmatrix},
\end{equation*}
where
\begin{align}
    \bm{M}_{1,it} & = \left\{ \bm{I} - \bm{\Xi}_{0it} \bm{Q}_{0it} \bm{A}_{0it} \bm{\Lambda}_{1,0it} \right\}^{-1} \bm{\Xi}_{0it} \label{eq:M1it} \\
    \bm{N}_{1,it} & = \left\{ \bm{I} - \bm{A}_{0it} \bm{\Lambda}_{1,0it} \bm{\Xi}_{0it} \bm{Q}_{0it}  \right\}^{-1} \bm{A}_{0it}.
    \label{eq:N1it}
\end{align}

\subsection{Finding Terms in State Space Representation}

The state-space transition \eqref{eq:StateSpaceModelTheorem1} maps variables at $t$ to $t+1$. By advancing the indices of our decoupled system \eqref{eq:JointSystem} to $t+1$, we obtain the exact definitions for the transition parameters.

First, distributing the block-inverse across the historical states in the right-hand side directly yields the composite polynomials. For the structural update equation ($\bm{\eta}_{1,i,t+1}$), we identify
\begin{equation}
    \begin{aligned}
        \bm{\mathcal{P}}^{\eta}_{i,t+1}(L) & = \bm{B}^{*}_{1,i,t+1}(L) + \bm{Q}_{0,i,t+1} \bm{A}_{0,i,t+1} \bm{\Lambda}^{*}_{1,i,t+1}(L) \\
        \bm{\mathcal{P}}^{y}_{i,t+1}(L)    & = \bm{Q}^{*}_{i,t+1}(L) + \bm{Q}_{0,i,t+1} \bm{A}_{0,i,t+1} \bm{R}^{*}_{i,t+1}(L).
    \end{aligned}
    \label{eq:P_poly}
\end{equation}
For the measurement update equation ($\bm{y}_{1,i,t+1}$), grouping the lagged state elements identifies
\begin{equation}
    \begin{aligned}
        \bm{\mathcal{Q}}^{\eta}_{i,t+1}(L) & = \bm{\Lambda}^{*}_{1,i,t+1}(L) + \bm{\Lambda}_{1,0,i,t+1} \bm{\Xi}_{0,i,t+1} \bm{B}^{*}_{1,i,t+1}(L) \\
        \bm{\mathcal{Q}}^{y}_{i,t+1}(L)    & = \bm{R}^{*}_{i,t+1}(L) + \bm{\Lambda}_{1,0,i,t+1} \bm{\Xi}_{0,i,t+1} \bm{Q}^{*}_{i,t+1}(L).
    \end{aligned}
    \label{eq:Q_poly}
\end{equation}
Extracting the $k$-th order coefficients generates the transition sub-blocks $\bm{T}_{it}^{(1,1)}$, $\bm{T}_{it}^{(1,2)}$, $\bm{T}_{it}^{(3,1)}$, and $\bm{T}_{it}^{(3,2)}$.

Next, distributing the block-inverse matrices across the exogenous deterministic variables yields the intercepts
\begin{align*}
    \bm{c}_{it}^{(1)} & = \bm{M}_{1,i,t+1} \left\{ \bm{\alpha}_{1,i,t+1} + \bm{\Gamma}_{1,i,t+1} \bm{X}_{1,i,t+1} + \bm{Q}_{0,i,t+1}\bm{A}_{0,i,t+1} \left( \bm{\nu}_{1,i,t+1} + \bm{K}_{1,i,t+1} \bm{X}_{1,i,t+1} \right) \right\}            \\
    \bm{c}_{it}^{(3)} & = \bm{N}_{1,i,t+1} \left\{ \bm{\nu}_{1,i,t+1} + \bm{K}_{1,i,t+1} \bm{X}_{1,i,t+1} + \bm{\Lambda}_{1,0,i,t+1}\bm{\Xi}_{0,i,t+1} \left( \bm{\alpha}_{1,i,t+1} + \bm{\Gamma}_{1,i,t+1} \bm{X}_{1,i,t+1} \right) \right\},
\end{align*}
yielding the full augmented intercept vector
\begin{equation}
    \label{eq:c_intercept}
    \bm{c}_{it} = \begin{bmatrix}
        \bm{c}_{it}^{(1)}{}^T & \bm{0}_{(L-1)V_1}{}^T & \bm{c}_{it}^{(3)}{}^T & \bm{0}_{(L-1)U}{}^T
    \end{bmatrix}^T.
\end{equation}

Finally, distributing the inverse matrices over the stochastic errors isolates the serially independent process noise components
\begin{align}
    \bm{w}_{it}^{(1)} & = \bm{M}_{1,i,t+1} \left( \bm{\xi}_{1,i,t+1} + \bm{Q}_{0,i,t+1} \bm{A}_{0,i,t+1} \bm{\epsilon}_{1,i,t+1} \right) \label{eq:w1_noise}            \\
    \bm{w}_{it}^{(3)} & = \bm{N}_{1,i,t+1} \left( \bm{\epsilon}_{1,i,t+1} + \bm{\Lambda}_{1,0,i,t+1} \bm{\Xi}_{0,i,t+1} \bm{\xi}_{1,i,t+1} \right). \label{eq:w3_noise}
\end{align}
By standard assumptions, $\bm{\xi}_{1,it} \sim \mathcal{N}(\bm{0}, \bm{\Psi}_{1,it})$ and $\bm{\epsilon}_{1,it} \sim \mathcal{N}(\bm{0}, \bm{\Sigma}_{1,it})$ are mutually independent. Taking the variances and cross-covariances of \eqref{eq:w1_noise} and \eqref{eq:w3_noise} generates the exact covariance components
\begin{align*}
    \bm{W}_{it}^{(1,1)} & = \bm{M}_{1,i,t+1} \left( \bm{\Psi}_{1,i,t+1} + \bm{Q}_{0,i,t+1} \bm{A}_{0,i,t+1} \bm{\Sigma}_{1,i,t+1} \bm{A}_{0,i,t+1}^{T} \bm{Q}_{0,i,t+1}^{T} \right) \bm{M}_{1,i,t+1}^{T}                     \\
    \bm{W}_{it}^{(3,3)} & = \bm{N}_{1,i,t+1} \left( \bm{\Sigma}_{1,i,t+1} + \bm{\Lambda}_{1,0,i,t+1} \bm{\Xi}_{0,i,t+1} \bm{\Psi}_{1,i,t+1} \bm{\Xi}_{0,i,t+1}^{T} \bm{\Lambda}_{1,0,i,t+1}^{T} \right) \bm{N}_{1,i,t+1}^{T} \\
    \bm{W}_{it}^{(1,3)} & = \bm{M}_{1,i,t+1} \left( \bm{\Psi}_{1,i,t+1} \bm{\Xi}_{0,i,t+1}^{T} \bm{\Lambda}_{1,0,i,t+1}^{T} + \bm{Q}_{0,i,t+1} \bm{A}_{0,i,t+1} \bm{\Sigma}_{1,i,t+1} \right) \bm{N}_{1,i,t+1}^{T},
\end{align*}
with $\bm{W}_{it}^{(3,1)} = \left( \bm{W}_{it}^{(1,3)} \right)^{T}$.

Because the strict state lag elements transition deterministically, they contain no process variance. This enforces a sparse, block-diagonal covariance mapping corresponding to our $L$-lag augmented state vector space
\begin{equation}
    \label{eq:w_process_noise}
    \bm{W}_{it} =
    \begin{bmatrix}
        \bm{W}_{it}^{(1,1)} & \bm{0} & \bm{W}_{it}^{(1,3)} & \bm{0} \\
        \bm{0}              & \bm{0} & \bm{0}              & \bm{0} \\
        \bm{W}_{it}^{(3,1)} & \bm{0} & \bm{W}_{it}^{(3,3)} & \bm{0} \\
        \bm{0}              & \bm{0} & \bm{0}              & \bm{0}
    \end{bmatrix}.
\end{equation}
This explicit formulation proves exact equivalence with the state space dynamics specified in Theorem \ref{th:Theorem1}. \qed

\onlineresource{1}

\section{Piloting of Simulation Experiments}

Comparing simulation experiments involving MCMC algorithms is difficult, due to the need to set the algorithm parameters in a way that is fair to each algorithm, and which does not require manual tuning for each Monte Carlo sample. To address this issue, we ran pilots for each of the simulation settings considered, with the aim of finding the parameters for each algorithm yielding satisfactory convergence rates while not being too conservative. 

For the Metropolis-within-Gibbs sampler implemented in \textsc{JAGS} \citep{plummer2003jags}, the main tuning parameters are the length of the adaptation phase, and the additional burn-in, if any. We assessed the convergence by studying trace plots of the parameters of interest. Exemplary trace plots are shown in the sections below, while complete trace plots for every Monte Carlo sample in the pilot study are available in our OSF repository.\footnote{\url{https://osf.io/p754m/overview?view_only=f99de4145ddf49c9a76f0e38b5e42223}}

For NUTS-Kalman and NUTS-Joint, both implemented in \textsc{Stan} \citep{carpenter_stan_2017}, there is potentially a larger number of parameters which can be tuned. We focused our pilots on the two most important ones: the target acceptance probability during warm-up and the maximum tree depth. For these NUTS based algorithms our definition of success was that there should be zero divergent transitions after warm-up. These plots are shown below.

\section{Results from the Pilots}

For each parameter configuration for each of the simulation experiments, we ran 10 Monte Carlo samples with exactly the same data-generating mechanism as in the final simulations. For the NUTS based algorithms, we varied the target acceptance probability during warm-up following the discrete values $0.80$, $0.95$, and $0.99$. The candidate tree depths were 10 and 15. The number of warm-up iterations was always 1,000 in each of 4 parallel chains. Since the goal in the pilot was to assess convergence, and not to obtain a posterior sample, we only ran the samplers for an additional 2,000 iterations after warm-up. For the Metropolis-within-Gibbs algorithm, we used an adaptation phase of 2,000, and ran only 10,000 post-adaptation iterations (30,000 for the model of Section 4.5) in each of 4 parallel chains, since again the goal was to study convergence, and not to obtain a posterior sample.

\subsection{Model 1: Univariate Latent AR(1), Section 4.1}

Figure \ref{r1:fig:eg1-divergent-transition} shows the mean number of divergent transitions after warm-up across the 10 Monte Carlo samples. Based on this, we set the target acceptance probability to 0.80 for NUTS-Kalman and to 0.95 for NUTS-Joint. Since the maximum tree depth seems to have a limited effect, we set it to 10, to obtain a more efficient sampler.

\begin{figure}
    \centering
    \includegraphics[width=.9\linewidth]{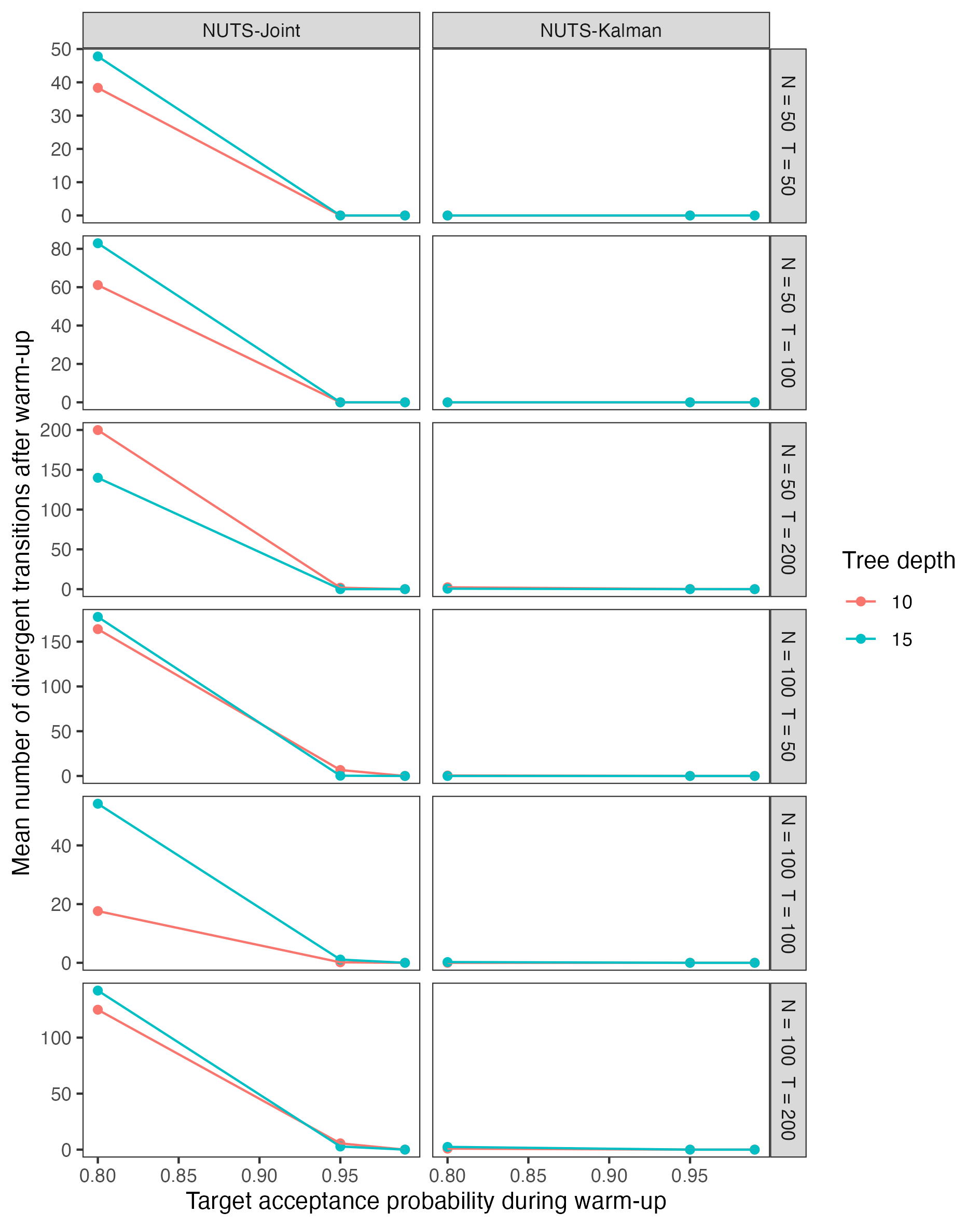}
    \caption{Average number of divergent transition for Model 1 pilot.}
    \label{r1:fig:eg1-divergent-transition}
\end{figure}

Figure \ref{r1:fig:eg1-trace-example} shows an example trace plot for the Metropolis-within-Gibbs sampler. After assessing several such trace plots, we concluded that the sampler seems to mix well after the adaptation phase of 2,000 iterations and set the burn-in to 0.

\begin{figure}
    \centering
    \includegraphics[width=\linewidth]{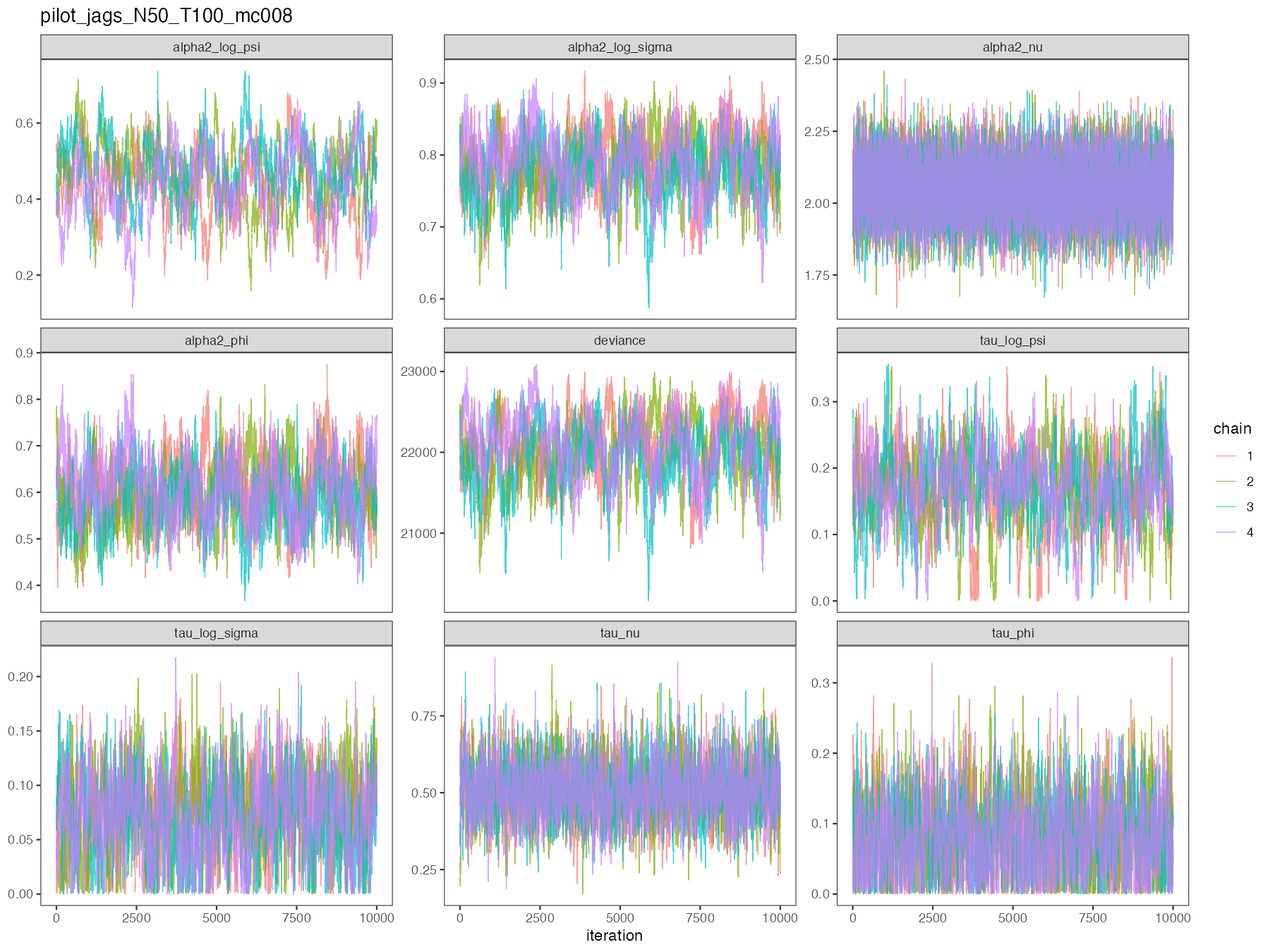}
    \caption{Post-adaptation trace plot for Monte Carlo sample 8 in the \textsc{JAGS} pilot for Model 1 with $N=50$ and $T=100$.}
    \label{r1:fig:eg1-trace-example}
\end{figure}

\subsection{Model 2: Three-Indicator AR(1), Section 4.2}

Figure \ref{r1:fig:eg2-divergent-transitions} shows the mean number of divergent transitions after warm-up across the 10 Monte Carlo samples. Based on this, we set the target acceptance probability to 0.80 for NUTS-Kalman and to 0.95 for NUTS-Joint. Since the maximum tree depth seems to have a limited effect, we set it to 10, to obtain a more efficient sampler.

\begin{figure}
    \centering
    \includegraphics[width=\linewidth]{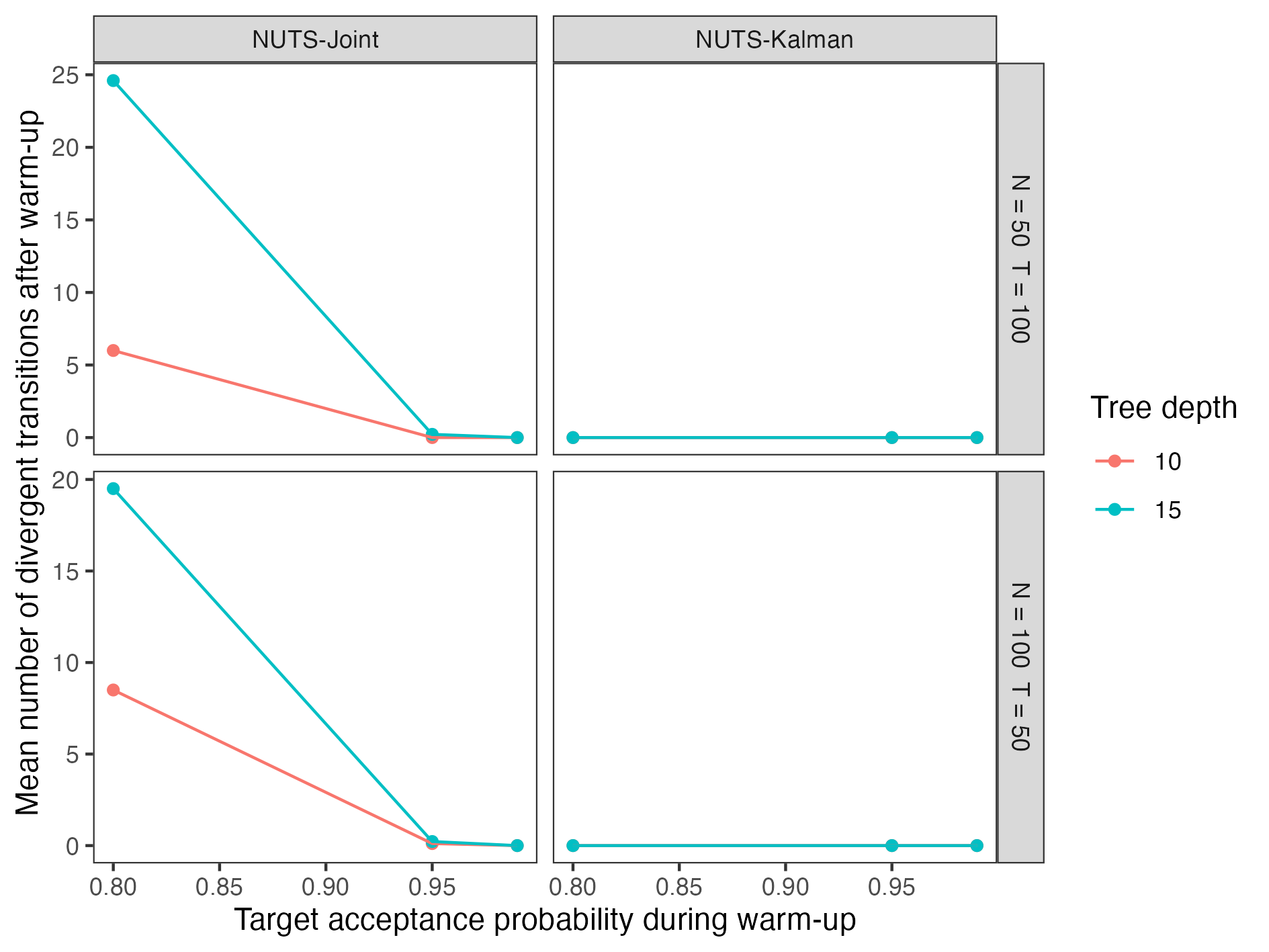}
    \caption{Average number of divergent transitions for Model 2 pilot.}
    \label{r1:fig:eg2-divergent-transitions}
\end{figure}

Figure \ref{r1:fig:eg2-trace-example} shows an example trace plot for the Metropolis-within-Gibbs sampler. After assessing several such trace plots, we concluded that the sampler seems to mix well after the adaptation phase of 2,000 iterations and set the burn-in to 0.

\begin{figure}
    \centering
    \includegraphics[width=\linewidth]{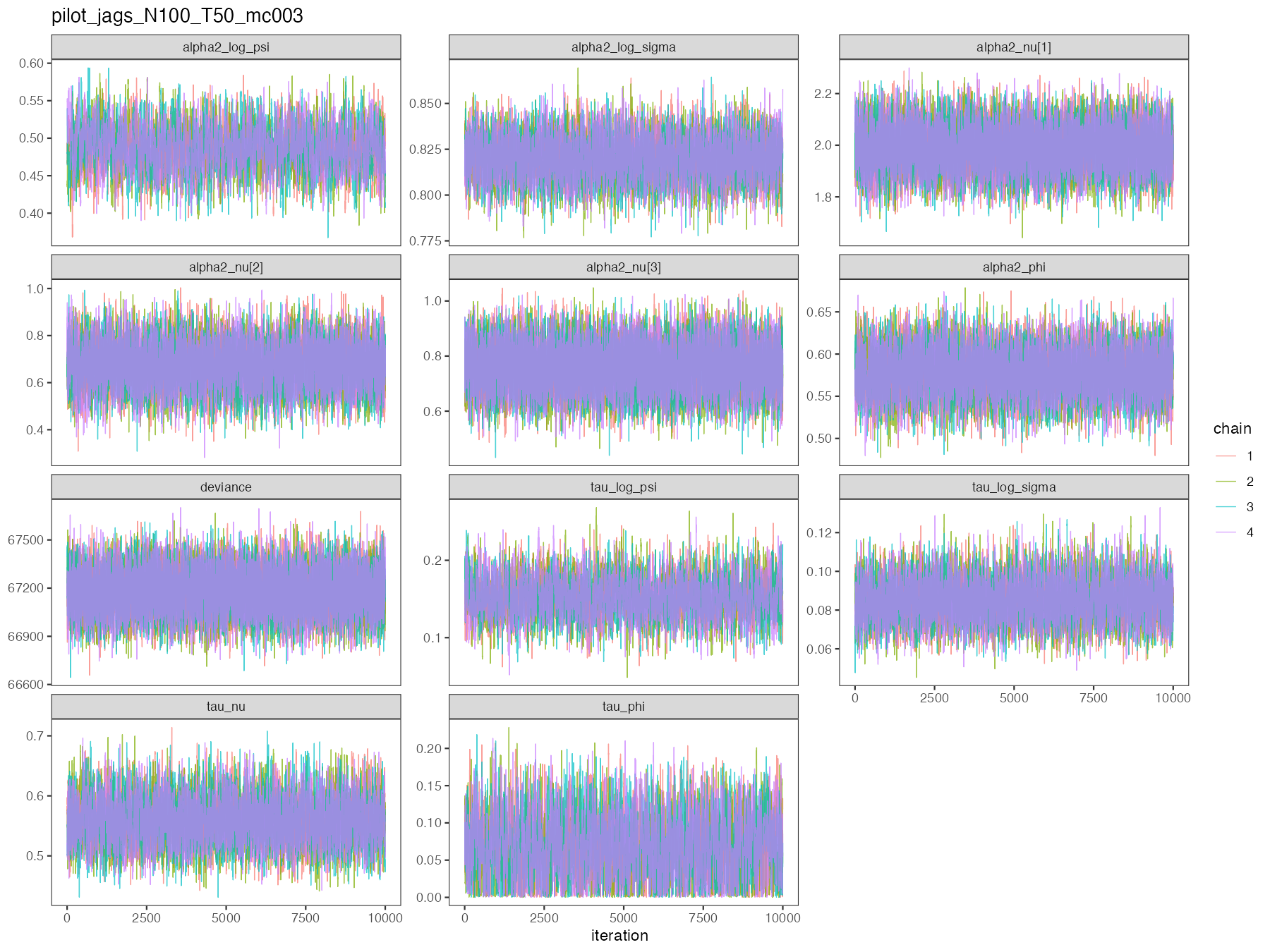}
    \caption{Post-adaptation trace plot for Monte Carlo sample 3 in the \textsc{JAGS} pilot for Model 2 with $N=100$ and $T=50$.}
    \label{r1:fig:eg2-trace-example}
\end{figure}

\subsection{Model 3: Trivariate VAR(1) with K Indicators, Section 4.3}

Figure \ref{r1:fig:eg3-divergent-transitions} shows the mean number of divergent transitions after warm-up across the 10 Monte Carlo samples. Based on this, we set the target acceptance probability to 0.80 for NUTS-Kalman for all $K$ and to 0.99 for NUTS-Joint with $K=1$ and to 0.95 for NUTS-Joint with $K \in \{2,3\}$. Since the maximum tree depth seems to have a limited effect, we set it to 10, to obtain a more efficient sampler.

\begin{figure}
    \centering
    \includegraphics[width=\linewidth]{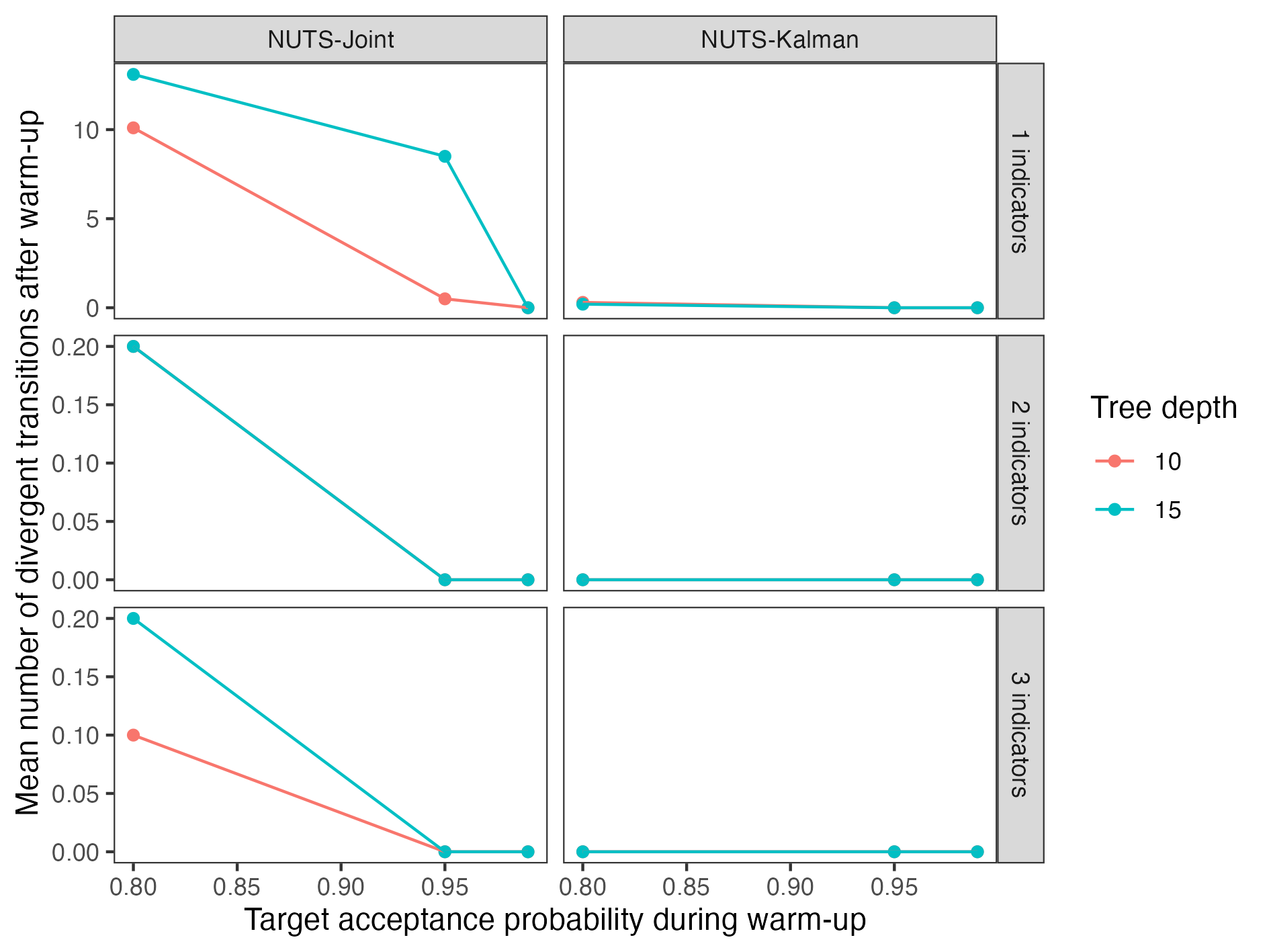}
    \caption{Average number of divergent transitions for Model 3 pilot.}
    \label{r1:fig:eg3-divergent-transitions}
\end{figure}

Figure \ref{r1:fig:eg3-trace-example} shows an example trace plot for the Metropolis-within-Gibbs sampler. After assessing several such trace plots, we concluded that the sampler seems to mix well after the adaptation phase of 2,000 iterations and set the burn-in to 0.

\begin{figure}
    \centering
    \includegraphics[width=\linewidth]{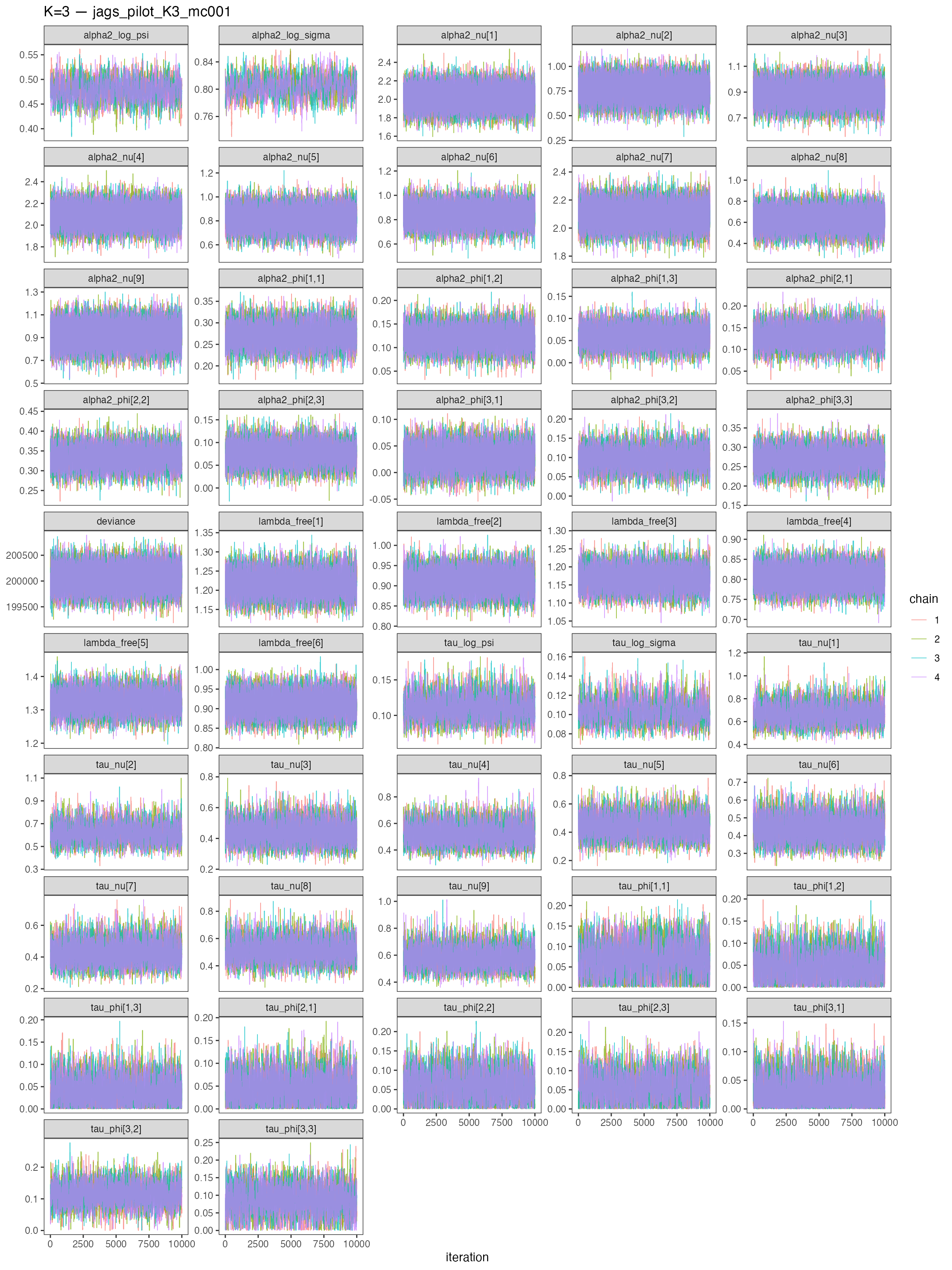}
    \caption{Post-adaptation trace plot for Monte Carlo sample 1 in the \textsc{JAGS} pilot for Model 3 with $K=3$.}
    \label{r1:fig:eg3-trace-example}
\end{figure}

\subsection{Model 4: Multilevel Latent AR(p), Section 4.4}

Figure \ref{r1:fig:eg4-divergent-transition} shows the average number of divergent transitions for various model parameters. Based on this, we set the target acceptance during warm-up to $0.80$ and the maximum tree depth to $10$ for NUTS-Kalman. For NUTS-Joint we set the target acceptance during warm-up to $0.99$; in the lag-4 case we additionally set the maximum tree depth to $15$ while it was $10$ for lag 1, 2, and 3.

\begin{figure}
    \centering
    \includegraphics[width=\linewidth]{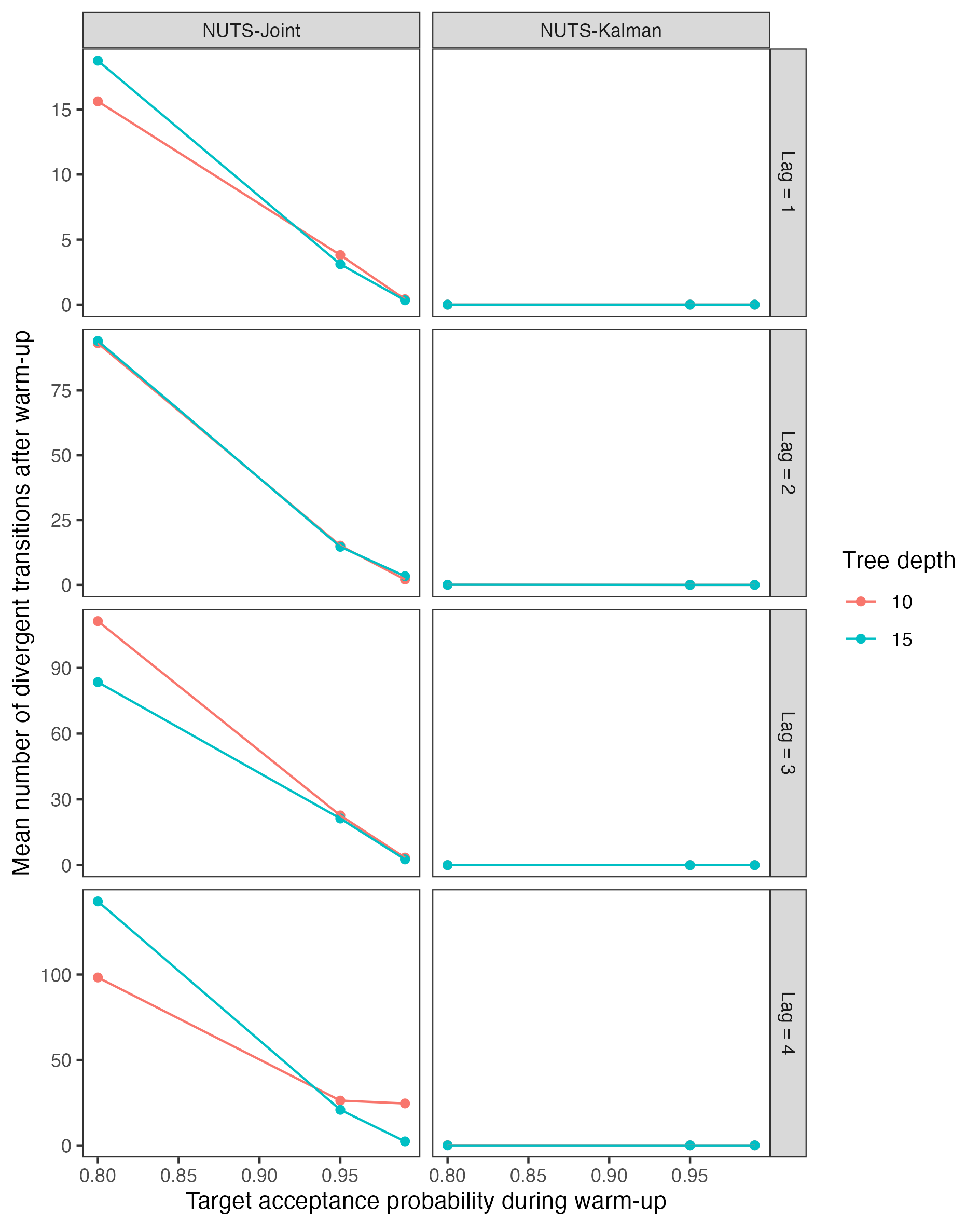}
    \caption{Average number of divergent transitions for Model 4 pilot.}
    \label{r1:fig:eg4-divergent-transition}
\end{figure}

Figure \ref{r1:fig:eg4-trace-example} shows an example trace plot for the Metropolis-within-Gibbs sampler. After assessing several such trace plots, we concluded that the sampler seems to mix well after the adaptation phase of 2,000 iterations and set the burn-in to 0.

\begin{figure}
    \centering
    \includegraphics[width=\linewidth]{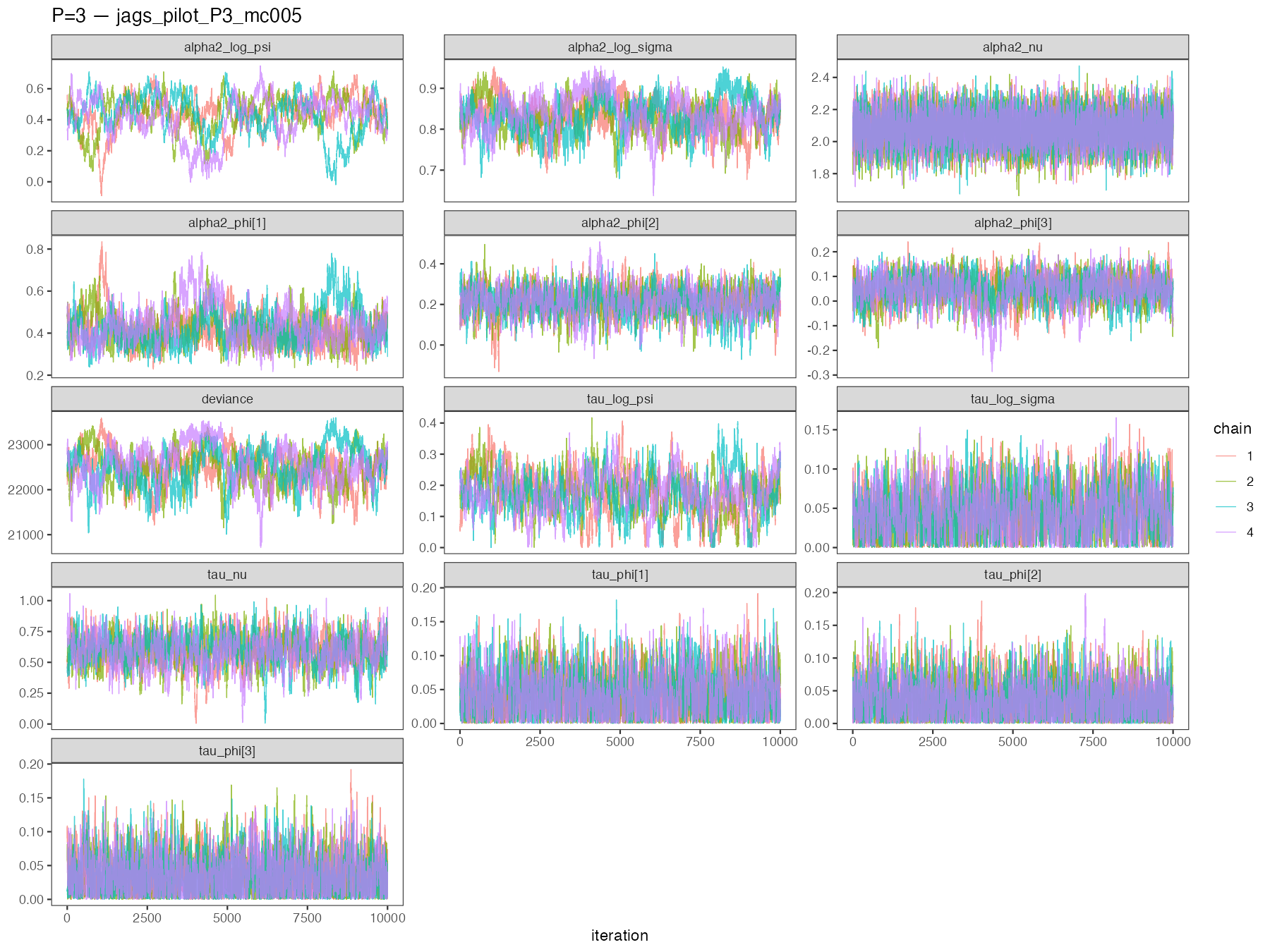}
    \caption{Post-adaptation trace plot for Monte Carlo sample 5 in the \textsc{JAGS} pilot for Model 4 with lag 3.}
    \label{r1:fig:eg4-trace-example}
\end{figure}

\subsection{Model 5: Cross-Classified VAR(1) with Time-Varying Loadings, Section 4.5}

Figure \ref{r1:fig:eg5-divergent-transitions} shows the average number of divergent transitions in the pilot simulations. Based on this, we set the target acceptance probability during warm-up to $0.80$ for NUTS-Kalman and to $0.95$ for NUTS-Joint. For both algorithms we set the maximum tree depth to $10$.

\begin{figure}
    \centering
    \includegraphics[width=\linewidth]{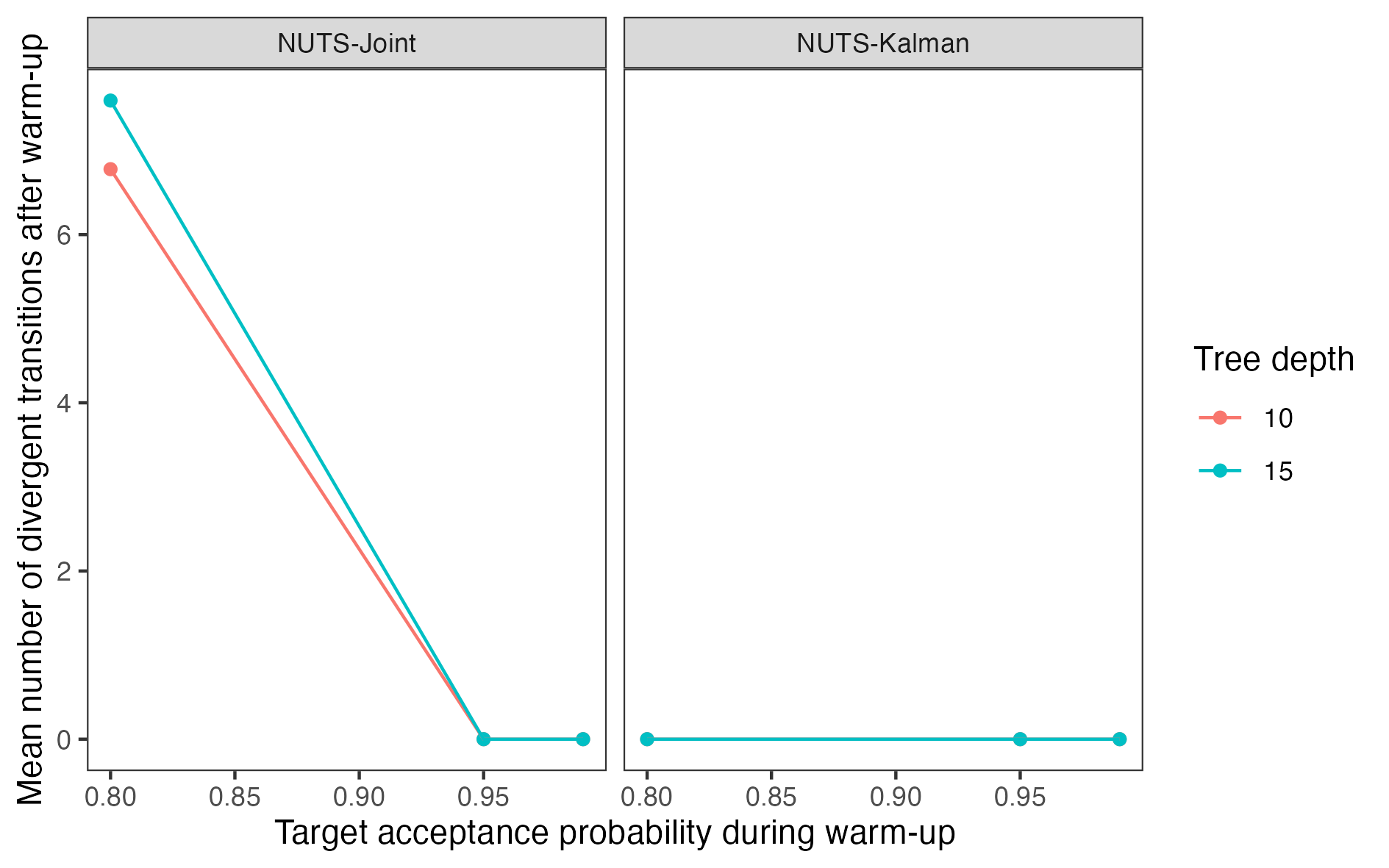}
    \caption{Average number of divergent transitions for Model 5 pilot.}
    \label{r1:fig:eg5-divergent-transitions}
\end{figure}

Figure \ref{r1:fig:eg5-trace-example} shows an example trace plot for the Metropolis-within-Gibbs sampler. Note that the chains for parameters \textsc{mean\_alpha\_phi[2,2]} and \textsc{tau2[11]} are not stuck and do indeed show good convergence, but that this is hard to see in the plots due to the extreme initial value. After assessing several such trace plots, we concluded that the sampler needed an additional burn-in phase after adaptation, which we set to 20,000 iterations.

\begin{figure}
    \centering
    \includegraphics[width=\linewidth]{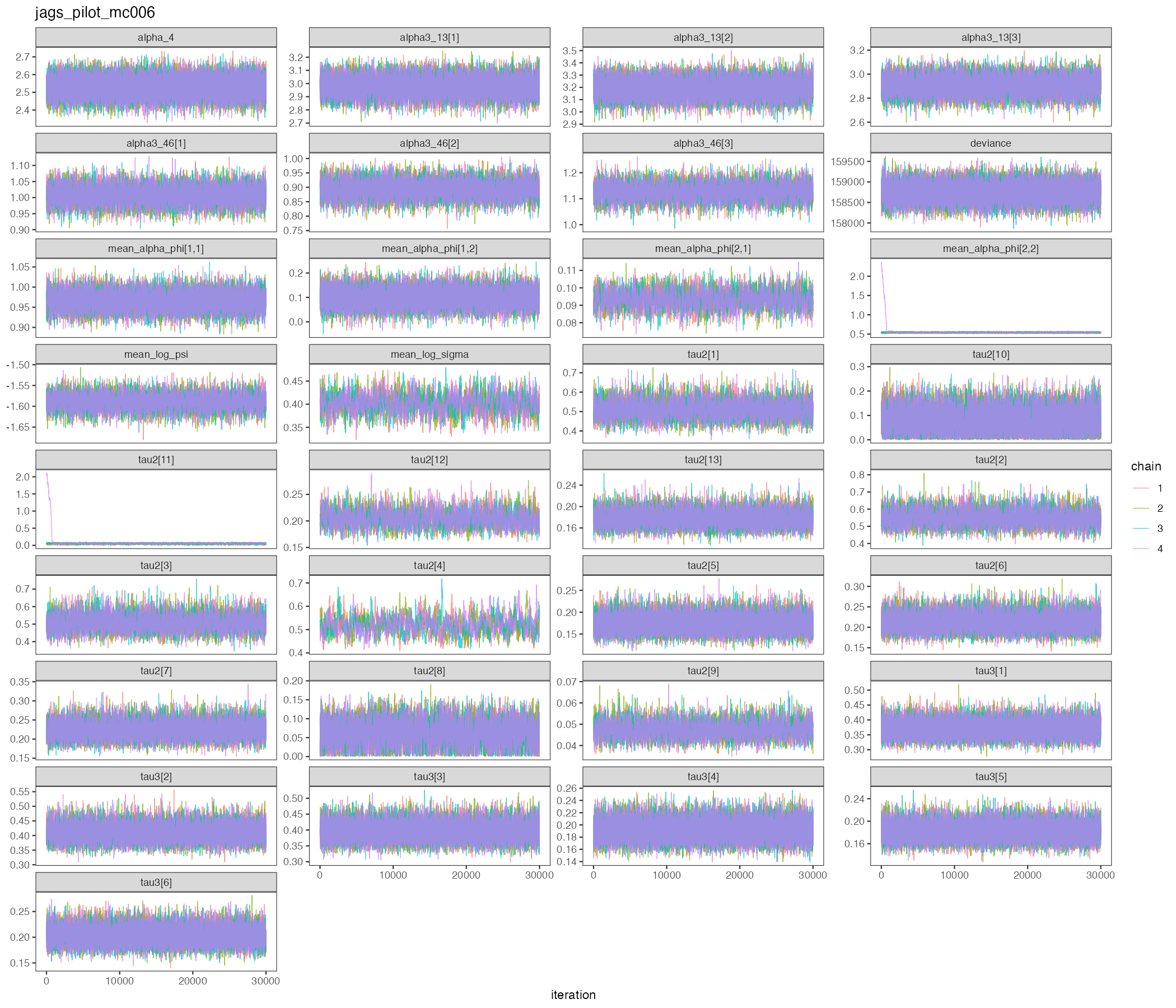}
    \caption{Post-adaptation trace plot for Monte Carlo sample 6 in the \textsc{JAGS} pilot for Model 5.}
    \label{r1:fig:eg5-trace-example}
\end{figure}

\onlineresource{2}

Throughout the manuscript the proposed NUTS-Kalman algorithm is compared against a
baseline sampler that we describe as ``Metropolis-within-Gibbs'' and which is similar to the algorithm described in \citet{asparouhov_dynamic_2018}. Because we do not have an \textsc{Mplus} license, that baseline is implemented in \textsc{JAGS (4.3.2)} \citep{plummer2003jags}. Here we document exactly which sampling method \textsc{JAGS} applies to which parameter, and how that assignment corresponds to the block structure of \citet{asparouhov_dynamic_2018}'s algorithm, which is implemented in \textsc{Mplus} \citep{mplus}.

\citet{asparouhov_dynamic_2018}'s algorithm partitions all unknowns into thirteen blocks $B1$--$B13$ and updates each from its full conditional
\citep[p.~366]{asparouhov_dynamic_2018}. Twelve of these blocks have explicit conditional
distributions and are drawn by exact Gibbs steps. Only block $B13$, containing the random variance parameters,\footnote{We use the term \emph{random} to describe individual-specific or timepoint-specific effects in this supplementary document for consistency with \citet{asparouhov_dynamic_2018}, even though these terms are ambiguous in a Bayesian context.} has a conditional that is ``not explicit,'' and there \citet[Appendix~B]{asparouhov_dynamic_2018} resort to a random-walk Metropolis--Hastings step with an adaptively tuned proposal variance. Our \textsc{JAGS} implementation follows the
same principle: every parameter whose full conditional is conjugate-normal is drawn by an exact Gibbs step, and every parameter whose full conditional is not of standard form is updated by Metropolis-Hastings using slice sampling \citep{neal_slice_2003}. Importantly, the two implementations classify the $\mathcal{O}(N \cdot T)$ within-level latent states and the random variance parameters of block $B13$ identically, and differ only in which general-purpose method is applied to the non-conjugate minority: random-walk Metropolis--Hastings in \textsc{Mplus} and slice sampling in \textsc{JAGS}. Our use of bounding transforms on the autoregressive random effects, and of weakly informative rather than conjugate priors on the population standard deviations, moves some parameters out of the conjugate class that \textsc{Mplus} would draw by Gibbs. These differences arise from our model and prior specification rather than from \textsc{JAGS}.

\section{The \textsc{Mplus} Algorithm}
\label{r2:sec:mplus}

\citet[p.~366]{asparouhov_dynamic_2018} describe their estimator as follows:

\begin{quote}
``The estimation is based on the MCMC algorithm via the Gibbs sampler. All model parameters, latent variables, random effects, between-Level~2 components, between-Level~3 components, and missing data are arranged into blocks. Each of these blocks is updated (\dots) from the conditional distribution of that block, conditional on all other remaining blocks and the data.''
\end{quote}

The blocks they refer to are
\begin{center}
\begin{tabular}{ll}
$B1$ & $\bm{Y}_{2,i}$ (between-individual components) \\
$B2$ & All random slopes $\bm{s}_{2,i}$ \\
$B3$ & $\bm{Y}_{3,t}$ (between-timepoint components) \\
$B4$ & All random slopes $\bm{s}_{3,t}$ \\
$B5$ & Other latent variables $\bm{\eta}_{2,i}$ and $\bm{\eta}_{3,t}$ \\
$B6$ & Latent variables $\bm{\eta}_{1,it}$, including initial conditions \\
$B7$ & Missing data variables $\bm{Y}_{it}$ \\
$B8$ & Initial conditions $\bm{Y}_{1,it}$ and $\bm{X}_{1,it}$ for $t \leq 0$ \\
$B9$ & Threshold parameters for categorical variables \\
$B10$ & Underlying variables for categorical variables \\
$B11$ & Nonrandom intercept, slope and loading parameters $\bm{\theta}_1$ \\
$B12$ & Nonrandom variance, covariance and correlation parameters $\bm{\theta}_2$ \\
$B13$ & Random variance parameters \\
\end{tabular}
\end{center}
Blocks $B9$ and $B10$ are irrelevant here, since all our simulations use continuous
indicators, and $B7$--$B8$ are irrelevant since we simulate complete data with $t \geq 1$.\footnote{Block B7 is, however, relevant in the application example in Section 5.}

Two features of this algorithm matter for the present comparison:

\textbf{The latent states are drawn by exact Gibbs, one time point at a time.} For block $B6$, \citet[Appendix~B]{asparouhov_dynamic_2018} give the conditional
\begin{equation}
  \label{r2:eq:asp-b6}
  [\bm{\eta} \mid \cdot] \sim \mathcal{N}(\bm{D}\bm{d}, \bm{D})      
\end{equation}
where
\begin{equation*}
    \bm{D} = \left( \bm{\Lambda}^{T}\bm{\Theta}^{-1}\bm{\Lambda} + \bm{\Psi}_{0}^{-1} \right)^{-1}
\end{equation*}
and
\begin{equation*}
    \bm{d} = \bm{\Lambda}^{T}\bm{\Theta}^{-1}(\bm{y} - \bm{\nu} - \bm{K}\bm{x})
  + \bm{\Psi}_{0}^{-1}\bm{B}_{0}^{-1}(\bm{\alpha} + \bm{\Gamma}\bm{x}),
\end{equation*}
and note that ``because $\bm{\eta}_{1,it}$ are not independent across time they cannot be
generated simultaneously in an efficient manner, as that will require computing the large
joint conditional distribution of $\bm{\eta}_{1,it}$ for all $t$. Therefore block $B6$ is
essentially split into separate blocks, one for each individual $i$ and time $t$,'' each
updated conditionally on the neighbouring states
$\bm{\eta}_{1,i,t-L}, \dots, \bm{\eta}_{1,i,t+L}$. This single-site conditional update of the
$\mathcal{O}(N \cdot T)$ latent states is precisely what the present
paper replaces with Kalman marginalization.

\textbf{Only the random variance parameters need Metropolis--Hastings.} For block $B13$, \citet[Appendix~B]{asparouhov_dynamic_2018} write:

\begin{quote}
    ``The conditional distribution of the random effects from Equation~11 is not explicit and we use the Metropolis--Hastings algorithm to generate values from that distribution. Suppose that
    $Y_{1,it_1}$ has a random residual variance $\sigma_i = \mathrm{Exp}(s_{2,i})$, where $s_{2,i}$
    is a normally distributed random effect.''
\end{quote}

A proposal $s_1 \sim \mathcal{N}(s_0, V)$ is drawn and accepted with probability $\min(1, R)$, where
\begin{equation}
  \label{r2:eq:asp-b13}
  R = \frac{P(s_{2,i} = s_1 \mid \cdot)}{P(s_{2,i} = s_0 \mid \cdot)}
      \prod_{t} \frac{P(Y_{1,it_1} \mid \sigma_i = \mathrm{Exp}(s_1))}
                     {P(Y_{1,it_1} \mid \sigma_i = \mathrm{Exp}(s_0))} .
\end{equation}
The proposal variance $V$ ``is chosen to be a small value such as 0.1 and is adjusted during
a burn-in stage of the estimation to obtain optimal mixing; that is, optimal acceptance rate
in the Metropolis--Hastings algorithm. The optimal acceptance rate is considered to be between
.25 and .50.'' They add that ``random variances for latent factors are estimated similarly.''

The trigger for the Metropolis--Hastings step is therefore precisely the exponential link
between a normally distributed random effect and a variance parameter. Every one of our
\textsc{JAGS} models contains exactly this construction, written as
\begin{lstlisting}
prec_sigma[i] <- exp(-2 * log_sigma[i]).
\end{lstlisting}

\section{How \textsc{JAGS} Chooses Its Samplers}
\label{r2:sec:jags-samplers}

\textsc{JAGS} assigns samplers automatically. As the user manual explains \citep[\S3.3.3]{plummer2017jags}:

\begin{quote}
``JAGS holds a list of sampler factory objects, which inspect the graph, recognize sets of parameters that can be updated with specific methods, and generate sampler objects for them. The list of sampler factories is traversed in order, starting with sampling methods that are efficient, but limited to certain specific model structures and ending with the most generic, possibly inefficient, methods.''
\end{quote} 

Two factories are relevant to our models:

\begin{itemize}
\item \texttt{bugs::Conjugate}, which ``creates samplers for nodes with conjugate
distributions, i.e.\ where the full conditional distribution of the node belongs to the same
family as the prior'' \citep[\S9.3]{plummer2017jags}. For a node with a normal prior
whose stochastic children are normal with means linear in the node and precisions not
depending on it, this produces an \emph{exact draw from the normal full conditional}, i.e.\ a
Gibbs step.

\item \texttt{base::RealSlicer}, the univariate continuous slice sampler. Slice sampling
\citep{neal_slice_2003} is described in the manual as ``the `work horse' sampling method
in \textsc{JAGS}'' \citep[\S8.2.2]{plummer2017jags}. It is the fallback for continuous scalar
nodes for which no conjugate structure is detected.
\end{itemize}

Because the conjugate factory is queried first, the sampler assignment is fully determined by
whether a node's full conditional is normal.

Table~\ref{r2:tab:correspondence} states the general mapping. It holds for all five models considered in the simulations.

\begin{table}[htbp]
\centering
\caption{Correspondence between the \textsc{Mplus} blocks of \citet{asparouhov_dynamic_2018}
and the samplers assigned by \textsc{JAGS} in our implementation. In the upper panel the two
implementations place the parameter in the same class, and either both draw it exactly or both
resort to a general-purpose method. In the lower panel the classification differs, and in both
cases the cause is our model and prior specification rather than \textsc{JAGS}: \textsc{Mplus}
applies no bounding transform to random slopes, and uses conjugate priors for the variance
parameters of block $B12$. The same specification is used for the \textsc{Stan}
implementations of NUTS-Kalman and NUTS-Joint, so it affects all three algorithms compared in
the manuscript equally.}
\label{r2:tab:correspondence}
\small
\begin{tabular}{p{3.85cm}p{2.95cm}p{2.9cm}p{2.9cm}}
\toprule
Quantity & \textsc{Mplus} block & \textsc{Mplus} update & \textsc{JAGS} update \\
\midrule
\multicolumn{4}{l}{\emph{Same classification in both implementations}} \\
\addlinespace
Within-level latent states $\bm{\eta}_{1,it}$
  & $B6$
  & Exact Gibbs, Eq.~\eqref{r2:eq:asp-b6}
  & Exact Gibbs \newline (\texttt{bugs::Conjugate}) \\
\addlinespace
Between-individual components and random intercepts
  & $B1$, $B5$
  & Exact Gibbs
  & Exact Gibbs \newline (\texttt{bugs::Conjugate}) \\
\addlinespace
Between-timepoint components (loadings, intercepts)
  & $B3$, $B5$
  & Exact Gibbs
  & Exact Gibbs \newline (\texttt{bugs::Conjugate}) \\
\addlinespace
Nonrandom intercepts and loadings $\bm{\theta}_1$
  & $B11$
  & Exact Gibbs
  & Exact Gibbs \newline (\texttt{bugs::Conjugate}) \\
\addlinespace
Random autoregressive coefficients entering \emph{linearly} \newline (Model 4 only)
  & $B2$
  & Exact Gibbs
  & Exact Gibbs \newline (\texttt{bugs::Conjugate}) \\
\addlinespace
Random variance parameters \newline ($\sigma_i = \exp(\cdot)$, $\psi_i = \exp(\cdot)$)
  & $B13$
  & \textbf{Random-walk MH}, \newline Eq.~\eqref{r2:eq:asp-b13}
  & \textbf{Slice sampling} \newline (\texttt{base::RealSlicer}) \\
\midrule
\multicolumn{4}{l}{\emph{Classification differs; see Section~\ref{r2:sec:differences}}} \\
\addlinespace
Random autoregressive coefficients \emph{bounded} by a $\tanh$ or inverse-logit transform to
enforce stationarity \newline (Models 1, 2, 3 and 5)
  & $B2$
  & Exact Gibbs; \newline \textsc{Mplus} applies no bounding transform to random slopes
  & \textbf{Slice sampling} \newline (\texttt{base::RealSlicer}) \\
\addlinespace
Population standard deviations $\tau$
  & $B12$
  & Exact Gibbs under conjugate \newline (inverse-gamma) priors
  & \textbf{Slice sampling}, except where a half-normal prior scales a linearly entering
    component (Model 5) \\
\bottomrule
\end{tabular}
\end{table}

\section{Where the Two Algorithms Genuinely Differ}
\label{r2:sec:differences}

For completeness we list the differences we are aware of. None of them concerns the
$\mathcal{O}(N \cdot T)$ latent states.

\begin{enumerate}
\item \textbf{Non-conjugate updates: slice sampling versus random-walk Metropolis--Hastings.}
Both are general-purpose methods for non-standard univariate conditionals, and both are Metropolis--Hastings algorithms in the broad sense: slice sampling \citep{neal_slice_2003} augments the target with an
auxiliary height variable and alternates conditional draws, and can be viewed as an
auto-tuning alternative to a random-walk proposal. Slice sampling has the practical advantage
of requiring no proposal-variance tuning, whereas the \textsc{Mplus} scheme adapts $V$ during burn-in
to an acceptance rate between $0.25$ and $0.50$ and then freezes it. We are not aware of a
general result establishing which is more efficient for these particular conditionals.

\item \textbf{Blocking of the latent state vector.} When $V_1 > 1$ latent processes are present, \textsc{Mplus} draws $\bm{\eta}_{1,it}$ jointly from the multivariate conditional whereas \textsc{JAGS} draws its components one at a time. This affects only Model 3 (Section 4.3), since $V_1 = 1$ in Models 1, 2, 4 and 5. Blocked updates mix at least as well as
single-site updates, so on this point \textsc{JAGS} is at a disadvantage
relative to \textsc{Mplus}.

\item \textbf{Bounding transforms on the autoregressive random effects.} In Models 1, 2, 3 and
5 the individual-specific autoregressive coefficients are passed through a $\tanh$ or
inverse-logit transform, so that the implied coefficient lies in $(-1,1)$. This is a deliberate
feature of our specification, but it removes these parameters from the conjugate class: their
full conditionals are no longer normal, and \textsc{JAGS} slice samples them. In the block
structure of \citet{asparouhov_dynamic_2018} they belong to $B2$, ``all random slopes,'' which
is an explicit Gibbs block, since \textsc{Mplus} places no such constraint on random slopes.
Model 4, in which the autoregressive coefficients enter linearly, is the exception, and it is
correspondingly the model in which the slice-sampled set is smallest, both absolutely and as a
proportion.

\item \textbf{Parameterisation and priors of the population standard deviations.} Our models
use a non-centred parameterisation, $\theta_i = \alpha_2 + \tau z_i$ with
$z_i \sim \mathcal{N}(0,1)$, together with weakly informative priors on $\tau$.
\textsc{Mplus} uses a centred parameterisation with conjugate inverse-gamma or inverse-Wishart
priors for the variance parameters of block $B12$, for which an explicit Gibbs draw is always
available. Under our specification, whether the conditional of a given $\tau$ is conjugate
depends on two things: whether its prior is normal, and whether the component it scales enters
the model linearly. A half-normal prior, \texttt{dnorm(0, 1) T(0, )}, on a $\tau$ that scales a
linearly entering component yields a truncated normal conditional and a conjugate sampler;
this is the case for the nineteen standard deviations of Model 5, thirteen of which are drawn
by exact Gibbs. A half-Cauchy prior, \texttt{dt(0, s, 1) T(0, )}, is not conjugate whatever the
component does, and a $\tau$ scaling a component that enters through an exponential or $\tanh$
link has a non-normal conditional whatever its prior. The number of population standard
deviations that are consequently slice sampled is four in Model 1, four in Model 2, twenty in
Model 3, seven in Model 4 and six in Model 5. We chose the non-centred form with weakly
informative priors because it is the specification used for the \textsc{Stan} implementations,
and using the same priors and the same parameterisation across all three algorithms is
necessary for the comparison of effective sample sizes to be meaningful.

\item \textbf{Implementation and language.} \textsc{Mplus} is compiled special-purpose software;
\textsc{JAGS} is a general-purpose interpreter that traverses a directed acyclic graph. As
\citet{li_fitting_2022} report, this can produce wall-time differences at fixed iteration count. This is a matter of constant factors rather than algorithmic family.
\end{enumerate}

\section{Empirical Verification}
\label{r2:sec:verification}

The sampler assignments can be read directly off a compiled model with \texttt{rjags::list.samplers}, which ``returns a named list with an entry for each Sampler used by the model \dots The names of the list elements indicate the sampling methods that are used to update each block'' \citep{plummerRjagsBayesianGraphical2025}. The folder for each of the five simulation experiments in our OSF repository\footnote{\url{https://osf.io/p754m/overview?view_only=f99de4145ddf49c9a76f0e38b5e42223}} contains an \textsc{R} script named \verb!list_samplers.R! which lists the samplers. The outputs for all five experiments are listed in Table \ref{r2:tab:verification}.

\begin{table}[htbp]
\centering
\caption{Output of \texttt{list.samplers} for each simulation model, at the configuration
stated in the first column. Only two sampling methods occur anywhere in the five models, and
in every model \texttt{bugs::ConjugateNormal} is the method assigned to all \texttt{eta}
nodes.}
\label{r2:tab:verification}
\small
\begin{tabular}{lrrrr}
\toprule
Model & Conjugate & Slice sampling & Other & \% conjugate \\
\midrule
4.1 ($N=100$, $T=200$) & 20101 & 307 & 0 & 98.5 \\
4.2 ($N=100$, $T=50$) & 5305 & 307 & 0 & 94.5\\
4.3 ($N=50$, $T=100$, $K=3$) & 15465 & 581 & 0 & 96.4 \\
4.4 ($N=100$, $T=50$, $p=4$) & 5505 & 209 & 0 & 96.3\\
4.5 ($N=100$, $T=150$) & 16620 & 612 & 0 & 96.4\\
\bottomrule
\end{tabular}
\end{table}

\section{Conclusion}

The baseline sampler used throughout the manuscript is a Metropolis-within-Gibbs algorithm in
the same sense as the algorithm of \citet{asparouhov_dynamic_2018}: parameters with explicit
conditional distributions are updated by exact Gibbs draws, and the remainder by a
general-purpose Metropolis-type method. The two implementations agree on which parameters fall
into which category. They differ in which general-purpose method is used for the second
category, slice sampling rather than random-walk Metropolis--Hastings, and in the blocking of
the latent state vector when more than one latent process is present.

Most importantly, the $\mathcal{O}(N \cdot T)$ within-level latent variables, which together
with the other conjugate blocks constitute $94.5\%$--$98.5\%$ of all sampled quantities and
which are the object of the present paper, are
updated by exact single-site Gibbs draws from Equation~\eqref{r2:eq:asp-b6} in both
implementations. The inefficiency that NUTS-Kalman removes is a property of that shared
update, not of \textsc{JAGS}.

\onlineresource{3}

This supplementary document contains further details about the models used in the simulation experiments, and a description of how computation time was measured in the simulation experiments and in the application example (Section \ref{r3:sec:timing}).

\section{Multilevel Scalar Latent AR(1) Model}
\label{r3:sec:eg1}

This is the model of Section 4.1 of the manuscript. Table \ref{r3:tab:eg1-truth} lists the population parameters used to generate the data.

\begin{table}[H]
    \centering
    \caption{True population parameters for the multilevel scalar latent AR(1) model.}
    \label{r3:tab:eg1-truth}
    \begin{tabular}{lll}
        \toprule
        Parameter             & Description                                              & True value \\
        \midrule
        $\nu$                 & Mean of the participant-specific intercept               & $2.0$ \\
        $\tau_{\nu}$          & Standard deviation of the participant-specific intercept & $0.5$ \\
        $\phi$                & Mean of the unconstrained AR(1) effect (arctanh scale)   & $0.6$ \\
        $\tau_{\phi}$         & Standard deviation of the unconstrained AR(1) effect     & $0.1$ \\
        $\log\sigma$          & Mean of the log measurement noise standard deviation     & $0.8$ \\
        $\tau_{\log\sigma}$   & Standard deviation of the log measurement noise          & $0.1$ \\
        $\log\psi$            & Mean of the log process noise standard deviation         & $0.5$ \\
        $\tau_{\log\psi}$     & Standard deviation of the log process noise              & $0.1$ \\
        \bottomrule
    \end{tabular}
\end{table}

\subsection{Priors}
\label{r3:sec:eg1-priors}

For a generic between-participant latent variable
$\eta_{2,i\theta}$ with population mean $\theta$ we write $\eta_{2,i\theta} = \theta + \tau_{\theta}z_{\theta,i}$ with $z_{\theta,i} \sim \mathcal{N}(0,1)$ independently across participants. The population-level priors are
\begin{equation}
    \label{r3:eq:eg1-priors}
    \begin{aligned}
        \nu                                                   & \sim \mathcal{N}(0, 5^{2})     \\
        \phi, \; \log\sigma, \; \log\psi                      & \sim \mathcal{N}(0, 1^{2})     \\
        \tau_{\nu}                                            & \sim \text{Half-Cauchy}(0, 2)  \\
        \tau_{\phi}, \; \tau_{\log\sigma}, \; \tau_{\log\psi} & \sim \text{Half-Cauchy}(0, 1).
    \end{aligned}
\end{equation}
The same priors were used for all three algorithms, so that the three samplers target
precisely the same posterior distribution. In \textsc{JAGS} the half-Cauchy priors are specified as truncated $t$ distributions with one degree of freedom, and the hyperbolic tangent is written as $\tanh(x) = 2\,\mathrm{ilogit}(2x) - 1$, since \textsc{JAGS} provides no \texttt{tanh} function.

\subsection{Kalman Filter Implementation}
\label{r3:sec:eg1-kalman}

Our Kalman filter implementation exploits the scalar nature of this particular model, so that all matrix operations in the general recursion reduce to scalar arithmetic and no matrix inversion is required. The initial condition is $a_{i,1|0} = 0$ and $P_{i,1|0} = 1$, matching the $\mathcal{N}(0,1)$ distribution of the initial latent state. For $t = 1, \dots, T$ the recursion is
\begin{equation}
    \label{r3:eq:eg1-kalman}
    \begin{aligned}
        v_{it}       & = y_{1,it} - (\nu_{2,i} + a_{i,t|t-1})                        \\
        F_{it}       & = P_{i,t|t-1} + \sigma_{2,i}^{2}                              \\
        a_{i,t|t}    & = a_{i,t|t-1} + P_{i,t|t-1}F_{it}^{-1}v_{it}                   \\
        P_{i,t|t}    & = P_{i,t|t-1} - P_{i,t|t-1}^{2}F_{it}^{-1}                     \\
        a_{i,t+1|t}  & = \phi_{2,i}a_{i,t|t}                                          \\
        P_{i,t+1|t}  & = \phi_{2,i}^{2}P_{i,t|t} + \psi_{2,i}^{2},
    \end{aligned}
\end{equation}
and the log likelihood is accumulated as
\begin{equation}
    \label{r3:eq:eg1-loglik}
    \mathcal{L} = -\frac{NT}{2}\log(2\pi)
                  - \frac{1}{2}\sum_{i=1}^{N}\sum_{t=1}^{T}
                    \left\{ \log F_{it} + \frac{v_{it}^{2}}{F_{it}} \right\}.
\end{equation}
This quantity is added to the target density of the NUTS-Kalman implementation once per participant, so that the $N \cdot T$ within-level latent variables $\eta_{1,it}$ never enter the sampler.

\section{Multilevel One-Factor Three-Indicator AR(1) Model}
\label{r3:sec:eg2}

This is the model of Section 4.2 of the manuscript. Table \ref{r3:tab:eg2-truth} lists the population parameters used to generate the data.

\begin{table}[H]
    \centering
    \caption{True population parameters for the multilevel one-factor three-indicator AR(1)
    model.}
    \label{r3:tab:eg2-truth}
    \begin{tabular}{lll}
        \toprule
        Parameter             & Description                                               & True value \\
        \midrule
        $\nu_{1}$             & Mean of the participant-specific intercept, indicator 1   & $2.0$ \\
        $\nu_{2}$             & Mean of the participant-specific intercept, indicator 2   & $0.6$ \\
        $\nu_{3}$             & Mean of the participant-specific intercept, indicator 3   & $0.8$ \\
        $\tau_{\nu}$          & Standard deviation of the participant-specific intercepts & $0.5$ \\
        $\phi$                & Mean of the unconstrained AR(1) effect (arctanh scale)    & $0.6$ \\
        $\tau_{\phi}$         & Standard deviation of the unconstrained AR(1) effect      & $0.1$ \\
        $\log\sigma$          & Mean of the log measurement noise standard deviation      & $0.8$ \\
        $\tau_{\log\sigma}$   & Standard deviation of the log measurement noise           & $0.1$ \\
        $\log\psi$            & Mean of the log process noise standard deviation          & $0.5$ \\
        $\tau_{\log\psi}$     & Standard deviation of the log process noise               & $0.1$ \\
        $\lambda_{1}$         & Factor loading for indicator 1 (fixed marker)             & $1.0$ \\
        $\lambda_{2}$         & Factor loading for indicator 2                            & $1.2$ \\
        $\lambda_{3}$         & Factor loading for indicator 3                            & $0.9$ \\
        \bottomrule
    \end{tabular}
\end{table}

\subsection{Priors}
\label{r3:sec:eg2-priors}

The population-level priors are
\begin{equation}
    \label{r3:eq:eg2-priors}
    \begin{aligned}
        \nu_{1}                               & \sim \mathcal{N}(0, 5^{2})     \\
        \nu_{2}, \; \nu_{3}                   & \sim \mathcal{N}(0, 1^{2})     \\
        \phi, \; \log\sigma, \; \log\psi      & \sim \mathcal{N}(0, 1^{2})     \\
        \tau_{\nu}                            & \sim \text{Half-Cauchy}(0, 2)  \\
        \tau_{\phi}                           & \sim \text{Half-Cauchy}(0, 1)  \\
        \tau_{\log\sigma}, \; \tau_{\log\psi} & \sim \text{Half-Normal}(0, 1)  \\
        \lambda_{2}, \; \lambda_{3}           & \sim \text{Half-Normal}(1, 2).
    \end{aligned}
\end{equation}
Two differences from Section \ref{r3:sec:eg1-priors} are worth noting. First, the priors on the
standard deviations of the two log noise parameters are half-normal rather than half-Cauchy.
Second, the free loadings are given half-normal priors centered at one and truncated below at
zero, which rules out the sign indeterminacy that would otherwise arise from the fact that
$(\bm{\lambda}, \eta_{1,it})$ and $(-\bm{\lambda}, -\eta_{1,it})$ imply the same distribution
for the data. As in Section \ref{r3:sec:eg1}, all three algorithms use these same priors.

\subsection{Kalman Filter Implementation}
\label{r3:sec:eg2-kalman}

The state remains scalar, but each observation is now a three-dimensional vector, so the prediction error variance $\bm{F}_{it} = P_{i,t|t-1}\bm{\lambda}\bm{\lambda}^{T} + \sigma_{2,i}^{2}\bm{I}_{3}$ is a $3 \times 3$ matrix. Rather than forming and inverting it at every timepoint, our
implementation exploits the fact that $\bm{F}_{it}$ is a rank-one update of a scaled identity matrix. Writing $c_{i} = \bm{\lambda}^{T}\bm{\lambda}/\sigma_{2,i}^{2}$ and
$\bm{v}_{it} = \bm{y}_{1,it} - \bm{\nu}_{2,i} - \bm{\lambda}a_{i,t|t-1}$, the matrix
determinant lemma and the Woodbury identity give
\begin{equation}
    \label{r3:eq:eg2-woodbury}
    \log\left|\bm{F}_{it}\right| = 6\log\sigma_{2,i} + \log\left(1 + P_{i,t|t-1}c_{i}\right),
    \qquad
    \bm{v}_{it}^{T}\bm{F}_{it}^{-1}\bm{v}_{it} = r_{it} - \frac{P_{i,t|t-1}}{1 + P_{i,t|t-1}c_{i}}q_{it}^{2},
\end{equation}
where $q_{it} = \bm{\lambda}^{T}\bm{v}_{it}/\sigma_{2,i}^{2}$ and
$r_{it} = \bm{v}_{it}^{T}\bm{v}_{it}/\sigma_{2,i}^{2}$ are scalars. The filter therefore runs entirely in scalar arithmetic. With initial condition $a_{i,1|0} = 0$ and $P_{i,1|0} = 1$, the recursion for $t = 1, \dots, T$ is
\begin{equation}
    \label{r3:eq:eg2-kalman}
    \begin{aligned}
        a_{i,t|t}   & = a_{i,t|t-1} + \frac{P_{i,t|t-1}}{1 + P_{i,t|t-1}c_{i}}q_{it} \\
        P_{i,t|t}   & = \frac{P_{i,t|t-1}}{1 + P_{i,t|t-1}c_{i}}                     \\
        a_{i,t+1|t} & = \phi_{2,i}a_{i,t|t}                                          \\
        P_{i,t+1|t} & = \phi_{2,i}^{2}P_{i,t|t} + \psi_{2,i}^{2},
    \end{aligned}
\end{equation}
and the log likelihood is
\begin{equation}
    \label{r3:eq:eg2-loglik}
    \mathcal{L} = -\frac{3NT}{2}\log(2\pi)
                  - \frac{1}{2}\sum_{i=1}^{N}\sum_{t=1}^{T}
                    \left\{ \log\left|\bm{F}_{it}\right|
                            + \bm{v}_{it}^{T}\bm{F}_{it}^{-1}\bm{v}_{it} \right\},
\end{equation}
with the two terms evaluated as in \eqref{r3:eq:eg2-woodbury}.

\section{Multilevel Trivariate VAR(1) Model}
\label{r3:sec:eg3}

This is the model of Section 4.3 of the manuscript. Table \ref{r3:tab:eg3-truth} lists the population parameters.

\begin{table}[H]
    \centering
    \caption{True population parameters for the multilevel trivariate VAR(1) model. The
    intercept means depend on the position of the indicator within its block, and the free
    loadings are only present when $K > 1$.}
    \label{r3:tab:eg3-truth}
    \begin{tabular}{lll}
        \toprule
        Parameter                         & Description                                               & True value \\
        \midrule
        $\nu_{(j-1)K+1}$                  & Mean intercept, first indicator of each block             & $2.0$ \\
        $\nu_{(j-1)K+2}$                  & Mean intercept, second indicator of each block            & $0.6$ \\
        $\nu_{(j-1)K+3}$                  & Mean intercept, third indicator of each block             & $0.8$ \\
        $\tau_{\nu[u]}$                   & Standard deviation of each participant-specific intercept & $0.5$ \\
        $\phi_{rr}$                       & Mean unconstrained auto-effect (arctanh scale)            & $0.30$ \\
        $\phi_{rc}$, $|r-c|=1$            & Mean unconstrained adjacent cross-effect                  & $0.10$ \\
        $\phi_{rc}$, $|r-c|=2$            & Mean unconstrained distant cross-effect                   & $0.05$ \\
        $\tau_{\phi[r,c]}$                & Standard deviation of each unconstrained VAR(1) effect    & $0.05$ \\
        $\log\sigma$                      & Mean of the log measurement noise standard deviation      & $0.8$ \\
        $\tau_{\log\sigma}$               & Standard deviation of the log measurement noise           & $0.1$ \\
        $\log\psi$                        & Mean of the log process noise standard deviation          & $0.5$ \\
        $\tau_{\log\psi}$                 & Standard deviation of the log process noise               & $0.1$ \\
        $\lambda_{1}, \dots, \lambda_{6}$ & Free factor loadings, in order of appearance              & $1.2$, $0.9$, $1.1$, \\
                                          &                                                           & $0.8$, $1.3$, $0.85$ \\
        \bottomrule
    \end{tabular}
\end{table}

\subsection{Priors}
\label{r3:sec:eg3-priors}

The non-centered parameterization of Section \ref{r3:sec:eg1-priors} was used again. The
population-level priors are
\begin{equation}
    \label{r3:eq:eg3-priors}
    \begin{aligned}
        \nu_{(j-1)K+1}                        & \sim \mathcal{N}(0, 5^{2})     \\
        \nu_{(j-1)K+k}, \; k > 1              & \sim \mathcal{N}(0, 1^{2})     \\
        \phi_{rc}, \; \log\sigma, \; \log\psi & \sim \mathcal{N}(0, 1^{2})     \\
        \tau_{\nu[u]}                         & \sim \text{Half-Cauchy}(0, 2)  \\
        \tau_{\phi[r,c]}                      & \sim \text{Half-Cauchy}(0, 1)  \\
        \tau_{\log\sigma}, \; \tau_{\log\psi} & \sim \text{Half-Normal}(0, 1)  \\
        \lambda_{1}, \dots, \lambda_{3(K-1)}  & \sim \text{Half-Normal}(1, 2),
    \end{aligned}
\end{equation}
following the same pattern as Section \ref{r3:sec:eg2-priors}: the marker indicator of each block
receives the diffuse $\mathcal{N}(0, 5^{2})$ prior appropriate for an unstandardized intercept,
the remaining intercepts receive $\mathcal{N}(0, 1^{2})$, the noise scale parameters receive
half-normal priors, and the free loadings are truncated below at zero to remove the sign
indeterminacy. Both algorithms use these same priors.

\subsection{Kalman Filter Implementation}
\label{r3:sec:eg3-kalman}

The state is now three-dimensional and the observation is $U$-dimensional, so a direct
implementation would require inverting a $U \times U$ prediction error variance matrix at
every timepoint, which for $K = 3$ is $9 \times 9$. Our implementation avoids this by first
reducing each observation to a three-dimensional sufficient statistic.

Because $\bm{\Lambda}$ has the block structure described above,
$\bm{\Lambda}^{T}\bm{\Lambda} = \bm{C} = \text{diag}(c_{1}, c_{2}, c_{3})$ is diagonal, with
$c_{j} = 1 + \sum_{k=1}^{K-1}\lambda_{(j-1)(K-1)+k}^{2}$. Writing
$\bm{d}_{it} = \bm{y}_{1,it} - \bm{\nu}_{2,i}$ and
$\bm{q}_{it} = \bm{\Lambda}^{T}\bm{d}_{it}$, the generalized least squares projection of the
observation onto the column space of $\bm{\Lambda}$ is
$\tilde{\bm{y}}_{it} = \bm{C}^{-1}\bm{q}_{it}$, and
\begin{equation}
    \label{r3:eq:eg3-decomp}
    \left\| \bm{d}_{it} - \bm{\Lambda}\bm{\eta}_{1,it} \right\|^{2}
    = o_{it}
      + \left(\tilde{\bm{y}}_{it} - \bm{\eta}_{1,it}\right)^{T}
        \bm{C}\left(\tilde{\bm{y}}_{it} - \bm{\eta}_{1,it}\right),
    \qquad
    o_{it} = \bm{d}_{it}^{T}\bm{d}_{it} - \bm{q}_{it}^{T}\bm{C}^{-1}\bm{q}_{it}.
\end{equation}
The residual sum of squares $o_{it}$ is the part of the observation orthogonal to the column
space of $\bm{\Lambda}$. It does not involve $\bm{\eta}_{1,it}$, and therefore contributes only
an additive constant to the log likelihood. The remaining term shows that
$\tilde{\bm{y}}_{it}$ is a sufficient statistic for $\bm{\eta}_{1,it}$, distributed as
$\mathcal{N}(\bm{\eta}_{1,it}, \sigma_{2,i}^{2}\bm{C}^{-1})$. The filter therefore runs on the
three-dimensional series $\tilde{\bm{y}}_{it}$ with diagonal measurement error variance
$\bm{R}_{i} = \sigma_{2,i}^{2}\bm{C}^{-1}$, irrespective of how many indicators there are.

With initial condition $\bm{a}_{i,1|0} = \bm{0}$ and $\bm{P}_{i,1|0} = \bm{I}_{3}$, the
recursion for $t = 1, \dots, T$ is
\begin{equation}
    \label{r3:eq:eg3-kalman}
    \begin{aligned}
        \bm{W}_{it}         & = \bm{P}_{i,t|t-1} + \bm{R}_{i}                                                       \\
        \bm{a}_{i,t|t}      & = \bm{a}_{i,t|t-1}
                                + \left(\bm{I}_{3} - \bm{R}_{i}\bm{W}_{it}^{-1}\right)
                                  \left(\tilde{\bm{y}}_{it} - \bm{a}_{i,t|t-1}\right)                               \\
        \bm{P}_{i,t|t}      & = \bm{R}_{i} - \bm{R}_{i}\bm{W}_{it}^{-1}\bm{R}_{i}                                   \\
        \bm{a}_{i,t+1|t}    & = \bm{\Phi}_{2,i}\bm{a}_{i,t|t}                                                       \\
        \bm{P}_{i,t+1|t}    & = \bm{\Phi}_{2,i}\bm{P}_{i,t|t}\bm{\Phi}_{2,i}^{T} + \psi_{2,i}^{2}\bm{I}_{3},
    \end{aligned}
\end{equation}
where $\bm{W}_{it}$ is inverted through its Cholesky factor, which also supplies
$\log|\bm{W}_{it}|$. The log likelihood is
\begin{equation}
    \label{r3:eq:eg3-loglik}
    \begin{aligned}
        \mathcal{L} = {} & -\frac{UNT}{2}\log(2\pi)
                           - (U-3)NT\log\sigma_{2,i}
                           - \frac{NT}{2}\sum_{j=1}^{3}\log c_{j}                                       \\
                         & - \frac{1}{2}\sum_{i=1}^{N}\sum_{t=1}^{T}
                             \left\{ \log\left|\bm{W}_{it}\right|
                                     + \frac{o_{it}}{\sigma_{2,i}^{2}}
                                     + \left(\tilde{\bm{y}}_{it} - \bm{a}_{i,t|t-1}\right)^{T}
                                       \bm{W}_{it}^{-1}
                                       \left(\tilde{\bm{y}}_{it} - \bm{a}_{i,t|t-1}\right) \right\},
    \end{aligned}
\end{equation}
where the terms involving $\sigma_{2,i}$ are of course evaluated per participant. The largest
matrix inverted is thus $3 \times 3$ regardless of $K$. We verified numerically that this
reduction returns the same log likelihood as a Kalman filter operating directly on the
$U$-dimensional observations, for $K = 1, 2$ and $3$.

\subsection{Initial Values}
\label{r3:sec:eg3-settings}

Unlike Sections \ref{r3:sec:eg1} and \ref{r3:sec:eg2}, this model requires initial values for both \textsc{Stan} implementations. Under the default $\mathcal{U}(-2,2)$ initialization on the unconstrained scale, $\tau_{\phi[r,c]}$ can start as large as $e^{2}$, which places the individual autoregression matrices far outside the stationary region and causes the latent processes to diverge during warm-up. We therefore started all chains at $\phi_{rc} = 0$ for all $r$ and $c$ with $\tau_{\phi} = 0.05\bm{J}_{3}$, at $\nu_{u} = 0$ for all $u$ with $\tau_{\nu} = 0.1\bm{1}_{U}$, at
$\log\sigma = \log\psi = 0$ with $\tau_{\log\sigma} = \tau_{\log\psi} = 0.1$, at $\lambda_{m} = 1$ for all free loadings, and at zero for all standardized deviations $z$. These values correspond to a stationary system with
unit measurement and process noise scale. The Metropolis-within-Gibbs sampler was started by sampling from the priors.

\section{Multilevel Latent AR(p) Model}
\label{r3:sec:eg4}

This is the model of Section 4.4 of the manuscript. Table \ref{r3:tab:eg4-truth}
lists the population parameters. Only the first $p$ of the four autoregressive means and standard deviations are used in any given condition.

\begin{table}[H]
    \centering
    \caption{True population parameters for the multilevel latent AR($p$) model.}
    \label{r3:tab:eg4-truth}
    \begin{tabular}{lll}
        \toprule
        Parameter             & Description                                              & True value \\
        \midrule
        $\nu$                 & Mean of the participant-specific intercept               & $2.0$ \\
        $\tau_{\nu}$          & Standard deviation of the participant-specific intercept & $0.5$ \\
        $\phi_{1}$            & Mean of the lag-1 autoregressive effect                  & $0.40$ \\
        $\phi_{2}$            & Mean of the lag-2 autoregressive effect                  & $0.20$ \\
        $\phi_{3}$            & Mean of the lag-3 autoregressive effect                  & $0.10$ \\
        $\phi_{4}$            & Mean of the lag-4 autoregressive effect                  & $0.05$ \\
        $\tau_{\phi_{l}}$      & Standard deviation of each autoregressive effect         & $0.05$ \\
        $\log\sigma$          & Mean of the log measurement noise standard deviation     & $0.8$ \\
        $\tau_{\log\sigma}$   & Standard deviation of the log measurement noise          & $0.1$ \\
        $\log\psi$            & Mean of the log process noise standard deviation         & $0.5$ \\
        $\tau_{\log\psi}$     & Standard deviation of the log process noise              & $0.1$ \\
        \bottomrule
    \end{tabular}
\end{table}

\subsection{Priors}
\label{r3:sec:eg4-priors}

The non-centered parameterization of Section \ref{r3:sec:eg1-priors} was used again, with
population-level priors
\begin{equation}
    \label{r3:eq:eg4-priors}
    \begin{aligned}
        \nu                                                      & \sim \mathcal{N}(0, 5^{2})     \\
        \phi_{l}, \; \log\sigma, \; \log\psi                     & \sim \mathcal{N}(0, 1^{2})     \\
        \tau_{\nu}                                               & \sim \text{Half-Cauchy}(0, 2)  \\
        \tau_{\phi[l]}, \; \tau_{\log\sigma}, \; \tau_{\log\psi} & \sim \text{Half-Cauchy}(0, 1).
    \end{aligned}
\end{equation}
These are exactly the priors of Section \ref{r3:sec:eg1-priors}, extended to the $p$ lags. In
particular, and in contrast to Sections \ref{r3:sec:eg2-priors} and \ref{r3:sec:eg3-priors}, the
standard deviations of the two log noise parameters carry half-Cauchy rather than half-normal
priors. All three algorithms use these same priors.

\subsection{Kalman Filter Implementation}
\label{r3:sec:eg4-kalman}

The AR($p$) process is Markovian only in the stacked state
$\tilde{\bm{\eta}}_{1,it} = (\eta_{1,it}, \eta_{1,i,t-1}, \dots, \eta_{1,i,t-p+1})^{T}$, which
is the augmentation of Theorem 1 specialized to a single latent variable, no lagged loadings
and no regressions on lagged manifest variables. The transition matrix is the companion form
\begin{equation}
    \label{r3:eq:eg4-companion}
    \bm{T}_{i} =
    \begin{pmatrix}
        \phi_{2,1i} & \phi_{2,2i} & \cdots & \phi_{2,p-1,i} & \phi_{2,pi} \\
        1           & 0           & \cdots & 0              & 0           \\
        0           & 1           & \cdots & 0              & 0           \\
        \vdots      &             & \ddots &                & \vdots      \\
        0           & 0           & \cdots & 1              & 0
    \end{pmatrix},
    \qquad
    \bm{W}_{i} = \psi_{2,i}^{2}\bm{e}_{1}\bm{e}_{1}^{T},
    \qquad
    \bm{Z} = \bm{e}_{1}^{T},
\end{equation}
where $\bm{e}_{1}$ is the first standard basis vector, so that only the leading component of
the state receives process noise and only it is measured. The initial condition is
$\bm{a}_{i,1|0} = \bm{0}$ and $\bm{P}_{i,1|0} = \bm{I}_{p}$.

Because the first $p$ latent states are drawn independently from $\mathcal{N}(0,1)$ rather
than from the autoregressive recursion, the transition in \eqref{r3:eq:eg4-companion} does not
apply until the recursion has started. For $t < p$ the filter therefore shifts the state
vector down by one position and inserts a fresh $\mathcal{N}(0,1)$ component at the top,
\begin{equation}
    \label{r3:eq:eg4-startup}
    \bm{a}_{i,t+1|t} = \bm{S}\bm{a}_{i,t|t},
    \qquad
    \bm{P}_{i,t+1|t} = \bm{S}\bm{P}_{i,t|t}\bm{S}^{T} + \bm{e}_{1}\bm{e}_{1}^{T},
    \qquad t < p,
\end{equation}
where $\bm{S}$ is the down-shift matrix with ones on the first subdiagonal and zeros
elsewhere, and switches to $\bm{T}_{i}$ and $\bm{W}_{i}$ for $t \geq p$. Since the observation
is scalar, the update step requires no matrix inversion:
\begin{equation}
    \label{r3:eq:eg4-kalman}
    \begin{aligned}
        v_{it}                & = y_{1,it} - \left(\nu_{2,i} + \bm{e}_{1}^{T}\bm{a}_{i,t|t-1}\right) \\
        F_{it}                & = \bm{e}_{1}^{T}\bm{P}_{i,t|t-1}\bm{e}_{1} + \sigma_{2,i}^{2}        \\
        \bm{a}_{i,t|t}        & = \bm{a}_{i,t|t-1} + \bm{P}_{i,t|t-1}\bm{e}_{1}F_{it}^{-1}v_{it}     \\
        \bm{P}_{i,t|t}        & = \bm{P}_{i,t|t-1}
                                  - \bm{P}_{i,t|t-1}\bm{e}_{1}F_{it}^{-1}\bm{e}_{1}^{T}\bm{P}_{i,t|t-1},
    \end{aligned}
\end{equation}
and the log likelihood has the same form as \eqref{r3:eq:eg1-loglik}. For $p \leq 4$ the
implementation uses hand-unrolled scalar code rather than matrix operations, which avoids the
overhead of allocating and multiplying small matrices at every one of the $N \cdot T$
timepoints; a general matrix implementation is retained for larger $p$.

\subsection{Initial Values}
\label{r3:sec:eg4-settings}

Both \textsc{Stan} implementations were supplied with initial values. Under the default $\mathcal{U}(-2,2)$ initialization on the unconstrained scale, $\tau_{\phi[l]}$ can start as large as $e^{2}$, which places the implied autoregressive coefficients far outside the stationary region and produces an explosive process over $T$ timepoints. All chains were therefore started at $\phi_{l} = 0$ for all $l$ with $\tau_{\phi} = 0.05\bm{1}_{p}$, at $\nu = 0$ with $\tau_{\nu} = 0.1$, at $\log\sigma = \log\psi = 0$ with
$\tau_{\log\sigma} = \tau_{\log\psi} = 0.1$, and at zero for all standardized deviations $z$. These values correspond to a white noise process with unit measurement and process noise scale, which is well inside the stationary region for every lag considered. The Metropolis-within-Gibbs sampler was started by sampling from the priors.

\section{Cross-Classified VAR(1) Model}
\label{r3:sec:eg5}

This is the model of Section 4.5 of the manuscript. Table \ref{r3:tab:eg5-truth} lists the population parameters.

\begin{table}[H]
    \centering
    \caption{True population parameters for the cross-classified VAR(1) model. The means of the
    autoregressive effects are given on the $\tanh^{-1}$ scale, on which they are sampled; the
    corresponding autoregression matrix is
    $\bm{\Phi} = \left(\begin{smallmatrix}0.75 & 0.10 \\ 0.10 & 0.50\end{smallmatrix}\right)$.}
    \label{r3:tab:eg5-truth}
    \begin{tabular}{lll}
        \toprule
        Parameter                        & Description                                                     & True value \\
        \midrule
        $\nu_{1},\nu_{2},\nu_{3}$        & Population means of the intercepts of items 1--3                 & $3.0,\;3.2,\;2.8$ \\
        $\nu_{4}$                        & Population mean of the intercept of the fourth item             & $2.5$ \\
        $\lambda_{1},\lambda_{2},\lambda_{3}$ & Population means of the factor loadings                     & $1.0,\;0.9,\;1.1$ \\
        $\phi_{11}$                      & Mean autoregressive effect, first on first                      & $0.973$ \\
        $\phi_{21}$                      & Mean cross-lagged effect, first on second                       & $0.100$ \\
        $\phi_{12}$                      & Mean cross-lagged effect, second on first                       & $0.100$ \\
        $\phi_{22}$                      & Mean autoregressive effect, second on second                    & $0.549$ \\
        $\log\sigma$                     & Mean log measurement noise standard deviation                   & $0.405$ \\
        $\log\psi$                       & Mean log process noise standard deviation                       & $-1.609$ \\
        \midrule
        $\tau_{3,1},\tau_{3,2},\tau_{3,3}$ & Between-timepoint SDs of the intercepts of items 1--3          & $0.4$ \\
        $\tau_{3,5},\tau_{3,6},\tau_{3,7}$ & Between-timepoint SDs of the factor loadings                   & $0.2$ \\
        $\tau_{2,1},\dots,\tau_{2,4}$    & Between-participant SDs of the four intercepts                  & $0.5$ \\
        $\tau_{2,5},\tau_{2,6},\tau_{2,7}$ & Between-participant SDs of the three loadings                  & $0.2$ \\
        $\tau_{2,8},\dots,\tau_{2,11}$   & Between-participant SDs of the four autoregressive effects       & $0.05$ \\
        $\tau_{2,12},\tau_{2,13}$        & Between-participant SDs of the two log noise parameters          & $0.2$ \\
        \bottomrule
    \end{tabular}
\end{table}

\subsection{Priors}
\label{r3:sec:eg5-priors}

The non-centered parameterization was used for both the participant-specific and the
timepoint-specific latent variables, so that $\eta_{2,iv}$ equals the corresponding element of
$\bm{\alpha}_{2}$ plus $\tau_{2,v}z_{2,iv}$, while the between-timepoint variables are centered
at zero, $\eta_{3,tv} = \tau_{3,v}z_{3,tv}$, with $z_{2,iv}$ and $z_{3,tv}$ standard normal. In
the notation of Section 4.5 of the manuscript,
$\bm{\Psi}_{2} = \text{diag}(\tau_{2,1}^{2},\dots,\tau_{2,13}^{2})$ and
$\bm{\Psi}_{3} = \text{diag}(\tau_{3,1}^{2},\tau_{3,2}^{2},\tau_{3,3}^{2},0,\tau_{3,5}^{2},\tau_{3,6}^{2},\tau_{3,7}^{2})$.
The population-level priors are
\begin{equation}
    \label{r3:eq:eg5-priors}
    \begin{aligned}
        \nu_{1},\dots,\nu_{4}                 & \sim \mathcal{N}(0, 5^{2})                                 \\
        \lambda_{1},\lambda_{2},\lambda_{3}   & \sim \mathcal{N}(1, 2^{2})\;\text{truncated to } (0,\infty) \\
        \phi_{rc},\; \log\sigma,\; \log\psi   & \sim \mathcal{N}(0, 1^{2})                                 \\
        \tau_{2,v},\; \tau_{3,v}              & \sim \text{Half-Normal}(0, 1).
    \end{aligned}
\end{equation}
Unlike Sections \ref{r3:sec:eg1-priors} and \ref{r3:sec:eg4-priors}, all standard deviations carry
half-normal rather than half-Cauchy priors. The population means of the factor loadings are
constrained to be non-negative, which removes the sign indeterminacy between the loadings and
the latent process. All three algorithms use these same priors.

\subsection{Kalman Filter Implementation}
\label{r3:sec:eg5-kalman}

The state is the bivariate $\bm{\eta}_{1,it}$ itself, so no augmentation is required. The transition and process noise matrices are $\bm{\Phi}_{i}$ and
$\operatorname{diag}(1, \psi_{1,i}^{2})$, and the initial condition is $\bm{a}_{i,1|0} = \bm{0}$ with $\bm{P}_{i,1|0} = \operatorname{diag}(1, \psi_{1,i}^{2})$.

Rather than updating on the four-dimensional observation at once, the implementation processes the four items sequentially within each timepoint. This is the univariate treatment of multivariate observations \citep[Ch.~6.4]{durbin_time_2012}: because $\bm{\Sigma}_{1,i}$ is diagonal, conditioning on the items one at a time gives the same filtered distribution as a single joint update, and every update is scalar, so no matrix inversion is needed anywhere in the filter. For the first three items the update is the usual one with $F_{itm} = \lambda_{itm}^{2}P_{11} + \sigma_{1,i}^{2}$.

The fourth item requires care. Since it is observed without error, its prediction variance is $F_{it4} = P_{22}$ with no measurement noise added. Conditioning on it therefore determines the second state component exactly: after the update, $P_{12}$ and $P_{22}$ are identically zero and the filtered covariance has rank one. The
prediction step then simplifies to
\begin{equation}
    \label{r3:eq:eg5-predict}
    \begin{aligned}
        P_{11,t+1|t} & = \Phi_{i,11}^{2}P_{11,t|t} + 1, \qquad
        P_{12,t+1|t}   = \Phi_{i,11}\Phi_{i,21}P_{11,t|t},                         \\
        P_{22,t+1|t} & = \Phi_{i,21}^{2}P_{11,t|t} + \psi_{1,i}^{2},
    \end{aligned}
\end{equation}
since the terms involving $P_{12,t|t}$ and $P_{22,t|t}$ vanish. A degenerate prediction variance is a genuine feature of the model rather than a numerical accident, but it is numerically delicate, and $F_{it4}$ is floored at $10^{-12}$ to avoid dividing by zero at the first timepoint of a participant whose $\psi_{1,i}$ is very small.

The same structure is exploited by the Metropolis-within-Gibbs implementation, which samples only $\eta_{1,it1}$ and recovers the second component deterministically as
$\eta_{1,it2} = y_{1,it4} - \eta_{2,i4}$. It therefore samples $N \cdot T$ latent values rather than $2N \cdot T$, and the comparison in Section 4.5 is against this more favorable baseline. NUTS-Joint likewise samples one latent series per participant. 

\subsection{Initial Values}
\label{r3:sec:eg5-settings}

Both \textsc{Stan} implementations were supplied with initial values, drawn afresh for each chain so that the chains remain dispersed. The values use no information that would be unavailable in a real analysis: intercepts start near the mean of their prior, loadings near one, all standard deviations at $\mathcal{U}(0.05, 0.15)$, the autoregressive means near zero on the $\tanh^{-1}$ scale, which corresponds to a process with no dynamics and is stationary by construction, the log noise means near zero, and all standardized deviations near zero. The Metropolis-within-Gibbs sampler was started by sampling from the priors.

\section{Measurement of Computation Time}
\label{r3:sec:timing}

All computation times reported in the manuscript and in Online Resource 4 are elapsed wall-clock times, measured in \textsc{R} by recording \texttt{Sys.time()} immediately before and immediately after a single function call which performs all phases of MCMC sampling. The same clock and the same convention are used for all three algorithms, in all five simulation experiments and in the application example, so that a reported time means the same thing regardless of the underlying engine. Times were recorded in seconds and are reported in minutes in Section 4 of the main text and in Online Resource 4, and in hours in Section 5 of the main text. The scripts implementing what is described below are available in our OSF repository, in the file \texttt{simulate.R} within each simulation experiment's folder and in \texttt{code/application/fit.R} for the application example.

\subsection{Simulation Experiments}
\label{r3:sec:timing-simulations}

Each combination of algorithm, simulation setting, and Monte Carlo sample was run as a separate job on the computing cluster described in Section 4 of the main text, with four CPU cores per job and \textsc{R} version 4.4.2. In every run, the four chains were executed in parallel, one chain per core (\texttt{parallel\_chains = 4} in \textsc{cmdstanr}; \texttt{parallel = TRUE} in \textsc{jagsUI}), so the measured wall time is in effect the time taken by the slowest chain.

For NUTS-Kalman and NUTS-Joint, the timed call is \texttt{\$sample()}, which runs the 1{,}000 warm-up iterations followed by the 10{,}000 sampling iterations in each chain. Warm-up is therefore included in the measured time. Compilation of the \textsc{Stan} programs is not: the models were compiled once, before any simulation job was submitted, and the simulation script loads the compiled program through \texttt{cmdstan\_model(exe\_file = ...)}, which cannot trigger compilation. Since a compiled \textsc{Stan} model is a reusable binary, its compilation is a one-time cost per model rather than a cost per analysis, and we accordingly treat it as we treat the writing of the model code itself.

For Metropolis-within-Gibbs, the timed call is \texttt{jagsUI::jags()}, which constructs the model graph, runs the 2{,}000-iteration adaptation phase, and draws the 100{,}000 retained iterations in each chain; in the model of Section 4.5 the additional burn-in of 20{,}000 iterations per chain also falls inside the timed call. \textsc{JAGS} has no separate compilation artifact---the graph is rebuilt in every run---so graph construction is part of its measured time; in these simulations it takes on the order of seconds.

Everything else falls outside the timed call: simulating the data, computing effective sample sizes and $\hat{R}$ with the \textsc{posterior} package, and writing results to disk. Because the warm-up and adaptation phases are short relative to the retained sampling (1{,}000 versus 10{,}000 iterations per chain for the NUTS-based algorithms, and 2{,}000 versus 100{,}000 for Metropolis-within-Gibbs), the measured times are dominated by sampling from the (approximate) posterior, as discussed in Section 4 of the main text.

Sampling efficiency was computed within each Monte Carlo sample as the bulk-ESS (or tail-ESS) divided by the total wall time in minutes, separately for each parameter of interest, and the minimum of these ratios over the parameters of interest is the efficiency of that run. Figures and tables report medians (and means) of these per-sample quantities across the Monte Carlo samples. The wall-time tables of Online Resource 4 aggregate the times themselves in the same way, which is why dividing the ESS tables by the time tables reproduces the plotted efficiencies only up to the difference between a ratio of medians and a median of ratios. The times to reach a target ESS reported in Tables 1 and 2 of the main text are, correspondingly, medians across Monte Carlo samples of the per-sample time implied by each run's minimum ESS rate.

\subsection{Application Example}
\label{r3:sec:timing-application}

The application example of Section 5 of the main text, whose model is specified in Section 5.1, follows the same convention. The two \textsc{Stan} models were compiled in a separate job before fitting, and the timed call is again \texttt{\$sample()}, covering the 1{,}000 warm-up and 2{,}000 sampling iterations of each of the four parallel chains. All three algorithms were supplied with the same data-informed initial values, and the two \textsc{Stan} runs additionally used a data-informed diagonal inverse metric with initial step size $0.01$, which shortens the first warm-up phase but does not alter the posterior; both are provided by \texttt{code/application/helpers.R} in the OSF repository.

For Metropolis-within-Gibbs, no single sampling call exists, because the production run is segmented in order to respect the wall-time limits of the cluster: each of the four chains runs in its own forked \textsc{rjags} process, which constructs the model graph, runs the 2{,}000 adaptation iterations, and then samples in segments of 500 iterations, saving all draws to disk after every segment. After each segment, the chain compares the duration of that segment with the remainder of its 88-hour budget and stops when another segment might not complete in time, using one and a half segment durations as the safety margin; graph construction and adaptation count against the budget. The reported wall time is measured around the complete parallel run of all four chains. The chains were then truncated to their common number of retained iterations, and the first 5{,}000 iterations were discarded as burn-in for all summaries. Effective sample sizes are thus computed from post-burn-in draws while the reported time covers the complete run---the same asymmetry that applies to the two other algorithms, whose warm-up and adaptation phases are also inside the measured time but excluded from the ESS computation. The times to reach a target ESS in Table 3 of the main text are projections obtained by dividing the target by each algorithm's minimum ESS per minute over the parameters of interest.

\onlineresource{4}

This document contains supplementary simulation results. 

In addition to the results presented here, our OSF repository\footnote{\url{https://osf.io/p754m/overview?view_only=f99de4145ddf49c9a76f0e38b5e42223}} contains trace plots for all parameters of interest from three instances of each simulation setting. The instances selected were the one with the lowest sampling efficiency, the sampling efficiency closest to the median, and the highest sampling efficiency, as measured by bulk-ESS per minute. The trace plots can be found in the following folders:

\begin{itemize}
    \item \verb!code/eg1-univariate-latent-ar1/trace_plot_simulations/trace_plots!
    \item \verb!code/eg2-three-indicator-ar1/trace_plot_simulations/trace_plots!
    \item \verb!code/eg3-trivariate-var1/trace_plot_simulations/trace_plots!
    \item \verb!code/eg4-latent-arp/trace_plot_simulations/trace_plots!
    \item \verb!code/eg5-cross-classified/trace_plot_simulations/trace_plots!
\end{itemize}

In general the trace plots show that all algorithms converge well, both when the sampling efficiency is high and low. When comparing trace plots across the three runs, note that the posterior distribution differs between the runs. That is, the traces for a given parameter cannot be expected to be centered around the same mean. 

\section{Multilevel Scalar Latent AR(1) Model}
\label{r4:sec:eg1}

Figure \ref{r4:fig:eg1-univariate-latent-ar1-rhat-success} shows the proportion of successful runs, with different $\hat{R}$ thresholds. Figure \ref{r4:fig:eg1-univariate-latent-ar1-tail-efficiency} shows the sampling efficiency compared in terms of tail-ESS per time unit. Figures \ref{r4:fig:eg1-univariate-latent-ar1-bulk-ess-all-pars} and \ref{r4:fig:eg1-univariate-latent-ar1-tail-ess-all-pars} show the bulk-ESS and tail-ESS per time unit for all parameters separately. 

Finally, Figure \ref{r4:fig:eg1-agreement} compares the three algorithms directly. Since all three target the same posterior for a given dataset, the comparison is made within each Monte Carlo sample rather than between distributions of estimates across samples: each algorithm is compared with the average of the three, and the difference is expressed in units of the posterior standard deviation so that it is comparable across parameters. Exact agreement is not to be expected, since each algorithm estimates the posterior with Monte Carlo error; what the figure should show, and does, is that the deviations are centred on zero and that the posterior standard deviations agree. The width of the boxes reflects how precisely each algorithm estimates the posterior, which is the subject of the efficiency comparisons above, rather than any disagreement about what the posterior is.

Table \ref{r4:tab:eg1-univariate-latent-ar1-ess} shows effective sample sizes and Table \ref{r4:tab:eg1-univariate-latent-ar1-time-total} shows wall times.

\begin{figure}
    \centering
    \includegraphics[width=\linewidth]{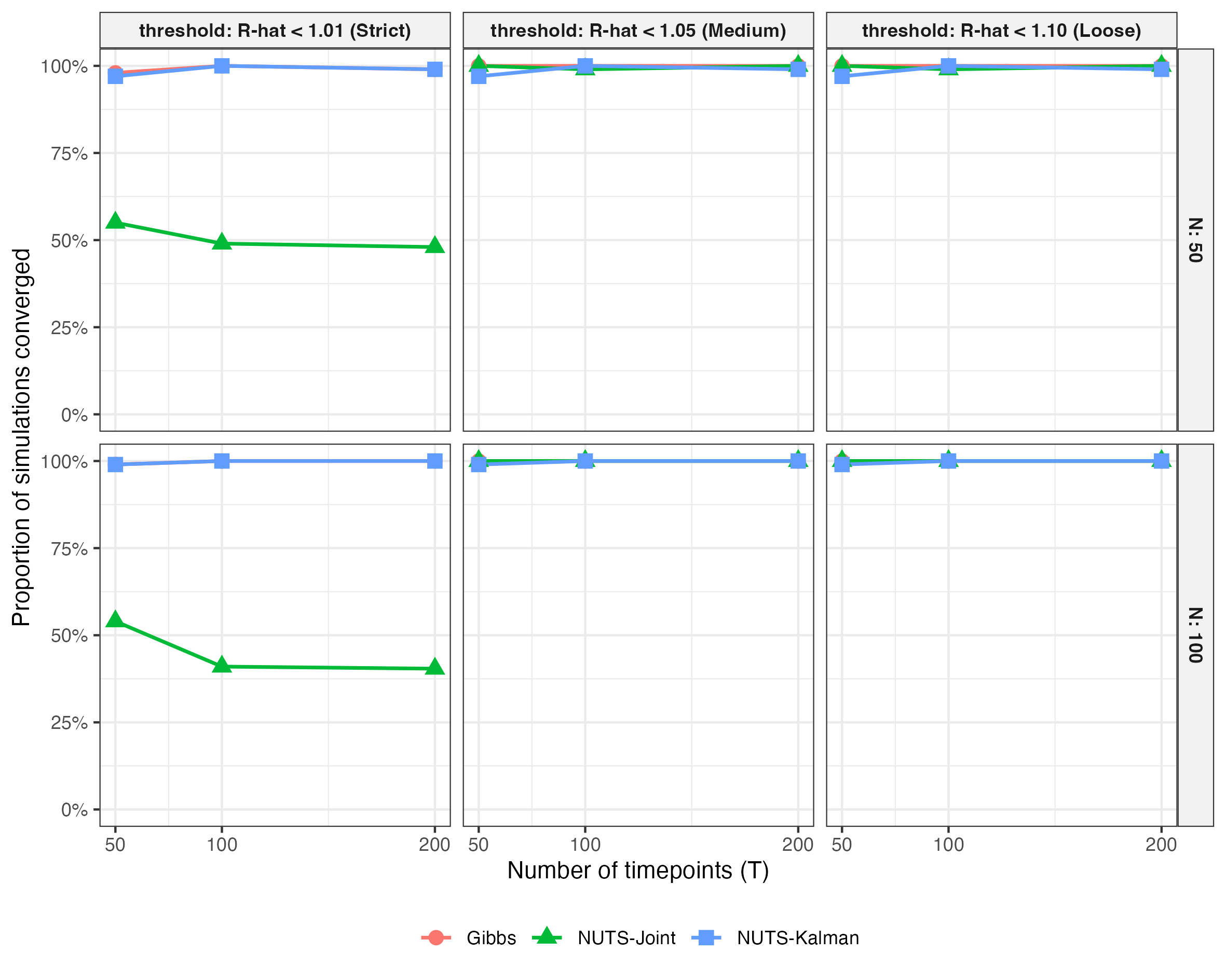}
    \caption{Proportion of simulations for which $\hat{R}$ was below the given thresholds.}
    \label{r4:fig:eg1-univariate-latent-ar1-rhat-success}
\end{figure}

\begin{figure}
    \centering
    \includegraphics[width=\linewidth]{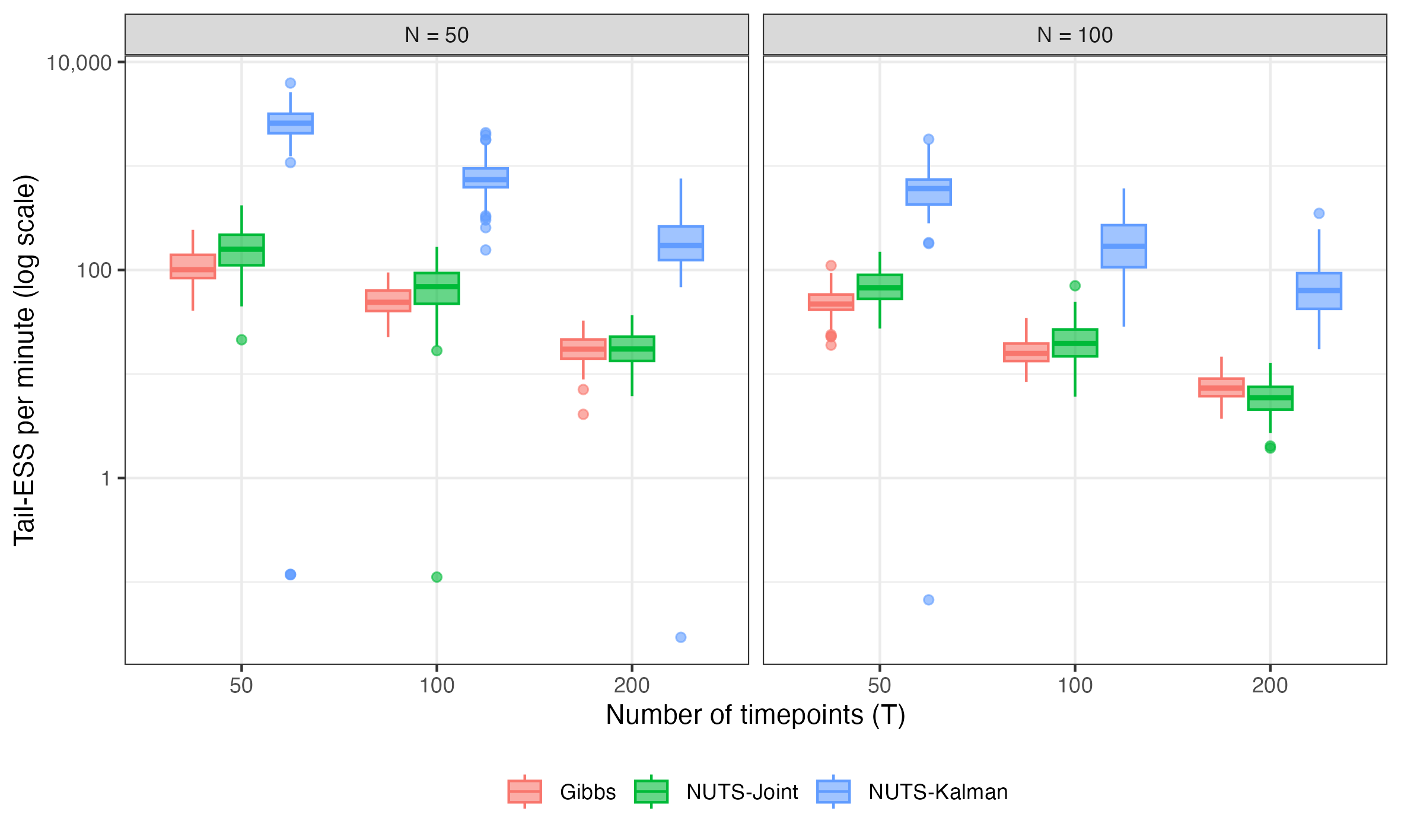}
    \caption{Box plots of tail efficiency for the three algorithms compared for the multilevel scalar latent AR(1) model.}
    \label{r4:fig:eg1-univariate-latent-ar1-tail-efficiency}
\end{figure}

\begin{figure}
    \centering
    \includegraphics[width=\linewidth]{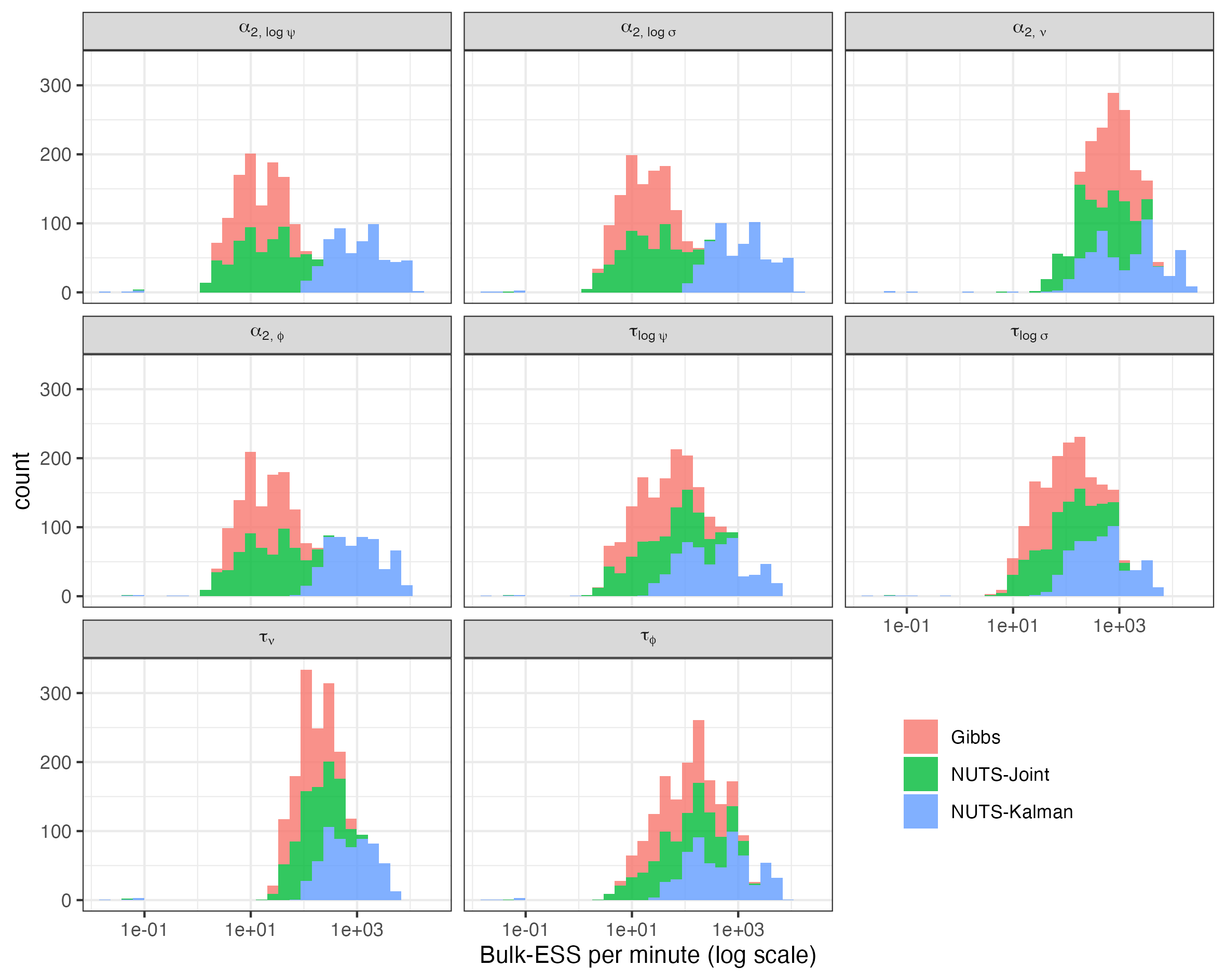}
    \caption{Sampling efficiency in terms of bulk-ESS for the three algorithms compared for the multilevel scalar latent AR(1) model.}
    \label{r4:fig:eg1-univariate-latent-ar1-bulk-ess-all-pars}
\end{figure}

\begin{figure}
    \centering
    \includegraphics[width=\linewidth]{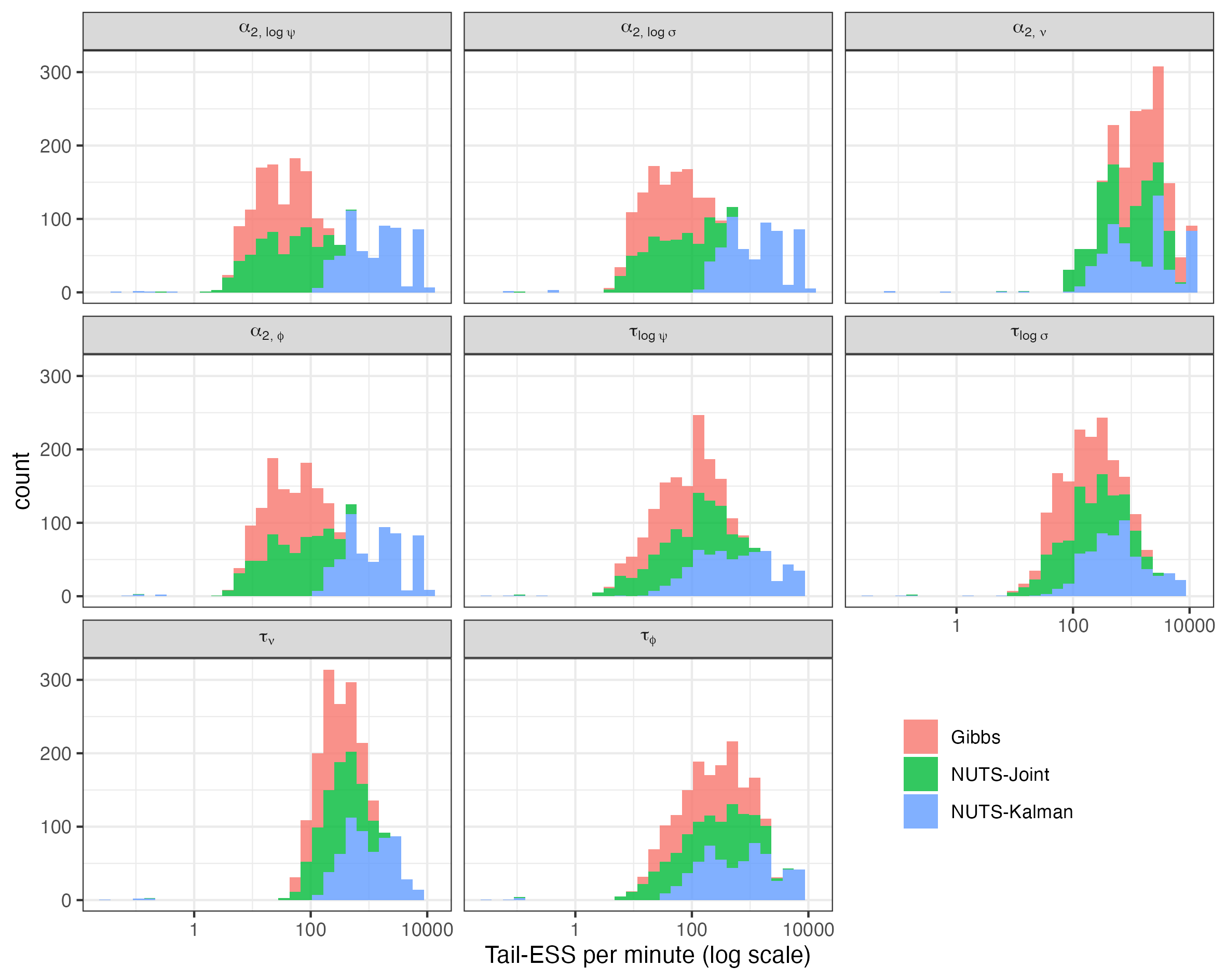}
    \caption{Sampling efficiency in terms of tail-ESS for the three algorithms compared for the multilevel scalar latent AR(1) model.}
    \label{r4:fig:eg1-univariate-latent-ar1-tail-ess-all-pars}
\end{figure}

\begin{landscape}
\begin{figure}[H]
    \centering
    \includegraphics[width=\linewidth, height=0.88\textheight, keepaspectratio]{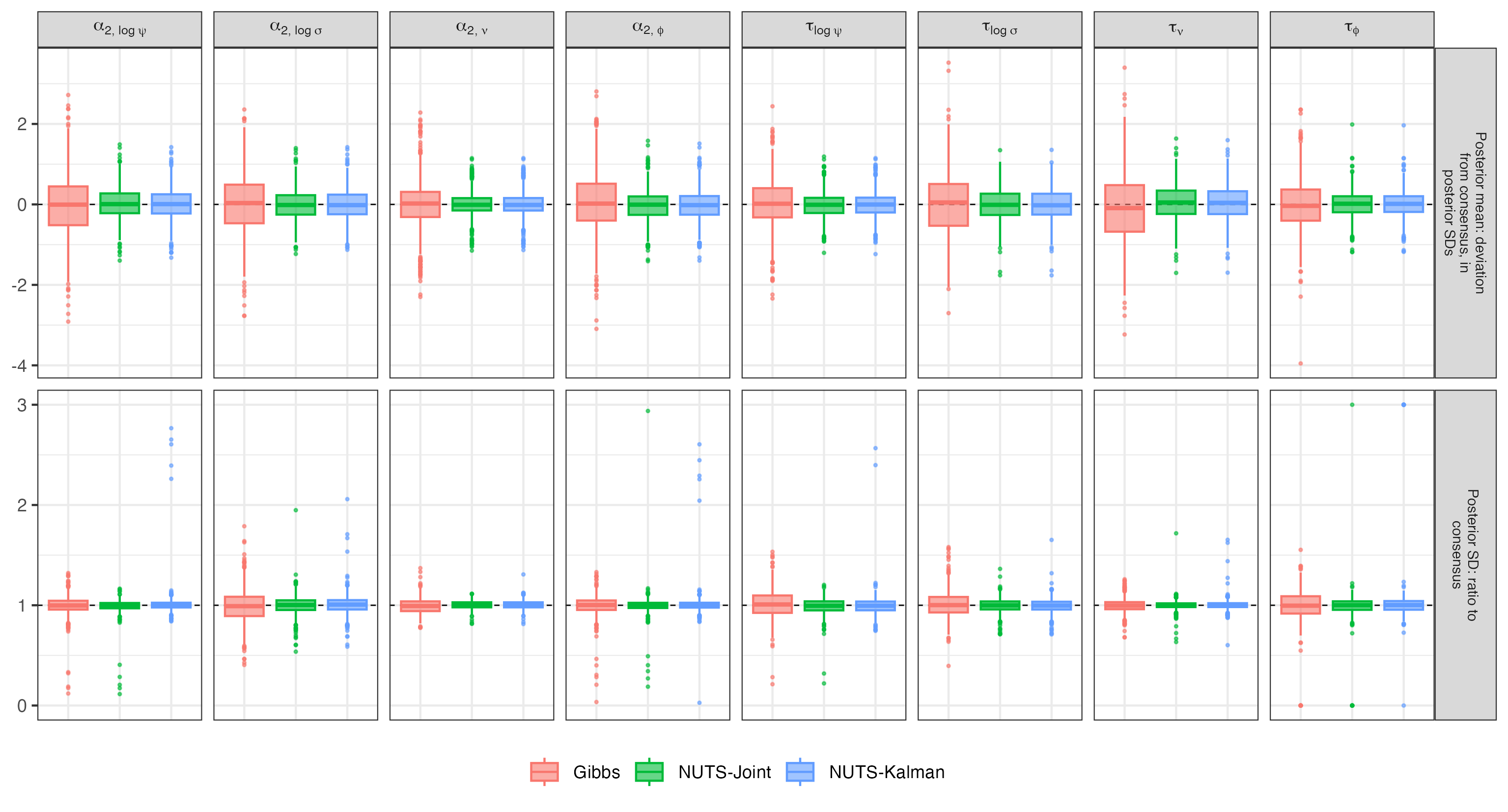}
    \caption{Agreement between the three algorithms for the multilevel scalar latent AR(1) model. Each algorithm is compared with the average of the three within each Monte Carlo sample, so the comparison is paired. The upper row shows the deviation of the posterior mean from that average, in units of the posterior standard deviation; the lower row shows the ratio of the posterior standard deviation to the same average. Boxes centred on the dashed lines indicate that the algorithms target the same posterior, and their width reflects Monte Carlo error rather than disagreement.}
    \label{r4:fig:eg1-agreement}
\end{figure}
\end{landscape}

\begin{table}
\centering
\caption{\label{r4:tab:eg1-univariate-latent-ar1-ess}Effective sample size for the multilevel scalar latent AR(1) model of Section 4.1. Within each Monte Carlo sample the minimum across the parameters of interest is taken, and the mean and median of those minima across Monte Carlo samples are reported.}
\centering
\begin{tabular}[t]{rrlrrrr}
\toprule
\multicolumn{3}{c}{ } & \multicolumn{2}{c}{Bulk-ESS} & \multicolumn{2}{c}{Tail-ESS} \\
\cmidrule(l{3pt}r{3pt}){4-5} \cmidrule(l{3pt}r{3pt}){6-7}
N & T & Method & Mean & Median & Mean & Median\\
\midrule
50 & 50 & Gibbs & 1395 & 1335 & 2691 & 2508\\
50 & 50 & NUTS-Joint & 438 & 407 & 872 & 840\\
50 & 50 & NUTS-Kalman & 8132 & 8277 & 8900 & 8643\\
50 & 100 & Gibbs & 1362 & 1342 & 2886 & 2846\\
50 & 100 & NUTS-Joint & 381 & 380 & 812 & 813\\
\addlinespace
50 & 100 & NUTS-Kalman & 7260 & 7042 & 8334 & 7446\\
50 & 200 & Gibbs & 1317 & 1321 & 2759 & 2789\\
50 & 200 & NUTS-Joint & 365 & 355 & 787 & 765\\
50 & 200 & NUTS-Kalman & 6222 & 6072 & 8339 & 7537\\
100 & 50 & Gibbs & 1454 & 1442 & 3028 & 3078\\
\addlinespace
100 & 50 & NUTS-Joint & 407 & 396 & 841 & 821\\
100 & 50 & NUTS-Kalman & 6435 & 6368 & 6922 & 6401\\
100 & 100 & Gibbs & 1353 & 1326 & 2920 & 2878\\
100 & 100 & NUTS-Joint & 369 & 376 & 798 & 787\\
100 & 100 & NUTS-Kalman & 5705 & 5309 & 7679 & 6530\\
\addlinespace
100 & 200 & Gibbs & 1333 & 1327 & 2871 & 2845\\
100 & 200 & NUTS-Joint & 362 & 368 & 774 & 784\\
100 & 200 & NUTS-Kalman & 4678 & 4443 & 6311 & 5592\\
\bottomrule
\end{tabular}
\end{table}

\begin{table}
\centering
\caption{\label{r4:tab:eg1-univariate-latent-ar1-time-total}Wall clock time in minutes for the multilevel scalar latent AR(1) model of Section 4.1, over the same Monte Carlo samples as Table \ref{r4:tab:eg1-univariate-latent-ar1-ess}. Dividing the effective sample sizes in that table by the times here recovers the efficiencies plotted in Section 4.1 up to the difference between the ratio of medians and the median of ratios, which is what the figures show.}
\centering
\begin{tabular}[t]{rrlrr}
\toprule
\multicolumn{3}{c}{ } & \multicolumn{2}{c}{Time (minutes)} \\
\cmidrule(l{3pt}r{3pt}){4-5}
N & T & Method & Mean & Median\\
\midrule
50 & 50 & Gibbs & 24.9 & 24.3\\
50 & 50 & NUTS-Joint & 5.4 & 4.6\\
50 & 50 & NUTS-Kalman & 6.1 & 3.3\\
50 & 100 & Gibbs & 57.3 & 55.8\\
50 & 100 & NUTS-Joint & 13.1 & 10.2\\
\addlinespace
50 & 100 & NUTS-Kalman & 10.7 & 10.4\\
50 & 200 & Gibbs & 161.8 & 157.7\\
50 & 200 & NUTS-Joint & 45.8 & 46.0\\
50 & 200 & NUTS-Kalman & 47.2 & 43.0\\
100 & 50 & Gibbs & 60.9 & 60.3\\
\addlinespace
100 & 50 & NUTS-Joint & 12.1 & 10.5\\
100 & 50 & NUTS-Kalman & 13.2 & 11.4\\
100 & 100 & Gibbs & 180.4 & 169.1\\
100 & 100 & NUTS-Joint & 40.6 & 41.7\\
100 & 100 & NUTS-Kalman & 43.3 & 40.9\\
\addlinespace
100 & 200 & Gibbs & 386.7 & 395.2\\
100 & 200 & NUTS-Joint & 137.8 & 132.7\\
100 & 200 & NUTS-Kalman & 96.2 & 94.9\\
\bottomrule
\end{tabular}
\end{table}

\clearpage

\section{Multilevel One-Factor Three-Indicator AR(1) Model}
\label{r4:sec:eg2}

Figure \ref{r4:fig:eg2-three-indicator-ar1-rhat-success} shows the proportion of successful runs, with different $\hat{R}$ thresholds. Figures \ref{r4:fig:eg2-three-indicator-ar1-bulk-ess-all-pars} and \ref{r4:fig:eg2-three-indicator-ar1-tail-ess-all-pars} show the bulk-ESS and tail-ESS per time unit for all parameters separately. 

Figure \ref{r4:fig:eg2-agreement} compares the three algorithms in the same way as in Section \ref{r4:sec:eg1}, and is read the same way.

Table \ref{r4:tab:eg2-three-indicator-ar1-ess} shows effective sample sizes and Table \ref{r4:tab:eg2-three-indicator-ar1-time-total} shows wall times.

\begin{figure}
    \centering
    \includegraphics[width=\linewidth]{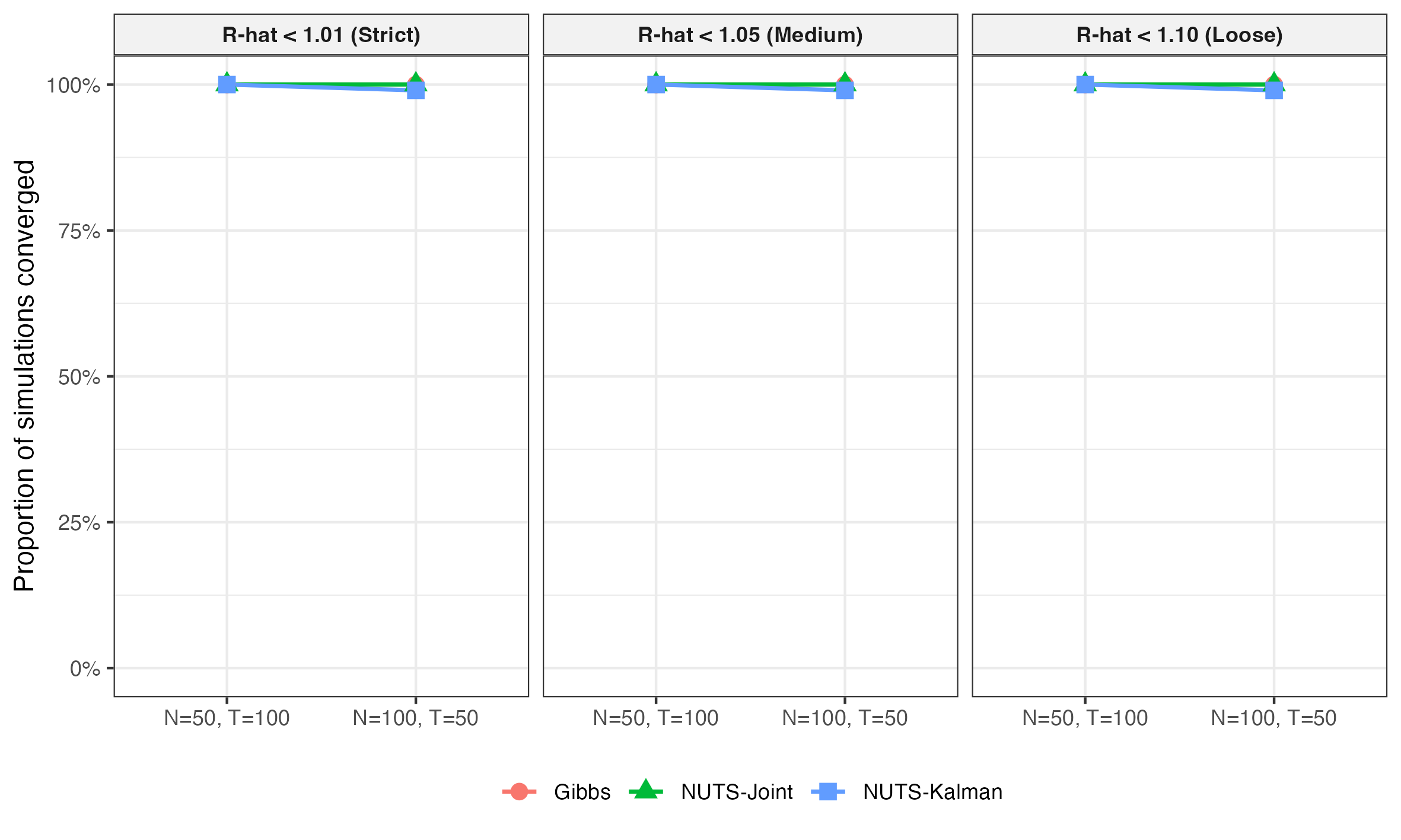}
    \caption{Proportion of simulations for which $\hat{R}$ was below the given thresholds.}
    \label{r4:fig:eg2-three-indicator-ar1-rhat-success}
\end{figure}

\begin{figure}
    \centering
    \includegraphics[width=\linewidth]{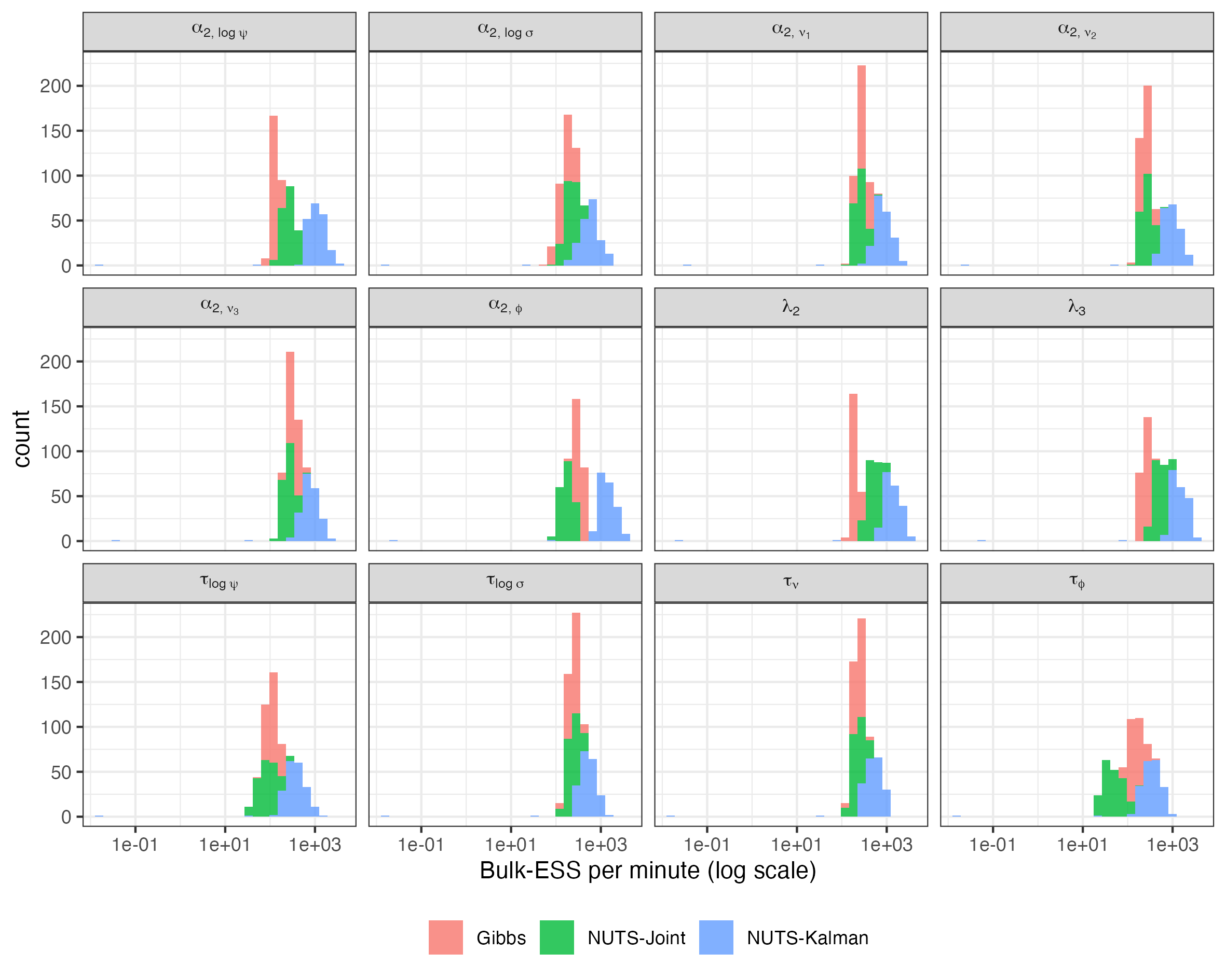}
    \caption{Sampling efficiency in terms of bulk-ESS for the three algorithms compared for the multilevel one-factor three-indicator AR(1) model.}
    \label{r4:fig:eg2-three-indicator-ar1-bulk-ess-all-pars}
\end{figure}

\begin{figure}
    \centering
    \includegraphics[width=\linewidth]{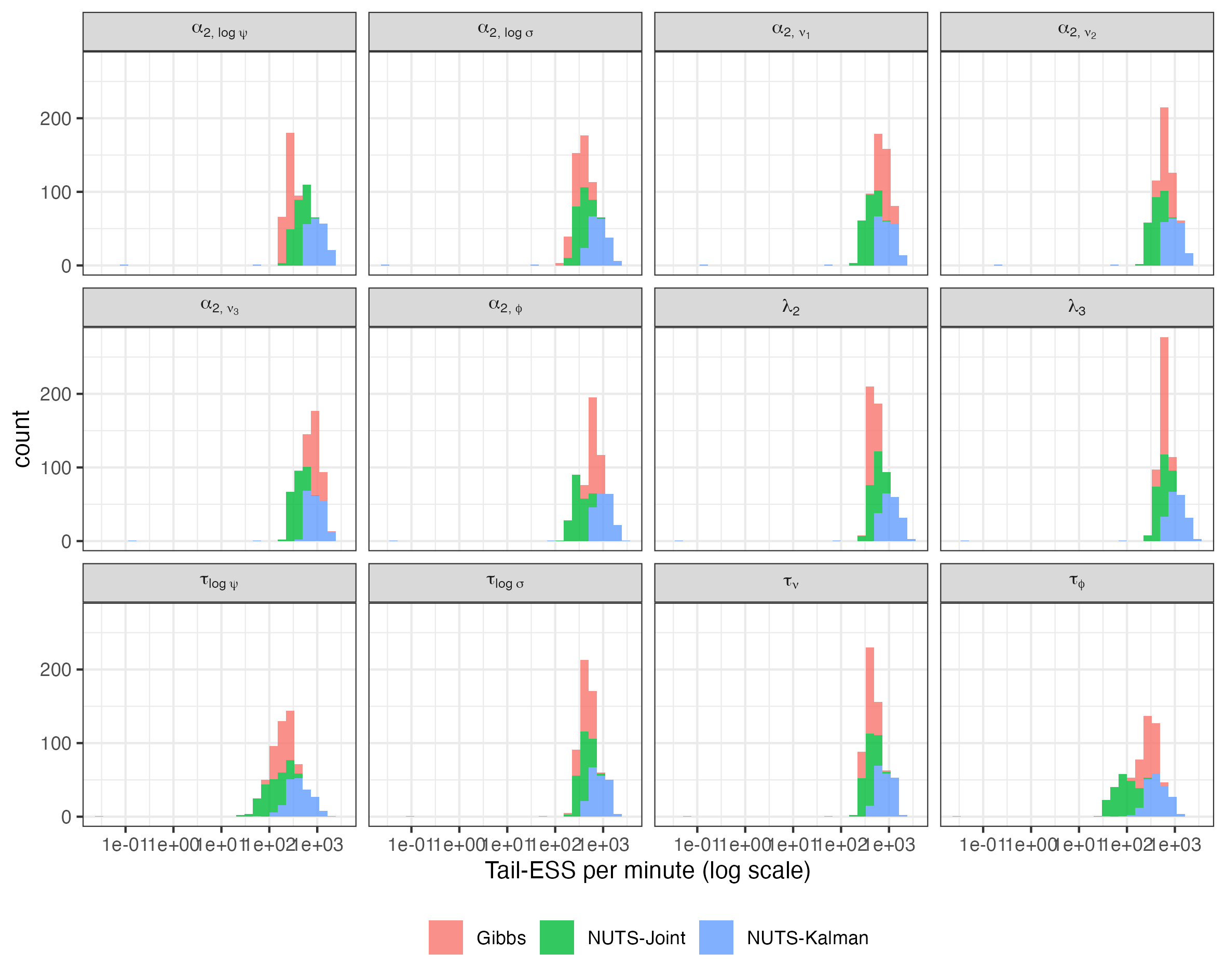}
    \caption{Sampling efficiency in terms of tail-ESS for the three algorithms compared for the multilevel one-factor three-indicator AR(1) model.}
    \label{r4:fig:eg2-three-indicator-ar1-tail-ess-all-pars}
\end{figure}

\begin{landscape}
\begin{figure}[H]
    \centering
    \includegraphics[width=\linewidth, height=0.88\textheight, keepaspectratio]{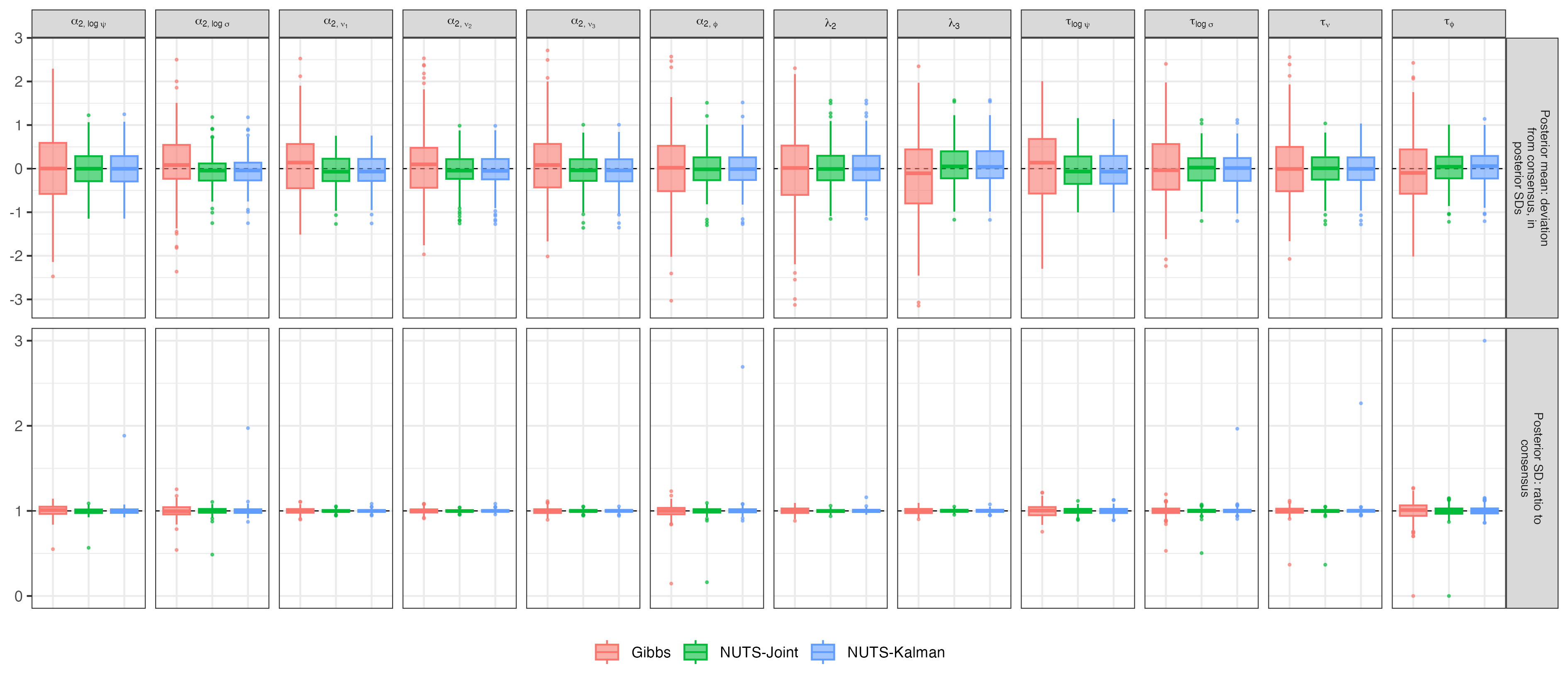}
    \caption{Agreement between the three algorithms for the multilevel one-factor three-indicator AR(1) model. Each algorithm is compared with the average of the three within each Monte Carlo sample, so the comparison is paired. The upper row shows the deviation of the posterior mean from that average, in units of the posterior standard deviation; the lower row shows the ratio of the posterior standard deviation to the same average. Boxes centred on the dashed lines indicate that the algorithms target the same posterior, and their width reflects Monte Carlo error rather than disagreement.}
    \label{r4:fig:eg2-agreement}
\end{figure}
\end{landscape}

\begin{table}
\centering
\caption{\label{r4:tab:eg2-three-indicator-ar1-ess}Effective sample size for the multilevel one-factor three-indicator AR(1) model of Section 4.2. Within each Monte Carlo sample the minimum across the parameters of interest is taken, and the mean and median of those minima across Monte Carlo samples are reported.}
\centering
\begin{tabular}[t]{rrlrrrr}
\toprule
\multicolumn{3}{c}{ } & \multicolumn{2}{c}{Bulk-ESS} & \multicolumn{2}{c}{Tail-ESS} \\
\cmidrule(l{3pt}r{3pt}){4-5} \cmidrule(l{3pt}r{3pt}){6-7}
N & T & Method & Mean & Median & Mean & Median\\
\midrule
100 & 50 & Gibbs & 15012 & 14812 & 27130 & 25610\\
100 & 50 & NUTS-Joint & 2359 & 2254 & 4338 & 4312\\
100 & 50 & NUTS-Kalman & 8792 & 8959 & 10305 & 9573\\
50 & 100 & Gibbs & 16009 & 15899 & 29596 & 29722\\
50 & 100 & NUTS-Joint & 3253 & 3090 & 5337 & 4944\\
\addlinespace
50 & 100 & NUTS-Kalman & 9284 & 9303 & 10839 & 10264\\
\bottomrule
\end{tabular}
\end{table}

\begin{table}
\centering
\caption{\label{r4:tab:eg2-three-indicator-ar1-time-total}Wall clock time in minutes for the multilevel one-factor three-indicator AR(1) model of Section 4.2, over the same Monte Carlo samples as Table \ref{r4:tab:eg2-three-indicator-ar1-ess}. Dividing the effective sample sizes in that table by the times here recovers the efficiencies plotted in Section 4.2 up to the difference between the ratio of medians and the median of ratios, which is what the figures show.}
\centering
\begin{tabular}[t]{rrlrr}
\toprule
\multicolumn{3}{c}{ } & \multicolumn{2}{c}{Time (minutes)} \\
\cmidrule(l{3pt}r{3pt}){4-5}
N & T & Method & Mean & Median\\
\midrule
100 & 50 & Gibbs & 159.8 & 168.2\\
100 & 50 & NUTS-Joint & 65.5 & 65.5\\
100 & 50 & NUTS-Kalman & 37.4 & 32.4\\
50 & 100 & Gibbs & 153.3 & 158.7\\
50 & 100 & NUTS-Joint & 55.8 & 54.1\\
\addlinespace
50 & 100 & NUTS-Kalman & 32.4 & 25.9\\
\bottomrule
\end{tabular}
\end{table}

\clearpage

\section{Multilevel Trivariate VAR(1) Model}
\label{r4:sec:eg3}

Figure \ref{r4:fig:eg3-trivariate-rhat-success} shows the proportion of successful runs, with different $\hat{R}$ thresholds. Figure \ref{r4:fig:eg3-agreement} compares the three algorithms in the same way as in Section \ref{r4:sec:eg1}. There are 28 parameters of interest here, so they are grouped into families, and the three values of $K$ are pooled.

Table \ref{r4:tab:eg3-trivariate-var1-ess} shows effective sample sizes and Table \ref{r4:tab:eg3-trivariate-var1-time-total} shows wall times.

\begin{figure}
    \centering
    \includegraphics[width=\linewidth]{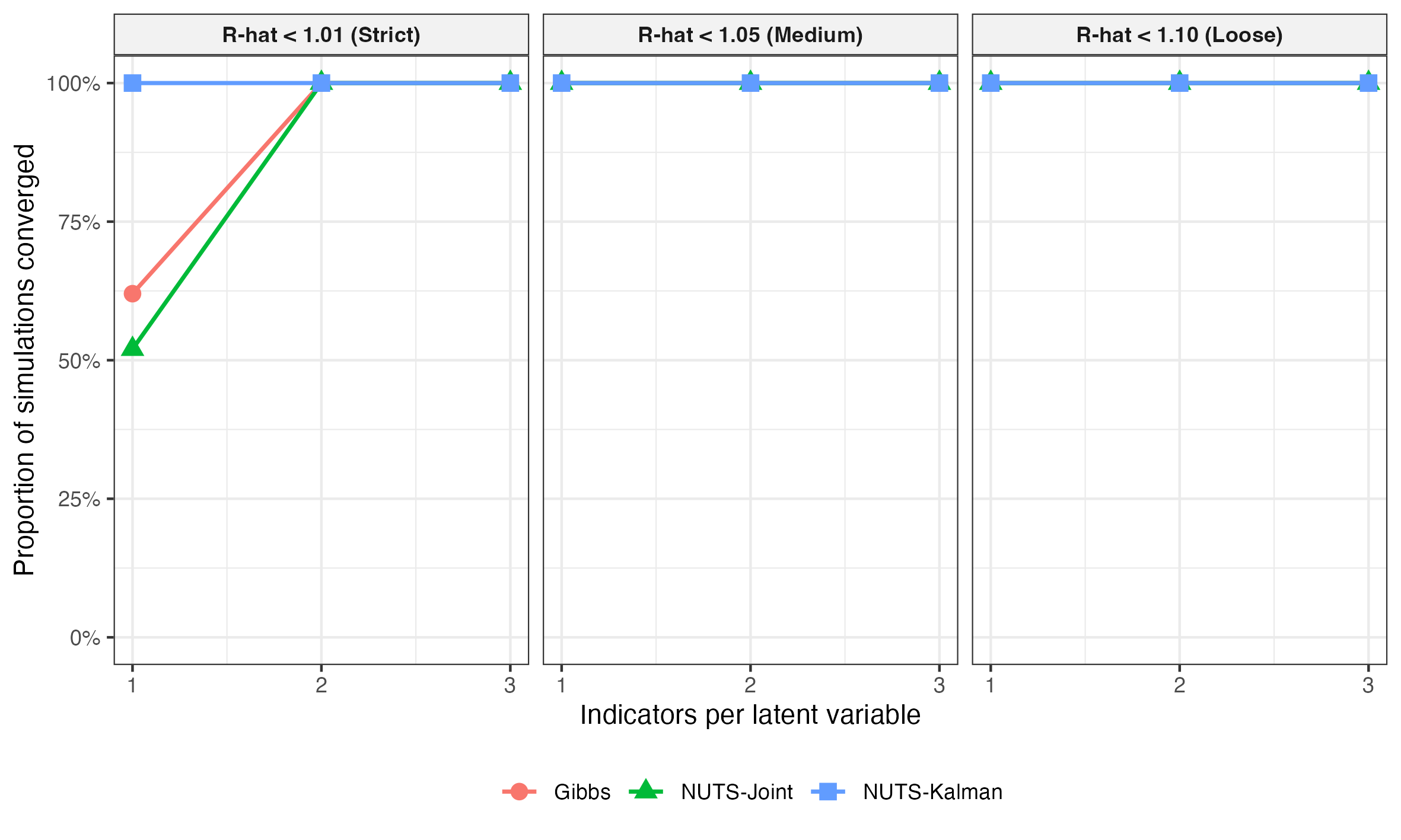}
    \caption{Proportion of simulations for which $\hat{R}$ was below the given thresholds.}
    \label{r4:fig:eg3-trivariate-rhat-success}
\end{figure}

\begin{landscape}
\begin{figure}[H]
    \centering
    \includegraphics[width=\linewidth, height=0.88\textheight, keepaspectratio]{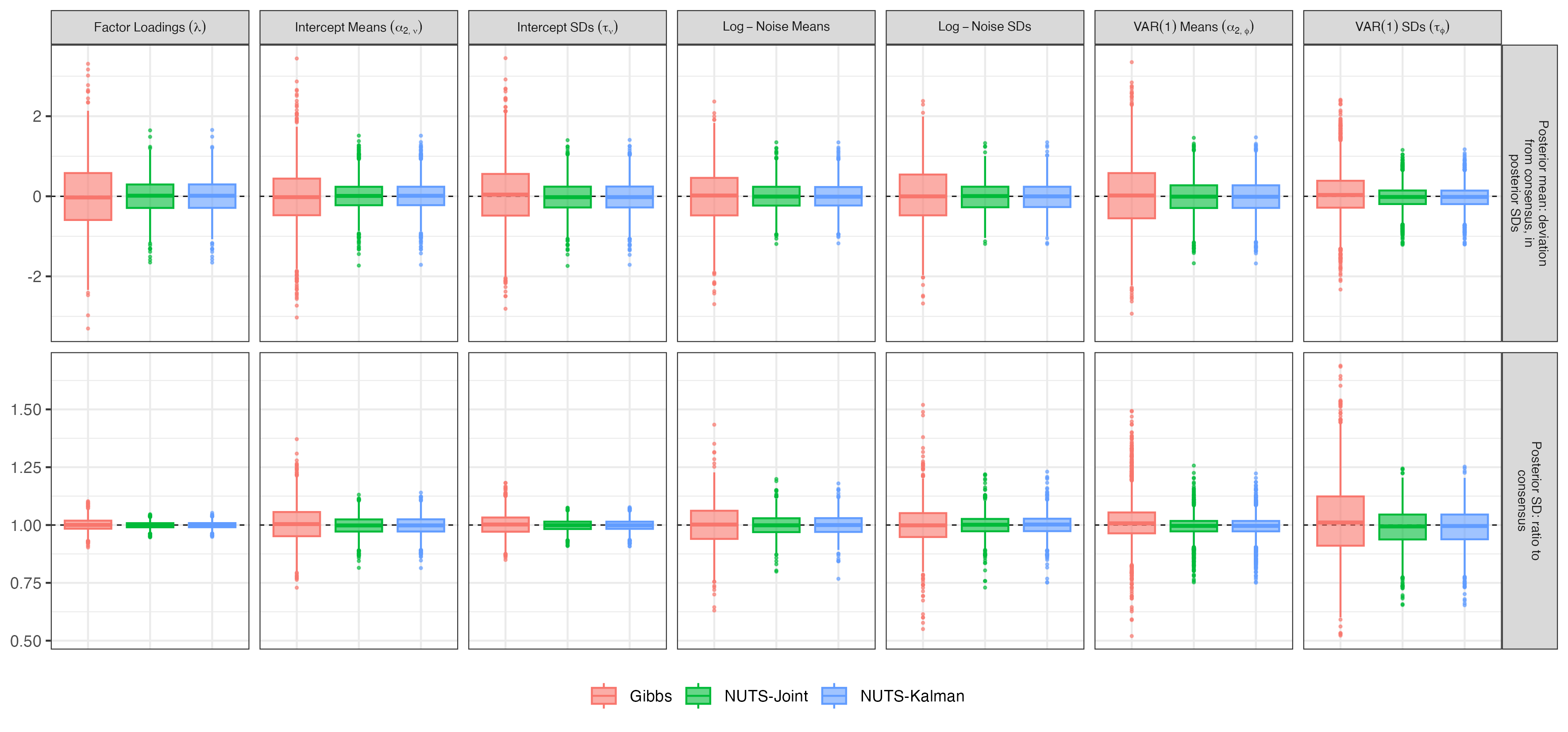}
    \caption{Agreement between the three algorithms for the multilevel trivariate VAR(1) model. Each algorithm is compared with the average of the three within each Monte Carlo sample, so the comparison is paired. The upper row shows the deviation of the posterior mean from that average, in units of the posterior standard deviation; the lower row shows the ratio of the posterior standard deviation to the same average. Boxes centred on the dashed lines indicate that the algorithms target the same posterior, and their width reflects Monte Carlo error rather than disagreement. Parameters are grouped into families, and the three values of $K$ are pooled.}
    \label{r4:fig:eg3-agreement}
\end{figure}
\end{landscape}

\begin{table}
\centering
\caption{\label{r4:tab:eg3-trivariate-var1-ess}Effective sample size for the multilevel trivariate VAR(1) model of Section 4.3, with $N=50$ and $T=100$. Within each Monte Carlo sample the minimum across the parameters of interest is taken, and the mean and median of those minima across Monte Carlo samples are reported.}
\centering
\begin{tabular}[t]{rlrrrr}
\toprule
\multicolumn{2}{c}{ } & \multicolumn{2}{c}{Bulk-ESS} & \multicolumn{2}{c}{Tail-ESS} \\
\cmidrule(l{3pt}r{3pt}){3-4} \cmidrule(l{3pt}r{3pt}){5-6}
Indicators & Method & Mean & Median & Mean & Median\\
\midrule
1 & Gibbs & 490 & 474 & 961 & 956\\
1 & NUTS-Joint & 426 & 412 & 782 & 748\\
1 & NUTS-Kalman & 4493 & 4187 & 6340 & 5935\\
2 & Gibbs & 9033 & 9094 & 18553 & 18747\\
2 & NUTS-Joint & 2583 & 2525 & 4574 & 4274\\
\addlinespace
2 & NUTS-Kalman & 8876 & 9007 & 10227 & 9758\\
3 & Gibbs & 6563 & 6393 & 13880 & 13622\\
3 & NUTS-Joint & 2683 & 2611 & 4403 & 4017\\
3 & NUTS-Kalman & 6018 & 5912 & 8768 & 8627\\
\bottomrule
\end{tabular}
\end{table}

\begin{table}
\centering
\caption{\label{r4:tab:eg3-trivariate-var1-time-total}Wall clock time in minutes for the multilevel trivariate VAR(1) model of Section 4.3, with $N=50$ and $T=100$, over the same Monte Carlo samples as Table \ref{r4:tab:eg3-trivariate-var1-ess}. Dividing the effective sample sizes in that table by the times here recovers the efficiencies plotted in Section 4.3 up to the difference between the ratio of medians and the median of ratios, which is what the figures show.}
\centering
\begin{tabular}[t]{rlrr}
\toprule
\multicolumn{2}{c}{ } & \multicolumn{2}{c}{Time (minutes)} \\
\cmidrule(l{3pt}r{3pt}){3-4}
Indicators & Method & Mean & Median\\
\midrule
1 & Gibbs & 565.3 & 563.9\\
1 & NUTS-Joint & 669.3 & 658.7\\
1 & NUTS-Kalman & 535.6 & 487.8\\
2 & Gibbs & 773.8 & 767.0\\
2 & NUTS-Joint & 431.6 & 438.2\\
\addlinespace
2 & NUTS-Kalman & 459.7 & 460.2\\
3 & Gibbs & 1015.8 & 1010.4\\
3 & NUTS-Joint & 553.7 & 558.8\\
3 & NUTS-Kalman & 490.4 & 486.3\\
\bottomrule
\end{tabular}
\end{table}

\clearpage

\section{Multilevel Latent AR(p) Model}
\label{r4:sec:eg4}

Figure \ref{r4:fig:eg4-latent-arp-rhat-success} shows the proportion of successful runs, with different $\hat{R}$ thresholds. Figure \ref{r4:fig:eg4-latent-arp-tail-efficiency} shows a comparison of the algorithms in terms of tail efficiency. Figure \ref{r4:fig:eg4-agreement} compares the three algorithms in the same way as in Section \ref{r4:sec:eg1}, with the parameters grouped into families and the four lags pooled.

Table \ref{r4:tab:eg4-latent-arp-ess} shows effective sample sizes and Table \ref{r4:tab:eg4-latent-arp-time-total} shows wall times.

\begin{figure}
    \centering
    \includegraphics[width=\linewidth]{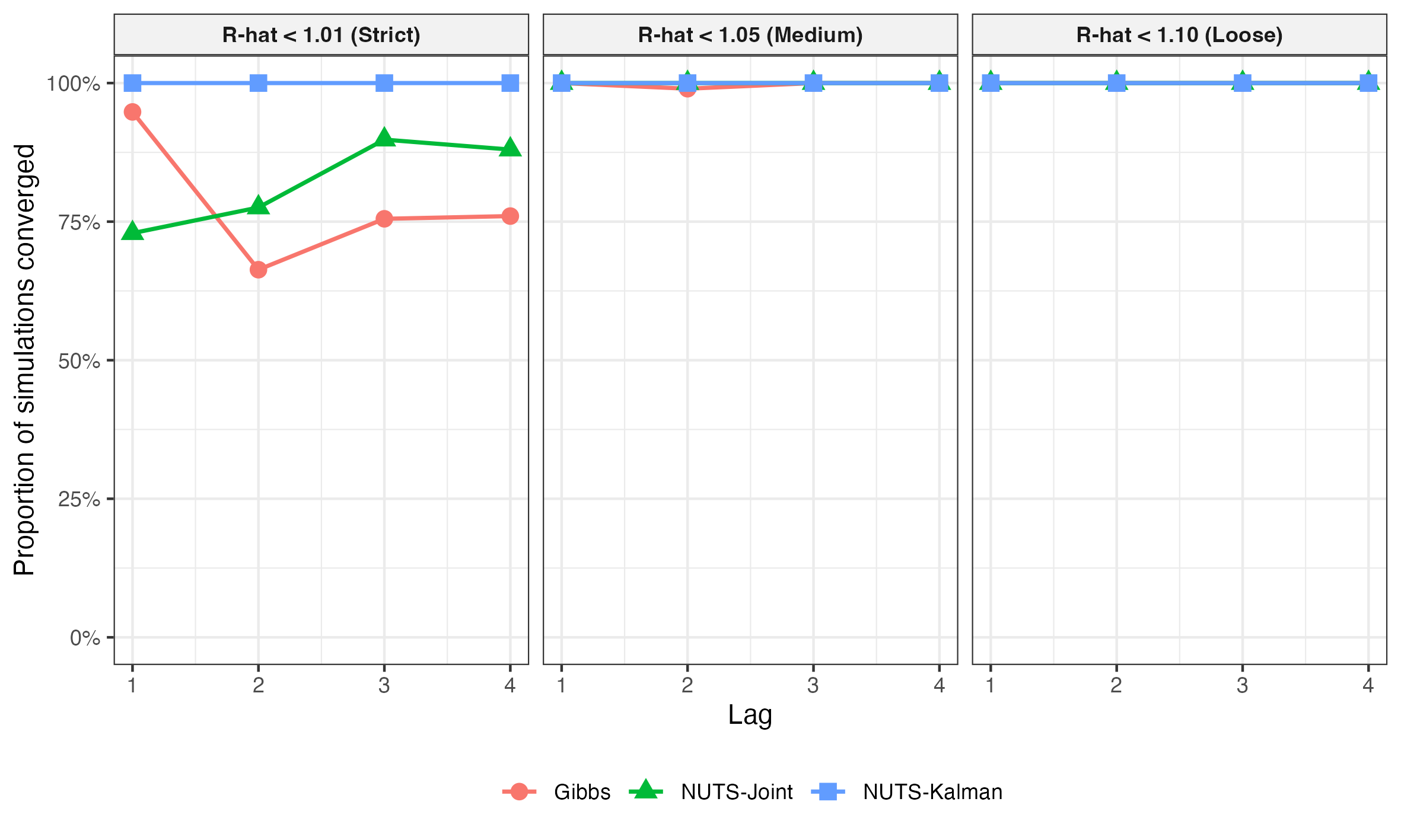}
    \caption{Proportion of simulations for which $\hat{R}$ was below the given thresholds.}
    \label{r4:fig:eg4-latent-arp-rhat-success}
\end{figure}

\begin{figure}
    \centering
    \includegraphics[width=\linewidth]{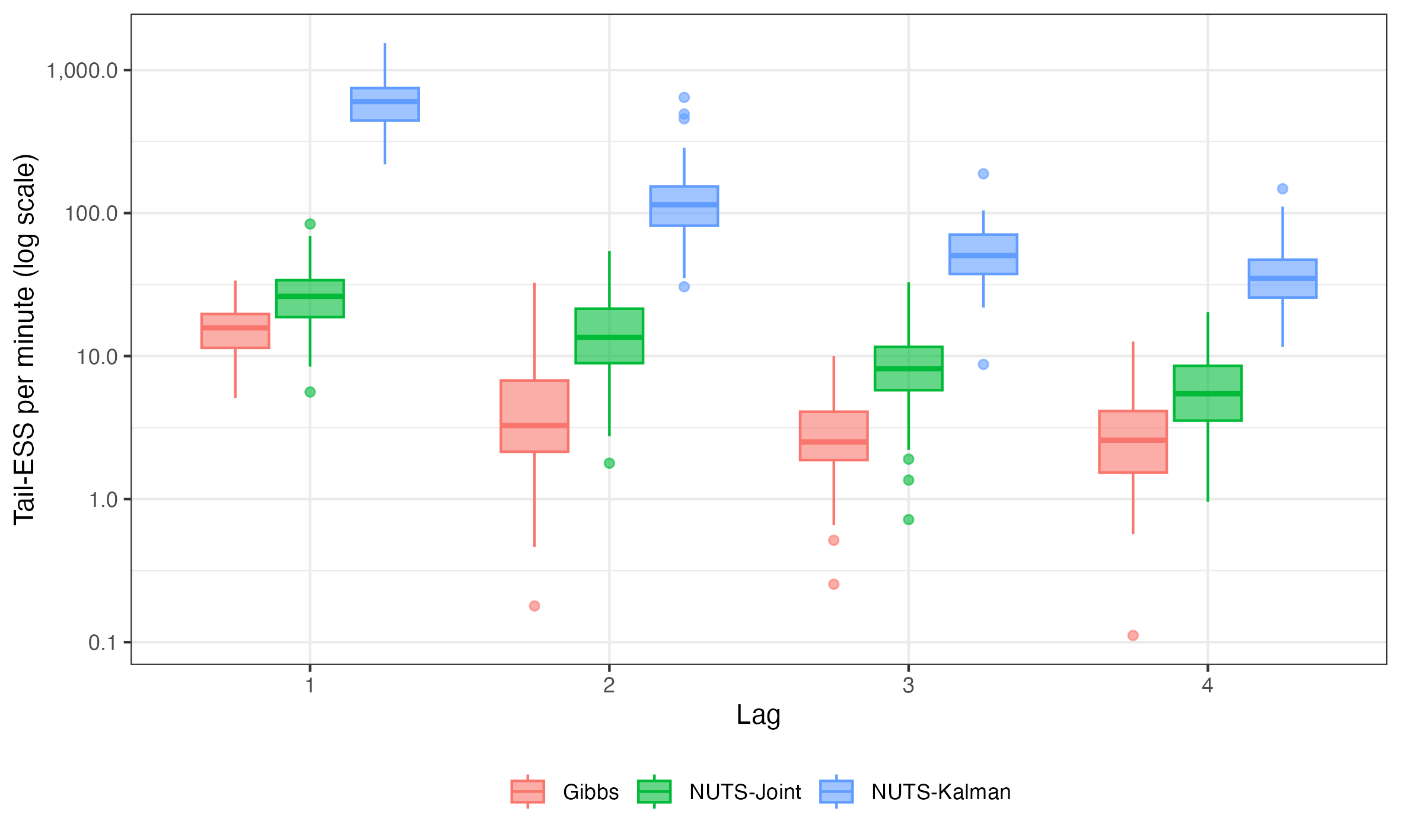}
    \caption{Box plot of tail efficiency for the multilevel latent AR($p$) model.}
    \label{r4:fig:eg4-latent-arp-tail-efficiency}
\end{figure}

\begin{landscape}
\begin{figure}[H]
    \centering
    \includegraphics[width=\linewidth, height=0.88\textheight, keepaspectratio]{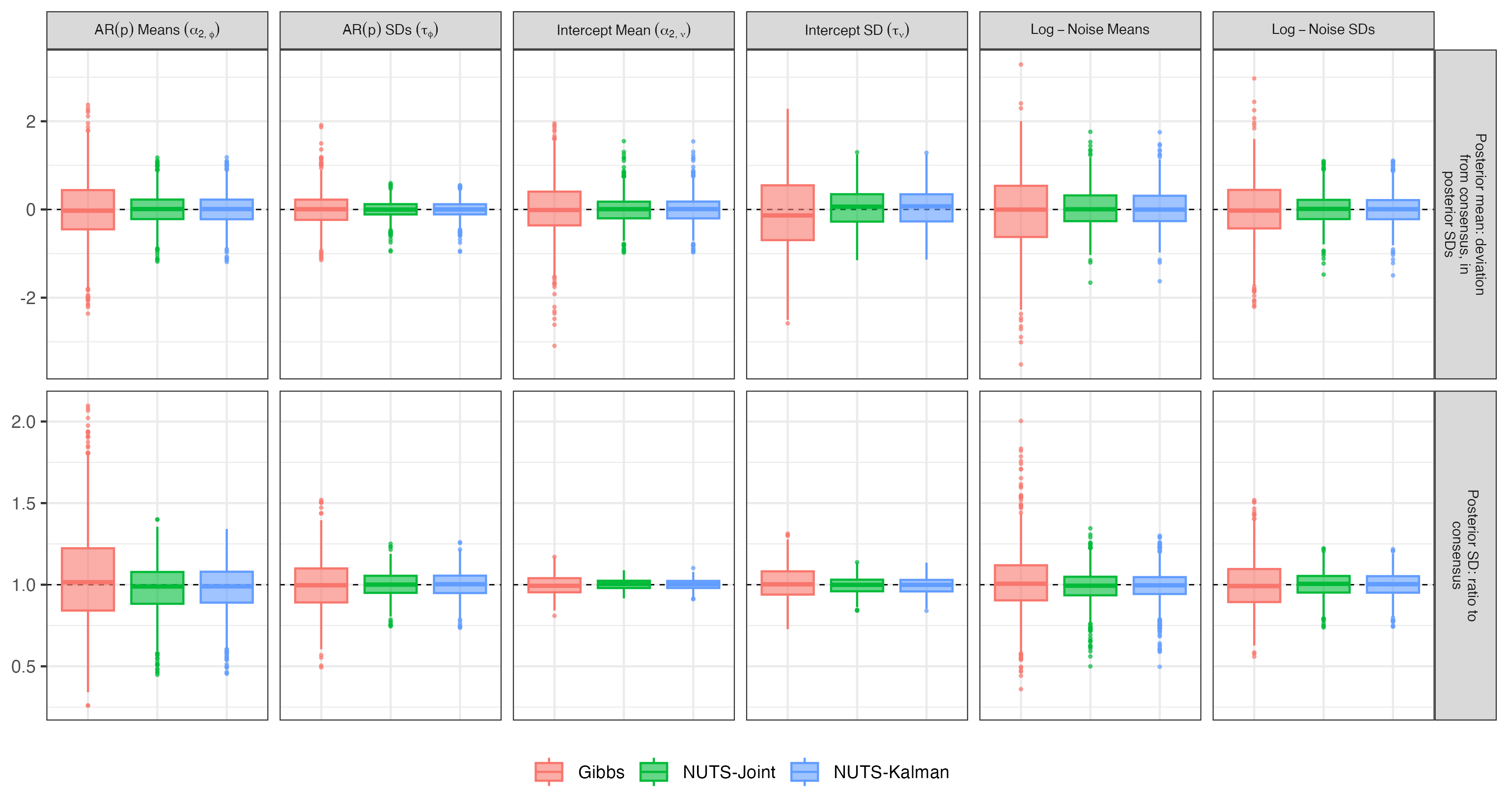}
    \caption{Agreement between the three algorithms for the multilevel latent AR($p$) model. Each algorithm is compared with the average of the three within each Monte Carlo sample, so the comparison is paired. The upper row shows the deviation of the posterior mean from that average, in units of the posterior standard deviation; the lower row shows the ratio of the posterior standard deviation to the same average. Boxes centred on the dashed lines indicate that the algorithms target the same posterior, and their width reflects Monte Carlo error rather than disagreement. Parameters are grouped into families, and the four lags are pooled.}
    \label{r4:fig:eg4-agreement}
\end{figure}
\end{landscape}

\begin{table}
\centering
\caption{\label{r4:tab:eg4-latent-arp-ess}Effective sample size for the multilevel latent AR(p) model of Section 4.4, with $N=100$ and $T=50$. Within each Monte Carlo sample the minimum across the parameters of interest is taken, and the mean and median of those minima across Monte Carlo samples are reported.}
\centering
\begin{tabular}[t]{rlrrrr}
\toprule
\multicolumn{2}{c}{ } & \multicolumn{2}{c}{Bulk-ESS} & \multicolumn{2}{c}{Tail-ESS} \\
\cmidrule(l{3pt}r{3pt}){3-4} \cmidrule(l{3pt}r{3pt}){5-6}
Lag & Method & Mean & Median & Mean & Median\\
\midrule
1 & Gibbs & 794 & 810 & 1489 & 1433\\
1 & NUTS-Joint & 527 & 476 & 919 & 880\\
1 & NUTS-Kalman & 5852 & 5618 & 6837 & 6547\\
2 & Gibbs & 709 & 561 & 848 & 504\\
2 & NUTS-Joint & 589 & 590 & 1110 & 1108\\
\addlinespace
2 & NUTS-Kalman & 5714 & 5440 & 5640 & 5106\\
3 & Gibbs & 662 & 589 & 681 & 553\\
3 & NUTS-Joint & 656 & 669 & 1218 & 1250\\
3 & NUTS-Kalman & 5682 & 5711 & 5834 & 5384\\
4 & Gibbs & 791 & 631 & 899 & 658\\
\addlinespace
4 & NUTS-Joint & 718 & 722 & 1311 & 1273\\
4 & NUTS-Kalman & 6186 & 6120 & 6512 & 6245\\
\bottomrule
\end{tabular}
\end{table}

\begin{table}
\centering
\caption{\label{r4:tab:eg4-latent-arp-time-total}Wall clock time in minutes for the multilevel latent AR(p) model of Section 4.4, with $N=100$ and $T=50$, over the same Monte Carlo samples as Table \ref{r4:tab:eg4-latent-arp-ess}. Dividing the effective sample sizes in that table by the times here recovers the efficiencies plotted in Section 4.4 up to the difference between the ratio of medians and the median of ratios, which is what the figures show.}
\centering
\begin{tabular}[t]{rlrr}
\toprule
\multicolumn{2}{c}{ } & \multicolumn{2}{c}{Time (minutes)} \\
\cmidrule(l{3pt}r{3pt}){3-4}
Lag & Method & Mean & Median\\
\midrule
1 & Gibbs & 93.2 & 92.9\\
1 & NUTS-Joint & 35.6 & 36.2\\
1 & NUTS-Kalman & 11.1 & 10.2\\
2 & Gibbs & 158.1 & 152.3\\
2 & NUTS-Joint & 83.4 & 80.3\\
\addlinespace
2 & NUTS-Kalman & 47.5 & 44.7\\
3 & Gibbs & 220.8 & 220.5\\
3 & NUTS-Joint & 158.0 & 137.0\\
3 & NUTS-Kalman & 110.6 & 104.8\\
4 & Gibbs & 264.9 & 265.1\\
\addlinespace
4 & NUTS-Joint & 248.5 & 228.4\\
4 & NUTS-Kalman & 187.6 & 180.6\\
\bottomrule
\end{tabular}
\end{table}

\clearpage

\section{Cross-Classified VAR(1) Model}

Figure \ref{r4:fig:eg5-rhat} shows box plots of $\hat{R}$ values. Figure \ref{r4:fig:eg5-agreement} compares the three algorithms in the same way as in Section \ref{r4:sec:eg1}, with the parameters grouped into families. Note that NUTS-Joint disagrees with Gibbs and NUTS-Kalman for several parameters while the latter two algorithms agree well. This behavior is consistent with the poor convergence of NUTS-Joint in this example. Table \ref{r4:tab:eg5-cross-classified-ess} shows effective sample sizes and Table \ref{r4:tab:eg5-cross-classified-time-total} shows wall times.

\begin{figure}
    \centering
    \includegraphics[width=\linewidth]{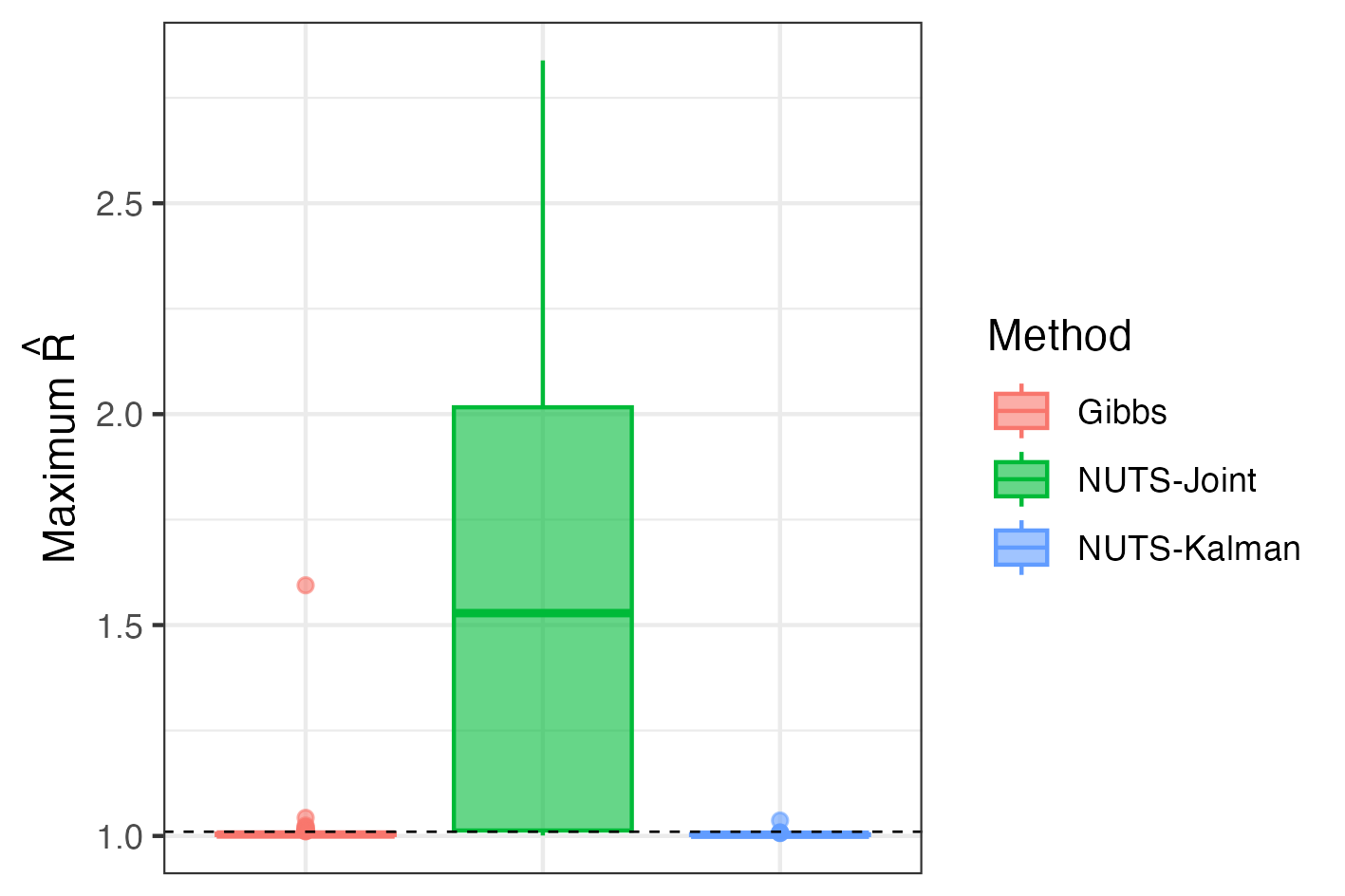}
    \caption{Box plots of $\hat{R}$ values for the three algorithms in the cross-classified VAR(1) model.}
    \label{r4:fig:eg5-rhat}
\end{figure}

\begin{landscape}
\begin{figure}[H]
    \centering
    \includegraphics[width=\linewidth, height=0.88\textheight, keepaspectratio]{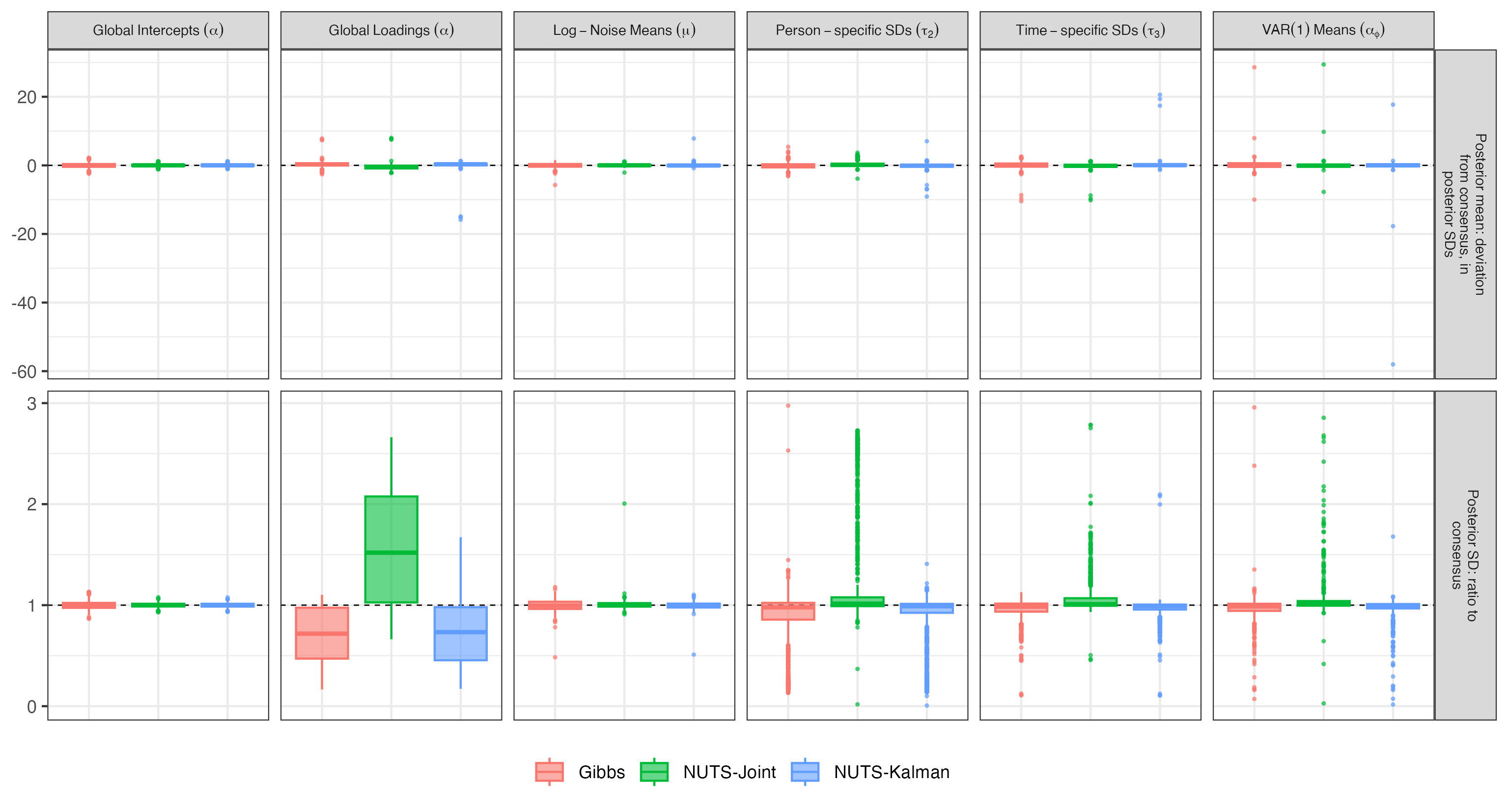}
    \caption{Agreement between the three algorithms for the cross-classified VAR(1) model. Each algorithm is compared with the average of the three within each Monte Carlo sample, so the comparison is paired. The upper row shows the deviation of the posterior mean from that average, in units of the posterior standard deviation; the lower row shows the ratio of the posterior standard deviation to the same average. Boxes centred on the dashed lines indicate that the algorithms target the same posterior, and their width reflects Monte Carlo error rather than disagreement. Parameters are grouped into families.}
    \label{r4:fig:eg5-agreement}
\end{figure}
\end{landscape}

\begin{table}
\centering
\caption{\label{r4:tab:eg5-cross-classified-ess}Effective sample size for the cross-classified VAR(1) model of Section 4.5, with $N=100$ and $T=150$. Within each Monte Carlo sample the minimum across the parameters of interest is taken, and the mean and median of those minima across Monte Carlo samples are reported.}
\centering
\begin{tabular}[t]{lrrrr}
\toprule
\multicolumn{1}{c}{ } & \multicolumn{2}{c}{Bulk-ESS} & \multicolumn{2}{c}{Tail-ESS} \\
\cmidrule(l{3pt}r{3pt}){2-3} \cmidrule(l{3pt}r{3pt}){4-5}
Method & Mean & Median & Mean & Median\\
\midrule
Gibbs & 1223 & 1302 & 2360 & 2523\\
NUTS-Joint & 336 & 7 & 737 & 26\\
NUTS-Kalman & 1269 & 1233 & 2719 & 2584\\
\bottomrule
\end{tabular}
\end{table}

\begin{table}
\centering
\caption{\label{r4:tab:eg5-cross-classified-time-total}Wall clock time in minutes for the cross-classified VAR(1) model of Section 4.5, with $N=100$ and $T=150$, over the same Monte Carlo samples as Table \ref{r4:tab:eg5-cross-classified-ess}. Dividing the effective sample sizes in that table by the times here recovers the efficiencies plotted in Section 4.5 up to the difference between the ratio of medians and the median of ratios, which is what the figures show.}
\centering
\begin{tabular}[t]{lrr}
\toprule
\multicolumn{1}{c}{ } & \multicolumn{2}{c}{Time (minutes)} \\
\cmidrule(l{3pt}r{3pt}){2-3}
Method & Mean & Median\\
\midrule
Gibbs & 3018.0 & 3045.0\\
NUTS-Joint & 853.4 & 832.4\\
NUTS-Kalman & 1077.9 & 1050.4\\
\bottomrule
\end{tabular}
\end{table}

\onlineresource{5}

This document contains supplementary information related to the application example in Section 5. Figure \ref{r5:fig:application-histograms} shows histograms of all responses for all items. Figure \ref{r5:fig:application-data} shows observations of the items ``happy'' and ``sad'' for three selected participants. Figures \ref{r5:fig:application-agreement-means} and \ref{r5:fig:application-agreement-sds} shows posterior means and standard deviation of Metropolis-within-Gibbs and NUTS-Joint plotted against the same quantities for NUTS-Kalman. 

\begin{figure}
    \centering
    \includegraphics[width=\linewidth]{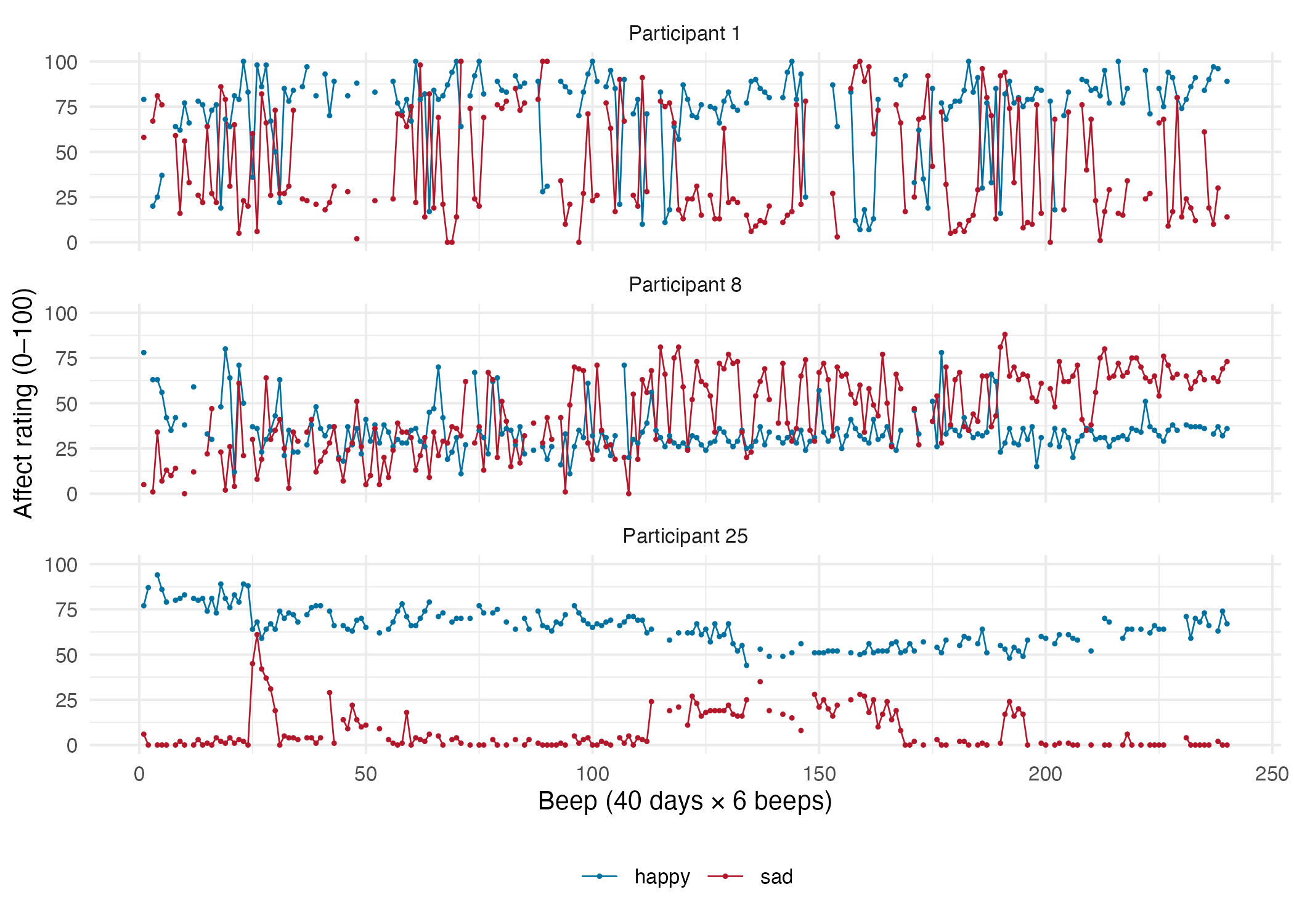}
    \caption{Data for participants 1, 25, and 8 for the items ``happy'' and ``sad''.}
    \label{r5:fig:application-data}
\end{figure}

\begin{figure}
    \centering
    \includegraphics[width=\linewidth]{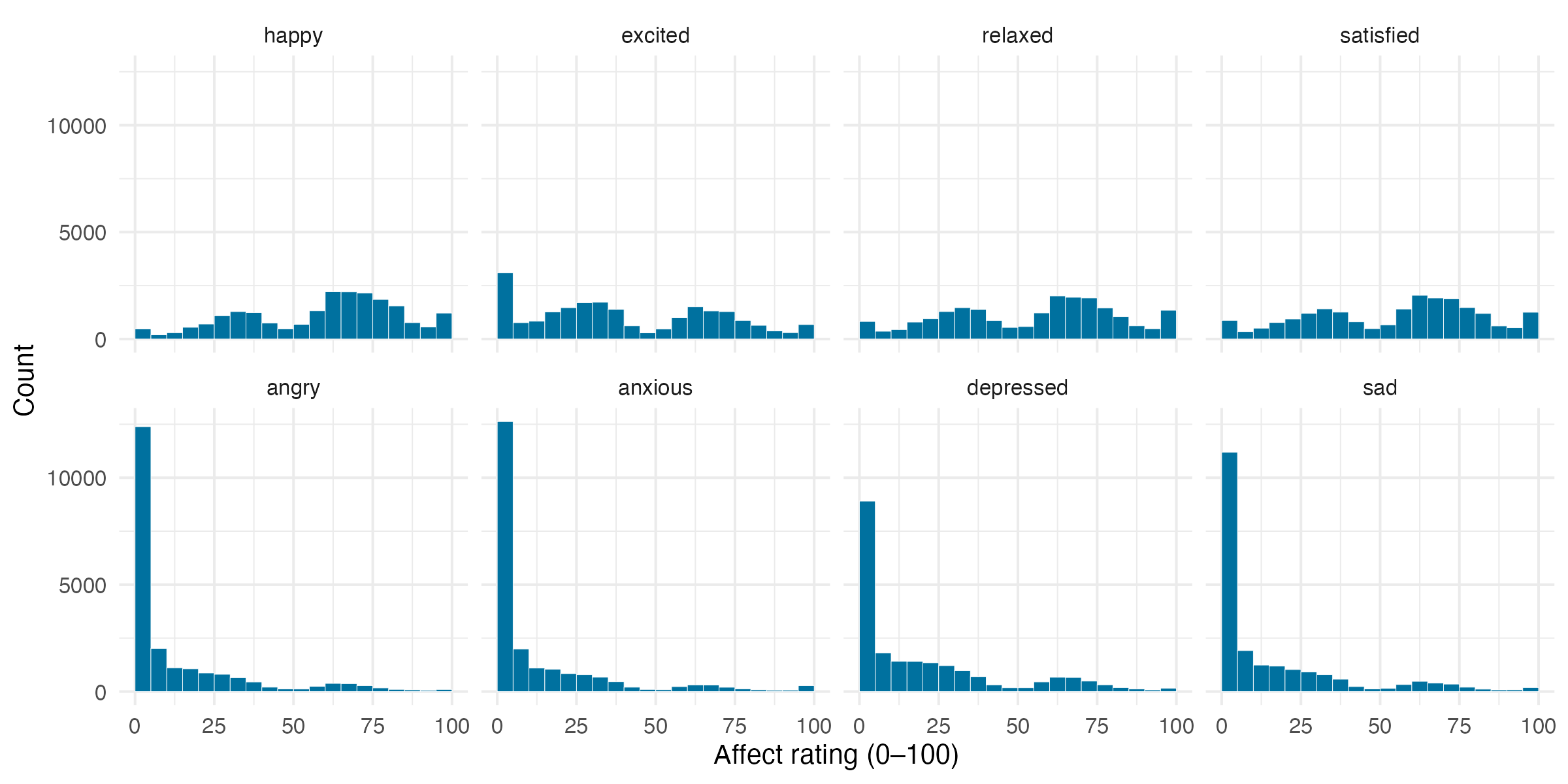}
    \caption{Histograms of responses for all items.}
    \label{r5:fig:application-histograms}
\end{figure}

\begin{figure}
    \centering
    \includegraphics[width=\linewidth]{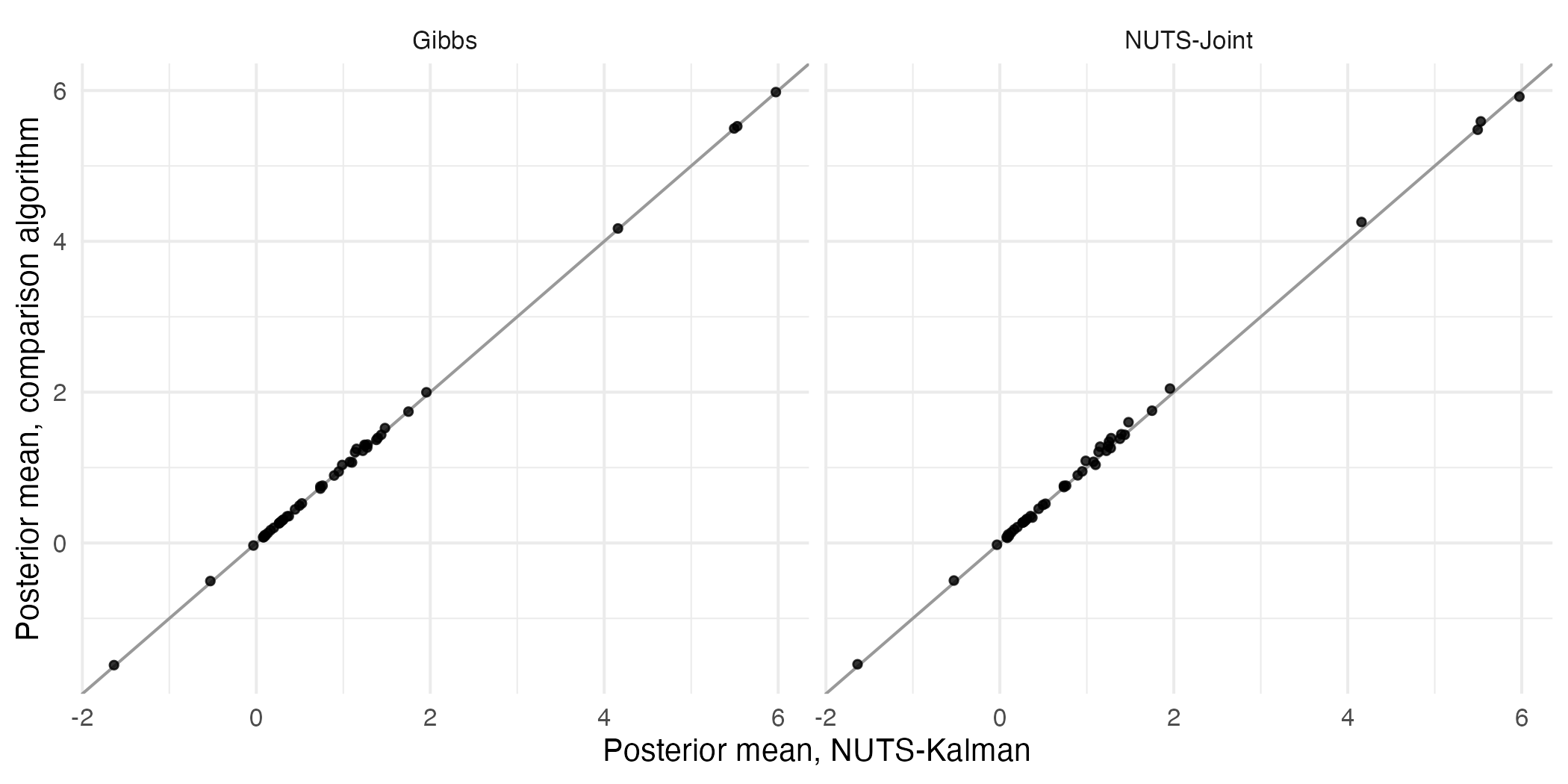}
    \caption{Posterior means of Gibbs and NUTS-Joint plotted against the posterior means of NUTS-Kalman, for all 45 parameters of interest.}
    \label{r5:fig:application-agreement-means}
\end{figure}

\begin{figure}
    \centering
    \includegraphics[width=\linewidth]{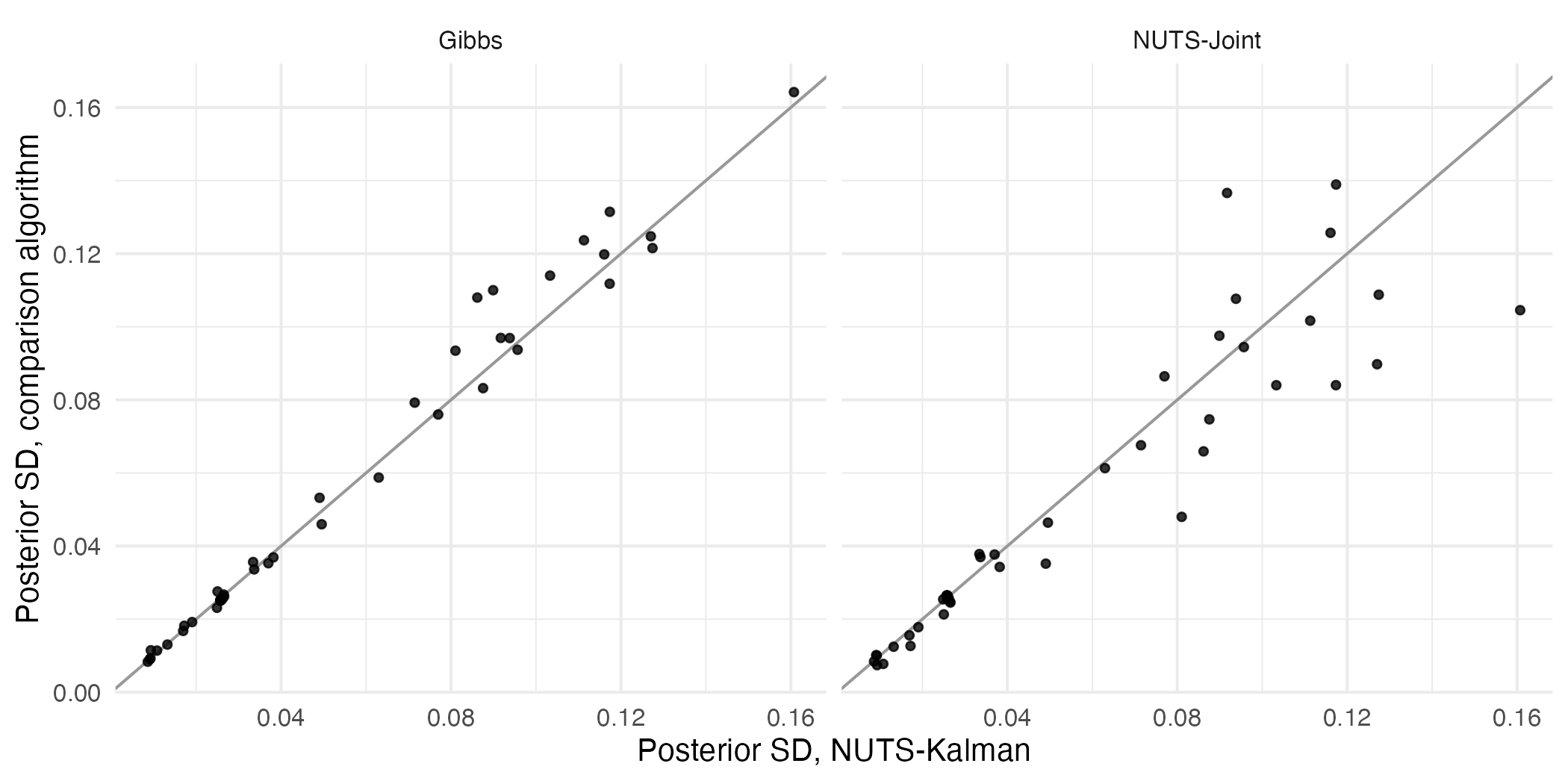}
    \caption{Posterior standard deviations of Gibbs and NUTS-Joint plotted against the posterior standard deviations of NUTS-Kalman, for all 45 parameters of interest.}
    \label{r5:fig:application-agreement-sds}
\end{figure}

\end{document}